\DeclareMathOperator{\diag}{diag} 
\DeclareMathOperator{\Ps}{S} 
\providecommand{\lemmaname}{Lemma}
\providecommand{\theoremname}{Theorem}
\providecommand{\propositionname}{Proposition}
\newtheorem{thm}{\protect\theoremname}
\newtheorem{prop}[thm]{\protect\propositionname}
\newtheorem{lem}[thm]{\protect\lemmaname}
\newcommand{\R}{\mathbb{R}}
\newcommand{\RP}{\mathbb{RP}}
\newcommand{\Z}{\mathbb{Z}}
\newcommand{\T}{\mathbb{T}}
\renewcommand{\epsilon}{\varepsilon}
\begin{document}
\title{Monodromy in Prolate Spheroidal Harmonics}
\author{Sean R.~Dawson, Holger R.~Dullin, Diana M.H.~Nguyen}
\address[A1,A2,A3]{School of Mathematics and Statistics,\\The University of Sydney, Australia}
\email{holger.dullin@sydney.edu.au}
\begin{abstract}
We show that spheroidal wave functions viewed as the essential part of 
the joint eigenfunction of two commuting operators of $L_2(S^2)$ has 
a defect in the joint spectrum that makes a global labelling of the 
joint eigenfunctions by  quantum numbers impossible. To our knowledge this
is the first explicit demonstration that quantum monodromy exists in 
a class of classically known special functions. 
Using an analogue of the Laplace-Runge-Lenz vector 
we show that the corresponding classical Liouville integrable system 
is symplectically equivalent to the C.~Neumann system.
To prove the existence of this defect we construct a classical integrable system that is the 
semi-classical limit of the quantum integrable system of commuting operators.
We show that this is a semi-toric system with a non-degenerate focus-focus point,
such that there is monodromy in the classical and the quantum system.
\end{abstract}

\maketitle

\section{Introduction}

Prolate spheroidal wave functions are important and well known special functions 
that appear when separating variables in problems that have the symmetry of prolate ellipsoids.
Classical references on spheroidal wave functions are \cite{WhitWatson65,Meixner54,Flammer57,Stratton59,arscott64}.
One inspiration for our work is the general theory of separation of variables developed in \cite{Miller77,boyer76}.
There spheroidal harmonics appear as the joint eigenfunctions of two commuting operators
on the Hilbert space $L_2(S^2)$. 
The two operators are constructed from separation of variables 
in spheroidal coordinates. In the spherical limit the spheroidal harmonics reduce to the well 
known spherical harmonics. In this paper we study the joint spectrum of these two commuting 
operators and show that the lattice of joint eigenvalues has a global defect.
Even though prolate spheroidal wave functions are very well studied special functions 
this observation about the joint spectrum seems to be new.

Another inspiration of our work is the study of  quantum and Hamiltonian monodromy
in integrable systems, specifically so-called semi-toric integrable systems.
They have two degrees of freedom and one simple global integral which is an $S^1$ action.
The global study of Liouville integrable systems was initiated in \cite{Duistermaat80}.
In a subsequent paper  \cite{CushDuist88} it was shown that classically and quantum 
mechanically the spherical pendulum has Hamiltonian and quantum monodromy, respectively.
It was realised that classically \cite{Matveev96,Zung97} and quantum mechanically \cite{VuNgoc99}
monodromy is caused by a so-called focus-focus equilibrium point of the classical system.
More recently a global classification of semi-toric integrable systems has been achieved
\cite{VuNgoc09} and in the present work we discuss an interesting example of a semi-toric system.
Currently the global classification of Liouville integrable systems does not allow for general 
types of singularities and hence the general theory of Kalnins and Miller on separation of variables
does not provide examples for the current theory of semi-toric systems, unless they have some rotational symmetry, 
and hence a global $S^1$ action; hence separation in spheroidal coordinates is singled out.

In two recent papers \cite{DW18} and \cite{CDEW19} we have used separation in spheroidal 
coordinates for the Kepler problem in space and the  harmonic oscillator in space, respectively, and shown that
both problems -- when considered in prolate spheroidal variables -- have Hamiltonian and quantum monodromy.
The present paper grew out of the realisation that an even simpler problem, namely the free particle, 
can be studied in a similar vain, and leads to similar results, namely monodromy in the joint spectrum.
As in the two previous works it is crucial for this approach that the system under consideration is 
superintegrable. In the Kepler problem and the harmonic oscillator superintegrability implies that the 
flow of the Hamiltonian is periodic with constant period, and hence it is possible to consider symplectic 
reduction with respect to this flow, viewed as an action of the group $S^1$.
The reduced system inherits two constants of motion which are the separation constants from 
the separation of variables. In the present example of the free particle the orbits of the 
Hamiltonian are not periodic orbits, but instead straight lines. Thus we need to consider reduction 
not with respect to a compact group $S^1$ but with respect to the non-compact group $\R^1$.
Even though there are no general theorems about reduction in this case it turns out that the reduction
can be performed nicely and elegantly using the invariants of the Hamiltonian flow. 
This leads to the classical analogue of the commuting operators described by 
Kalnins and Miller \cite{boyer76}, and we then show using singular reduction with respect to 
the global $S^1$ action (the angular momentum about the $z$-axis) that the system is 
semi-toric and has a non-degenerate focus-focus point and hence monodromy.

The third inspiration for our work is to connect the two threads described above:
separation of variables including the corresponding special functions on the one hand and the global 
theory of integrable systems on the other hand. 
Special functions related to (confluent) Fuchsian equation beyond the (confluent) hypergeometric equation 
are for example discussed in \cite{arscott64,slayvanov00}. The spheroidal wave equation is 
a particular case of the confluent Heun equation, see \cite{ronveaux95} and the references therein.
In our setting spheroidal harmonics are joint eigenfunctions of two commuting operators, and 
we show that a defect in the joint spectrum of these operators can be understood from the analysis of the corresponding Liouville integrable system. This is more than a WKB analysis of the solutions, 
but instead takes into account global information about the action variables of the 
integrable system. Nevertheless, we remark that the essence of the defect could have been observed 
by analysing well known asymptotic expansions \cite{AS} for the eigenvalues of the spheroidal wave 
equation; but to our knowledge such an analysis has not been presented before.

The solutions of the Helmholtz equation inside the prolate ellipsoid (aka the quantum
billiard in the prolate ellipsoid) has been studied in  \cite{WD02}, and monodromy 
was found in the joint spectrum. Since this is a system with three commuting operators
(as opposed to two in the current paper)
the radial equation has to be included and this leads to two coupled boundary 
value problems that were numerically solved in \cite{WD02}. In the present problem 
we only study the angular wave equation and find monodromy also in this simpler setting.

The plan of the paper is as follows. In section 2 we describe a reduction of the free particle
in $\R^3$ that leads to a reduced system with a Lie-Poisson structure of the algebra $e^*(3)$ of the 
Eulidean group of translations and rotations $E(3)$. To obtain an integrable system on 
the reduced space separation of variables in prolate spheroidal coordinates is employed 
in the next section. The centrepiece of the paper is the description of monodromy 
in the corresponding quantum system, which is obtained from separation of variables
of the Helmholtz equation in $\R^3$. We show that the joint spectrum of the two 
commuting operators has quantum monodromy. In particular this can be seen from 
the analysis of the classical asymptotic series for the eigenvalues in two distinct limits.
Then we show that the spheroidal harmonics integrable system is in fact symplectically 
equivalent to the integrable C.~Neumann system of a particle constrained to move
on a sphere with an added harmonic potential, which in this case has rotational symmetry.
The analysis of monodromy using well known asymptotic formulas is somewhat heuristic, 
and to prove monodromy we show that the underlying classically integrable spheroidal harmonics system 
(and hence the rotationally symmetric Neumann systems) is semi-toric with a non-degenerate 
focus-focus point corresponding to a doubly pinched torus.
%

%
%
%
%

\section{The Free Particle}

The free particle in $\R^3$ lives on the phase space $T^{*}\R^{3}\cong\R^{6}$
with global coordinates \linebreak{}
${\bm{Q}\coloneqq\left(x,y,z\right)^{T}}$ 
and $\bm{P}\coloneqq\left(p_{x},p_{y},p_{z}\right)^{T}$.
The Hamiltonian is simply $H=\frac{1}{2}\left(p_{x}^{2}+p_{y}^{2}+p_{z}^{2}\right)$
and the equations of motion are $\dot{\bm{Q}}=\bm{P}$ and $\dot{\bm{P}}=\bm{0}$.
The trajectories or geodesics are
\[
   \bm{Q}  =\bm{P}t+\bm{Q}_{0} ,  \quad  \bm{P}  =\bm{P}_{0}
\]
where $\bm{Q}_{0},\bm{P}_{0}$ are the intial position and momentum vectors, respectively,
and $t$ is time. In position space, the geodesics
are oriented lines through $\bm{Q}_{0}$ in the direction of $\bm{P}=\bm{P}_{0}$.
We can perform a symplectic reduction that identifies the oriented straight
lines of the flow of $H$ to points and so lowers the dimensionality
of the phase space from $6$ to $4$. 
We will see that this reduction also produces a compact configuration space,
which is the space of oriented lines through the origin, which is a sphere.
The conserved quantities are
the linear momenta $\bm{P}=\left(p_{x},p_{y},p_{z}\right)$ and the
angular momenta $\bm{L}\coloneqq\bm{Q}\times\bm{P}=\left(l_{x},l_{y},l_{z}\right)^T$,
since 
\[
\bm{L}\left(t\right)  =\left(\bm{Q}_{0}+t\bm{P}_{0}\right)\times\bm{P}_{0}\\
  =\bm{Q}_{0}\times\bm{P}_{0}\\
  =\text{constant}.
\]
By construction we have $\bm{P} \cdot \bm{L} = \bm{0}$.

The six invariants $\bm{P}$, $\bm{L}$ are closed under the standard Poisson bracket in $T^*\R^3$.
For example $\{ p_x, l_y \} = p_z$, and  $\{ l_x, l_y \} = l_z$, etc.
Assembling all such identities into a $6\times 6$ matrix $B$ gives 
\footnote{
For a vector $\bm{v}\in\R^{3}$ the corresponding antisymmetric
hat matrix $\hat{\bm{v}}$ is defined by
\[
   \hat{\bm{v}}\bm{u}=\bm{v}\times\bm{u} \quad  \forall\bm{u}\in\R^{3}.
\]
Later we also use hat to denote the quantum operator corresponding to a classical observable; 
from the context it should be clear which one is meant.
}
\begin{equation} 
B=-\begin{pmatrix}\bm{0} & \hat{\bm{P}}\\
\hat{\bm{P}} & \hat{\bm{L}}
\end{pmatrix}\,.
\label{eq:Bpoi}
\end{equation}
The matrix $B$ is the matrix of a Lie-Poisson structure on $\R^6$ with 
coordinates $\bm{P}$ and $\bm{L}$. This Lie-Poisson structure is the algebra 
$e^*(3)$ corresponding to the Euclidean group $E(3)$, the group of 
isometries of Euclidean space $\R^3$.
In particular the components of $\bm{P}$ are generators of translations,
while the components of $\bm{L}$ are generators of rotations. 
Given a Hamiltonian $G$ the time evolution of any function $f(\bm{P}, \bm{L})$ 
is given by $\dot f = \{ f, G\} = \nabla f^t B \nabla G$ and thus
\begin{equation}
\dot{\bm{P}} = -\bm{P} \times \nabla_L G, \quad
\dot{\bm{L}} =  -\bm{P} \times \nabla_P G   -\bm{L} \times \nabla_L G \,.
\end{equation}
The Poisson structure $B$ has rank $4$ with two Casimirs $C_1=\bm{P}\cdot\bm{P}=2E$
and $C_2=\bm{P}\cdot\bm{L}=0$, such that  $ B \nabla C_i = 0$.
The first Casimir $C_1=2E$ is often set to $1$ by normalisation 
of the speed of the particle,
whereas the second Casimir $C_2$
is an identity that states that $\bm{L}$ is orthogonal to $\bm{P}$.
In addition to these 6 basic invariants an analogue of the Laplac-Runge-Lenz (LRL) vector
can be defined and we will discuss this in more detail in section \ref{LRL}.

Fixing the two Casimirs defines the reduced phase space of $T^{*}S^{2}$ as a subset of $\R^6$
with coordinates $\bm{P}$ and $\bm{L}$.
Here the sphere is defined in momentum space, and  reflects the constancy of 
the kinetic energy of the particle, while the tangent space to the sphere is the set of 
planes with normal vectors $\bm{P}$ in $\bm{L}$ space, hence $C_2 = 0$.
Every point on $T^*S^2$ represents a line (geodesic) in the original $T^*\R^3$
with direction $\bm{P}$ (the point on the sphere) and angular momentum 
$\bm{L}$ (the vector in the tangent space of the sphere). 
Note that $\bm{L}$ is a normal vector to the plane that contains
the geodesic and the origin, and the length of $\bm{L}$ is the distance of the 
geodesic to the origin divided by the value of $C_1$. 
There are four oriented lines with direction $\pm\bm{P}$ 
in a given plane with normal vector $\pm\bm{L}$.
Changing the orientation of the geodesic amounts to changing the sign of $\bm{P}$ and $\bm{L}$.
Changing the sign of $\bm{L}$ but not of $\bm{P}$ represents a parallel line with the same orientation
in the same plane that is passing on the other side of the origin.
Lastly, changing the sign of $\bm{P}$ but not of $\bm{L}$ represents a parallel line with the opposite orientation
that is passing on the other side of the origin.
Later we will identify any two such geodesics, which will lead to $T^*\RP^2$
instead of $T^*S^2$.

Since we have reduced by the dynamics of $H$ there are no dynamics defined 
on $T^*S^2$ at the moment. In the next section we are  going to define an integrable system 
on $T^*S^2$ by separating the free particle in spheroidal coordinates. 
The separation constant and the angular momentum will induce an 
integrable system on $T^*S^2$. 


\section{The Spheroidal Harmonics Integrable System}
\label{sec:SHIS}


Prolate ellipsoids are formed by rotating an ellipse around its focal
axis. Let the foci of the resulting ellipsoid be located at $\left(0,0,\pm a\right)$.
Prolate spheroidal coordinates are then defined by 
\begin{equation}
\begin{aligned}x & =a\sqrt{\left(\xi^{2}-1\right)\left(1-\eta^{2}\right)}\cos(\phi),\\
y & =a\sqrt{\left(\xi^{2}-1\right)\left(1-\eta^{2}\right)}\sin(\phi),\\
z & =a\xi\eta \,,
\end{aligned}
\label{eq:To prolate xi eta}
\end{equation}
where $\eta\in\left[-1,1\right]$, $\xi\in\left[1,\infty\right)$
and $\phi\in\left[0,2\pi\right) = S^1$. Each point of $\R^{3}$
is associated with the intersection of the ellipsoid described by
(\ref{eq:To prolate xi eta}), a confocal hyperboloid and a plane.
These surfaces correspond to fixed $\xi$, $\eta$ and $\phi$ respectively.
The Hamiltonian of the free particle in prolate spheroidal coordinates
is 
\begin{equation}
H=\frac{1}{2a^2}\left( \frac{ (1-\eta^2)p_{\eta}^{2} + (\xi^{2}-1)p_\xi^2}{(\xi^2 - \eta^2)} + 
    \frac{p_{\phi}^{2}}{\left(1- \eta^{2}\right)\left(\xi^{2}-1\right)}\right)\label{eq:Ham prolate}
\end{equation}
where  $p_{\eta}$, $p_{\xi}$ and $p_{\phi}$ are the momenta conjugate to $\eta$, $\xi$, $\phi$, respectively. Clearly $p_\phi$ is a constant angular momentum, since $H$ is independent of $\phi$.
To separate the variables observe that
\[
    0 = ( H - E) 2 a^2 (\xi^2 - \eta^2) = G(\eta, p_\eta) - G(\xi, p_\xi), 
\]
where
\begin{equation} \label{eqn:Gsep}
    G(q, p) =  ( 1 - q^2) (p^2 - 2 a^2 E)  + \frac{p_\phi^2}{1 - q^2}
\end{equation}
such that $G(\xi, p_\xi) = g = G(\eta, p_\eta)$ where $g$ is the separation constant. 
Substituting $E = H$ into $G$ gives 
\[
    G = \frac{ p_\eta^2 - p_\xi^2 }{\xi^2 - \eta^2} (1-\eta^2)(\xi^2 - 1) +
    p_\phi^2 \frac{\xi^2 - \eta^2}{(\xi^2 - 1)(1-\eta^2) }  \,.
\]
To convert this to the original variables observe that
\[
   |\bm{L}|^2 = \frac{(\xi^2 -1)(1 - \eta^2)}{(\xi^2 - \eta^2)^2} ( p_\xi \eta - p_\eta \xi)^2 + p_\phi^2 \left(  \frac{1 + \xi^2 - \eta^2}{(\xi^2 - 1)(1 - \eta^2)}    \right)
\]
and 
\[
   a^2( p_x^2 + p_y^2) = \frac{(\xi^2 -1)(1 - \eta^2)}{(\xi^2 - \eta^2)^2} ( p_\xi \xi - p_\eta \eta)^2 + p_\phi^2\frac{1}{(\xi^2 - 1)(1 - \eta^2)}
\]
such that 
\begin{equation} \label{eqn:Gorg}
    G = | \bm{L}|^2 - a^2 ( p_x^2 + p_y^2) \,.
\end{equation}
This is a function on $T^*S^2$, as is $p_\phi = L_z$, and it is easy to check that they 
have vanishing Poisson bracket. This can be computed in the original variables
$(\bm{Q}, \bm{P})$ with respect to the canonical bracket on $T^*\R^3$,
or in the variables  $(\bm{P}, \bm{L})$ after reduction to $T^*S^2$ 
with respect to the induced bracket $B$. In both cases $\{ |\bm{L}|^2, L_z \} = 0$
and also $\{ p_x^2 + p_y^2, L_z \} = 0$ and hence $\{ G, L_z \} = 0$.
We write $L_z$ for the function that maps a point $(\bm P, \bm L)$ to the coordinate $l_z$.
Thus we arrive at the main classical object of this paper: 
\begin{thm}[Spheroidal harmonics integrable system] \label{thm:shis}
Consider $\R^6$ with coordinates $( \bm{P}, \bm{L} )$ and Lie-Poisson structure of $e^*(3)$ with 
Poisson tensor $B$ given by \eqref{eq:Bpoi} and Casimirs $\bm{P} \cdot \bm{P} = 2E$ and $\bm{P} \cdot \bm{L} = 0$. 
The functions $(L_z, G)=( l_z,  l_x^2+l_y^2+l_z^2 - a^2( p_x^2 +p_y^2))$ define a Liouville integrable system on $T^*S^2$.
\end{thm}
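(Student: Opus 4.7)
The plan is to verify the three ingredients of Liouville integrability on the four-dimensional symplectic leaf $T^*S^2 = \{\bm{P}\cdot\bm{P} = 2E,\ \bm{P}\cdot\bm{L} = 0\} \subset \R^6$ of the Lie-Poisson structure $B$: that $L_z$ and $G$ descend to this leaf (automatic, since they are polynomials in the coordinates $\bm{P}, \bm{L}$), that they Poisson-commute, and that they are functionally independent. Because $B$ has rank $4$ with the two Casimirs absorbed into the definition of the leaf, the leaf is symplectic of dimension $4$, so exactly two independent commuting integrals are required.

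For involution I would compute $\{L_z, G\}$ in two complementary ways. Geometrically, $L_z = l_z$ generates the diagonal $S^1$-action of rotations about the $z$-axis: its Hamiltonian vector field, read off as a column of $B$, is $-p_y\partial_{p_x} + p_x\partial_{p_y} - l_y\partial_{l_x} + l_x\partial_{l_y}$, which rotates $\bm{P}$ and $\bm{L}$ simultaneously about the $z$-axis. Both summands of $G$, namely $|\bm{L}|^2 = l_x^2 + l_y^2 + l_z^2$ and $a^2(p_x^2 + p_y^2)$, are manifestly invariant under this action, hence $\{L_z, G\} = 0$. As a sanity check I would verify this directly from $B$ using $\{l_i, l_j\} = \epsilon_{ijk} l_k$ and $\{p_i, l_j\} = \epsilon_{ijk} p_k$, where the $p_x p_y$ and $l_x l_y$ cross terms cancel pairwise.

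For functional independence it suffices to exhibit one point of $T^*S^2$ at which the four 1-forms $dC_1, dC_2, dL_z, dG$ are linearly independent in $T^*\R^6$. A clean choice is $\bm{P}_0 = (\sqrt{2E}, 0, 0)$ together with $\bm{L}_0 = (0, l_y, l_z)$ with $l_y, l_z, E$ all nonzero; the Casimir constraints are satisfied, and a short $4\times 4$ minor computation of the coefficient matrix in the basis $dp_i, dl_i$ gives a non-vanishing determinant proportional to $\sqrt{2E}\,l_y l_z$. Lower semi-continuity of the rank then yields functional independence on an open dense subset of $T^*S^2$, so the rank condition for Liouville integrability holds on a full-measure set.

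The statement is essentially a packaging of observations collected immediately before it, so I do not anticipate any genuine obstacle; the substance and novelty of the paper lie in the subsequent demonstration that this integrable system is semi-toric with a non-degenerate focus-focus point and hence carries classical and quantum monodromy.
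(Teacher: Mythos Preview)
Your proposal is correct and mirrors the paper's argument for the involution step: the paper likewise splits $G$ and observes $\{|\bm{L}|^2, L_z\} = 0$ and $\{p_x^2 + p_y^2, L_z\} = 0$ separately (your rotation-invariance phrasing is just the geometric reading of the same computation). You are in fact more thorough than the paper, which states the theorem immediately after the bracket check and never explicitly verifies functional independence; your rank argument at a generic point fills that gap cleanly.
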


We call this integrable system the (prolate) \emph{spheroidal harmonics integrable system}, 
since it arrises from separation of variables in spheroidal coordinates.
It is the classical analogue of the compact part of the spheroidal wave equation, 
whose solutions are known as spheroidal harmonics.
\footnote{The term spheroidal harmonics is used in  different ways in the literature. The strict use of ``harmonics'' refers to solutions of the Laplace equation. When considering the Laplacian in $\R^3$ separated in spheroidal coordinates the eigenfunctions are products of associated Legendre functions, however, one of them is evaluated outside the usual range $|z| < 1$, and is thus sometimes referred to as a spheroidal harmonic \cite[14.3]{DLMF}. Our use of the term spheroidal harmonic is different and serves as a ``reminder of the kinship with the spherical harmonics" \cite[17.4]{Press88}.
}
%
In this work we are only interested in the prolate spheroidal harmonics.
Formally the oblate case can be found by flipping the sign of $a^2$,
and this system is also Liouville integrable.
However, the dynamics in the oblate case are quite different and in particular does not exhibit monodromy,
so we do not consider this case in the present work.
The values of $(L_z, G)$ will be denoted by $(m, g)$.
For any separating coordinate system of $\R^3$ a similar construction 
can be carried out and will lead to an integrable system on $T^*S^2$
corresponding to that separating coordinate system. 
It is an interesting research project to study these integrable Hamiltonian systems
alongside the corresponding special functions.
In this paper we restrict our attention to spheroidal coordinates,
because, as we will show, it leads to a semi-toric system that exhibits Hamiltonian monodromy. 
The related quantum system has quantum monodromy.
In other words, the eigenvalues of the spheroidal wave equation exhibit monodromy.
Before we describe the spheroidal wave equation and its quantum monodromy in the next section, 
here we are going to describe some aspects of the dynamics of the spheroidal harmonics integrable system.
A detailed analysis including the proof that it is a semi-toric system with Hamiltonian monodromy 
is postponed to a later section.

The vector field that is generated by the Hamiltonian $L_z$ is given by $ B \nabla L_z $
which gives
\begin{equation} \label{eq:S1flow}
\dot{\bm{P}} = -\bm{P} \times \bm{e}_z, \quad 
\dot{\bm{L}} = -\bm{L} \times \bm{e}_z \,.
\end{equation}
The solution is a rotation of the first two components of $\bm{P}$ and $\bm{L}$ 
by the same amount; the third components are unchanged.
Thus the point $\bm{P}$ on $S^2$ is rotated about the $p_z$-axis,
while $\bm{L}$ in the  tangent space is rotated in the same way.
The north- and the south-pole of $S^2$ are fixed by this rotation, 
but then $\bm{L} = (l_x, l_y, 0)^T$ is not fixed, unless it vanishes.
A vector $\bm{L} = (0, 0, l_z)^T$ that is in the tangent space of a point 
$\bm{P} = (\cos \phi, \sin\phi, 0)$ on the equator of the sphere is fixed by this 
rotation, but the corresponding $\bm{P}$ is not.
This shows that the only fixed points of this $S^1$ action are
$\bm{P} = (0,0,\pm1)^T$, $\bm{L} = (0,0,0)^T$.
They correspond to geodesics along the $z$-axis,
i.e., lines through the two foci of the ellipsoid of the spheroidal coordinates.

The vector field that is generated by the Hamiltonian $G$ is 
\begin{equation} \label{eqn:GVF}
\dot{\bm{P}} = -2\bm{P} \times \bm{L}, \quad 
\dot{\bm{L}} = a^2 \bm{P} \times ( \bm{P} - \bm{e}_z p_z)  = -a^2 p_z \bm{P} \times \bm{e}_z\,.
\end{equation}
Clearly $\bm{L} = \bm{0}$ and $\bm{P} = \bm{e}_z p_z$ is an equilibrium point.
Moreover, for $\bm{P} = (p_x, p_y, 0)^T$ and $\bm{L} = (0,0,l_z)^T$ 
we have $\bm{L} = const$, $p_z = 0 = const$ and $\dot p_x =-2 l_z p_y$ and $\dot p_y = 2 l_z p_x$,
a periodic solution along the equator with orientation depending on the sign of $l_z$.
For $l_z = 0$ the equator is a circle of non-isolated equilibrium points of the flow of $G$.

In the limiting case $a \to 0$ the integral $G$ becomes the angular momentum squared.
In this limit the equations of motion can be solved explicitly in terms of trigonometric functions.
Since $\dot{\bm{L}} = 0$  the equation for $\dot{\bm{P}}$ is that of 
a rotation about the fixed axis $\bm{L}$. The period of these rotation 
is given by $\sqrt{ |\bm{L}|}$.
If instead of $G=|\bm{L}|^2$ we consider $|\bm{L}|$ as a Hamiltonian then the period is $2\pi$, 
and hence $|\bm{L}|$ is an action variable.
To see this just integrate
$\dot{\bm{P}} =- \bm{P} \times \bm{L}  / |\bm{L}|$ for constant non-zero $\bm{L}$.
The solution is a rotation of $\bm{P}$ about the fixed normal vector in the 
direction of $\bm{L}$. The only problem with the flow of $|\bm{L}|$ is 
that the vector field is not defined when $\bm{L} = \bm{0}$ 
and hence the flow does not define a global $S^1$ action. 
When instead the flow of $G = |\bm{L}|^2$ (for $a=0$) 
is considered the vector field simply vanishes when $\bm{L} = \bm{0}$, 
and so the whole sphere $|\bm{P}|=2E$ is a sphere of fixed points.

The spheroidal harmonics integrable system has a number of discrete symmetries. 
We restrict our attention to discrete symmetries that are canonical transformations. 
\begin{prop}[Discrete symmetries]
The group of linear discrete canonical symmetries of the spherical harmonics integrable 
system is $\Z_2 \times \Z_2$. For $s_i = \pm 1$, $i=1,2,3$ define 
$S = \diag(s_1, s_2, s_3)$ and  
$\tilde S = \diag( s_2 s_3, -s_1 s_3, s_1 s_2)$ so that a linear map of $(\bm{P}, \bm{L})$ is given by 
$(S \bm{P}, \tilde S \bm{L})$. The non-trivial elements of $\Z_2 \times \Z_2$ are obtained from
$S_1 = \diag( +, +, -)$, $S_2 = \diag( -, -, - )$ and $S_3 = \diag( -,-,+ )$.
\end{prop}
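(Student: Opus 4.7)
The strategy is to impose the defining conditions---being a linear Poisson map of $(\R^6,B)$ preserving the Hamiltonians $L_z$ and $G$---in sequence and to read off the resulting finite group.

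I would begin by restricting to linear maps of the form $(\bm P,\bm L)\mapsto(S\bm P,\tilde S\bm L)$ with diagonal $S=\diag(s_1,s_2,s_3)$ and $\tilde S=\diag(\tilde s_1,\tilde s_2,\tilde s_3)$. This reduction is justified by noting that any linear canonical symmetry must preserve the $L_z$-invariant quadratic forms $|\bm P|^2$, $|\bm L|^2$ and $p_x^2+p_y^2$, and that the continuous $S^1$-symmetry generated by $L_z$ may be used to simultaneously diagonalise the $(x,y)$-components of $\bm P$ and $\bm L$, leaving only a sign ambiguity in each coordinate.

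Next, I would impose preservation of the $e^*(3)$ brackets $\{l_i,l_j\}=\epsilon_{ijk}l_k$ and $\{l_i,p_j\}=\epsilon_{ijk}p_k$. The $\{l,l\}$-relations force each $\tilde s_i=\pm1$ with $\tilde s_1\tilde s_2\tilde s_3=+1$, so $\tilde S$ represents the identity or a rotation by $\pi$ about a coordinate axis. The mixed $\{l,p\}$-relations then force $s_i=\pm1$ together with $\tilde s_i s_j=s_k$ on cyclic triples $(i,j,k)$, which is equivalent to $\tilde S=(\det S)\,S$; this determines $\tilde S$ uniquely from $(s_1,s_2,s_3)$ and yields $2^3=8$ candidate diagonal Poisson automorphisms. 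Invariance of $L_z=l_z$ is then the single condition $(\det S)\,s_3=+1$, i.e.\ $s_1s_2=+1$, which cuts the eight candidates down to the four sign patterns $(+,+,+),\,(+,+,-),\,(-,-,+),\,(-,-,-)$: precisely the identity and $S_1,S_2,S_3$. Invariance of $G=|\bm L|^2-a^2(p_x^2+p_y^2)$ is automatic because $G$ is quadratic and all entries of $S,\tilde S$ are $\pm1$, and the Casimirs $|\bm P|^2$ and $\bm P\cdot\bm L$ are preserved (the latter up to a sign, which is irrelevant on the level set $T^*S^2=\{\bm P\cdot\bm L=0\}$). Finally, direct multiplication of the four diagonal matrices gives $S_i^2=I$ and $S_iS_j=S_k$ for $\{i,j,k\}=\{1,2,3\}$, identifying the group as the Klein four-group $\Z_2\times\Z_2$.

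The main obstacle is the opening reduction to diagonal form. Ruling out genuinely non-diagonal linear canonical symmetries requires using the invariants $L_z$, $G$ and both Casimirs simultaneously, together with composition by continuous rotations about the $z$-axis; the remainder of the argument is then an elementary enumeration of sign patterns.
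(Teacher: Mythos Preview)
Your argument is correct but follows a different route from the paper. The paper works upstairs on $T^*\R^3$: it takes a diagonal orthogonal map $\bm Q\mapsto S\bm Q$, lifts it canonically to $(\bm Q,\bm P)\mapsto(S\bm Q,S\bm P)$, and then \emph{computes} the induced action on $\bm L=\bm Q\times\bm P$ directly from the cross product, obtaining $\tilde S=\diag(s_2s_3,s_1s_3,s_1s_2)$. Invariance of $L_z$ then gives $s_1s_2=+1$, and invariance of $G$ is automatic since $G$ is quadratic. You instead work intrinsically on the Lie--Poisson space $e^*(3)$: starting from an a priori independent diagonal $\tilde S$, you impose preservation of the brackets $\{l_i,l_j\}$ and $\{l_i,p_j\}$ to force $\tilde S=(\det S)\,S$, which is the same formula, and then finish identically. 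Your approach has the merit of not needing the unreduced coordinates $\bm Q$ and of making explicit that $\tilde S$ is determined by the requirement of being a Poisson map; the paper's approach is shorter because the cross-product identity does that work in one line. Your attempted justification for restricting to diagonal maps goes beyond what the paper actually proves---the paper simply starts from diagonal sign flips on $\bm Q$ without ruling out more general linear symmetries---so on that point you are, if anything, more careful.
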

\begin{proof}
Since $S^{-1} = S^t$ the map $\bm{Q} \mapsto S \bm{Q}$ extends to a symplectic map as $(\bm{Q}, \bm{P}) \mapsto ( S \bm{Q}, S \bm{P})$.
The induced sign flip on the angular momentum is  $\bm{L} \mapsto \tilde S \bm{L}$ where $\hat S = \diag( s_2 s_3, -s_1 s_3, s_1 s_2)$ is found by computing the cross product $\bm{Q} \times \bm{P}$.
The integral $G$ is invariant under all such sign flips, since it is quadratic in components of $\bm{P}$ and $\bm{L}$.
In addition $L_z = x p_y - y p_x$ should be invariant under the discrete symmetry which requires $s_1s_2 = +1$.
Thus the discrete symmetries of the spheroidal harmonics system are 
$S_1 = \diag(++-)$, $S_2=\diag(---)$ and $S_3 = \diag(--+)$ together with the corresponding 
induced map $\tilde S_i$ on $\bm{L}$. Together with the identity they form the group $\Z^2 \times \Z^2$.
\end{proof}

In prolate spheroidal coordinates \eqref{eq:To prolate xi eta} the symmetry operations are realised as follows.
Changing the sign of $\eta$ changes the sign of $z$ but leaves $x$ and $y$ unchanged, 
so that $\eta \mapsto -\eta$ corresponds to the symmetry $S_1$.
Adding $\pi$ to $\phi$ changes the signs of $x$ and $y$ while $z$ is unchanged,
so that $\phi \mapsto \phi + \pi$ corresponds to the symmetry $S_3$.
The composition of both gives $S_2$.

\section{Quantum monodromy in prolate spheroidal harmonics}

Separation of variables of the Laplace equation or the Helmholtz equation in $\R^3$
in spheroidal coordinates leads to spheroidal harmonics. 
The classical reference on spheroidal harmonics is \cite{Stratton59,Meixner54,Flammer57},
and a few more modern ones are \cite{Press88,falloon03,Volkmer03,DLMF,Zhao17}.
We would like to mention that prolate spheroidal wave functions have 
found applications as band-limited functions \cite{slepian83}, 
also see \cite{xiao01,boyd04} and the references therein.
Here we will derive the spheroidal wave equation in the traditional way from the Schr\"odinger equation 
of the free particle separated in spheroidal coordinates. 
This will allow us to connect to the spheroidal harmonics integrable system
by way of semi-classical quantisation, a connection we need later to prove 
the existence of quantum monodromy.

The stationary Schr\"odinger equation for the free particle is $-\tfrac12 \hbar^2 \Delta \Psi = E \Psi$,
or we can think of it as Helmholtz's wave equation $\Delta \Psi + k^2 \psi = 0$.
Writing the Laplacian $\Delta$ in spheroidal coordinates \eqref{eq:To prolate xi eta} gives
\begin{equation}
\frac{1}{\left(\xi^{2}-\eta^{2}\right)}\left(\frac{\partial}{\partial\xi}\left((\xi^{2}-1)\frac{\partial\Psi}{\partial\xi}\right)+\frac{\partial}{\partial\eta}\left((1-\eta^{2})\frac{\partial\Psi}{\partial\eta}\right)\right)+\frac{1}{\left(1-\eta^{2}\right)\left(\xi^{2}-1\right)} \, \frac{\partial^{2}\Psi}{\partial\phi^{2}}=-\frac{2Ea^2}{\hbar^{2}}\Psi \,.
\label{eq:schro}
\end{equation}
Separation into product form $\Psi(\eta,\xi,\phi)=\psi_{\eta}(\eta)\psi_{\xi}(\xi)\psi_{\phi}(\phi)$
yields the simple equation 
\begin{equation}
\frac{\partial^{2}\psi_{\phi}}{\partial\phi^{2}}+ m^2 \psi_{\phi}=0
\label{eq:phi sep}
\end{equation}
and the (prolate angular) spheroidal wave equation
\begin{equation}
\hat G \psi_\eta = g \psi_\eta, \quad 
\hat G = -\frac{d}{d\eta}\left( ( 1 - \eta^2) \frac{d}{d\eta}\right) + \frac{m^2}{1-\eta^2} - \gamma^2 (1 - \eta^2), \quad
\gamma^2 = \frac{2Ea^2}{\hbar^2}
\label{eq:sep}
\end{equation}
with separation constants $m$ and $g$. 
The third separated equation is found by replacing $\eta$ by $\xi$.
The difference is in the domain $\eta \in [-1, 1]$ while $\xi \ge 1$.

For general values of $\gamma$ the equation \eqref{eq:sep} is a singular Sturm-Liouville equation.
It can be transformed into an equation with periodic (but still singular) coefficients, see, e.g.~\cite{arscott64}.
Viewed as a polynomial differential equation in the complex plane it can be transformed into the confluent Heun 
equation \cite{ronveaux95}.
The general Heun equation is the second order ordinary differential equation  of 
Fuchsian type with four regular singular points.
Letting two of the regular singular points coalesce leads to an irregular singular point.
The result is the confluent Heun equation.
%

The quantum integrable system (QIS) on the reduced space 
consists of two self-adjoint operators $\hat L_z$ and $\hat G$
acting on functions on the sphere $S^2$.
The eigenvalues $g_l^m$ of $G$ are those values of $g$ in \eqref{eq:sep} for which 
the solution of the spheroidal wave equation for $\eta$ leads to a smooth 
function $\psi_\eta \psi_\phi$ on the sphere.
In our treatment we ignore the equation for $\xi$ because it has 
no analogue in the classical spheroidal harmonics system.

The solution $\psi_\phi$ to the angular equation is proportional to linear combinations of 
$e^{\pm i m \phi}$ and  $2\pi$-periodicity in $\phi$ implies $e^{\pm i m 2 \pi} = 1$, 
and hence $m$ must be an integer. 
This integer $m$ is the quantum number for the $z$-component of the angular momentum $l_z = m \hbar$.

When $g$ is an eigenvalue $g_l^m$ of the singular Sturm-Liouville problem \eqref{eq:sep} 
the corresponding eigenfunction bounded on $(-1,1)$ 
(prolate angular) spheroidal wave function of the first kind, 
which we denote by $\Ps_l^m(\gamma,\eta )$. 
\footnote{The notation for the angular spheroidal wave function varies, see \cite{AS} for a table comparing various common notatinos. Our notation loosely follows \cite{AS}, but we prefer to write $l$ instead of $n$ as in \cite{Stratton59,Morse53}, and we write the indices ${}^m_l$ as in the associated Legendre polynomials.}
In the limit $\gamma\to 0$ these solutions degenerate to the associated Legendre polynomials 
 of the first kind $P_l^m(\eta)$.
For $\gamma\not = 0$ the spheroidal wave functions can be written as a (generally infinite) series of associated Legendre polynomials
\begin{equation}
      \Ps_l^m\left(\gamma,\eta\right)={\sum_{k=0,1}^{\infty}}{}'\,\, d_{k}^{lm}(\gamma)P_{m+k}^{m}\left(\eta\right)
\label{eq:Spheroidal in Legendre}
\end{equation}
where $d_{k}^{lm}$ are the expansion coefficients
and the prime on the summation indicates to sum over odd $k$ if $l-m$ is odd and over even $k$ if $l-m$ is even. 
Expressions for the resulting three term recursion relation that determines $d_k^{lm}$ can be found, e.g., 
in \cite[21.7.3]{AS}.

The product of the eigenfunctions of \eqref{eq:phi sep} and \eqref{eq:sep} gives the spheroidal harmonics 
\begin{equation}
      Z_l^m(\gamma,\eta, \phi )
      \coloneqq 
          \frac{1}{\sqrt{2\pi}}\Ps_l^m(\gamma,\eta)e^{im\phi}  =
          \sum_k d_k^{lm} Y_{m+k}^m(\eta, \phi) 
\end{equation}
expressed as a series of spherical harmonics $Y_l^m$.
In the limit $\gamma \to 0$ we have $Z_l^m = Y_l^m$. 
Normalisation on the sphere requires
\[
    \int_{0}^{2\pi}\int_{0}^{\pi}Z_{l}^{m}\left(Z_{l}^{m}\right)^{*}\sin\theta d\theta d\phi=1.
\]
and the $d_k^{lm}$ are chosen such that this holds, see, e.g., \cite{AS}.

%
We now consider the joint spectrum of the QIS $(\hat L_z, \hat G)$.
For periodicity in $\phi$ we need to require that the eigenvalue of $\hat L_z$ is $\hbar$ times an integer $m$. 
The eigenvalues $g_l^m$ of $\hat G$ can in general only be computed numerically. 
The Mathematica \cite{Mathematica12} function $\mathtt{SpheroidalEigenvalue}$\texttt{{[}$l,m,\gamma${]}}
gives the spheroidal eigenvalue $g_l^m$ of \eqref{eq:sep}.
From general results in microlocal analysis we know that in the semiclassical limit $\hbar \to 0$ 
the joint spectrum $(\hbar m,  \hbar^2 g_l^m)$ is locally a lattice.
%
For a fixed spheroidal coordinate system, i.e., a fixed value of $a$ decreasing $\hbar$ makes
this local lattice finer and finer, see Fig.~\ref{fig:evhbar}.

\begin{figure}
	\begin{centering}
		\includegraphics[width=4.5cm]{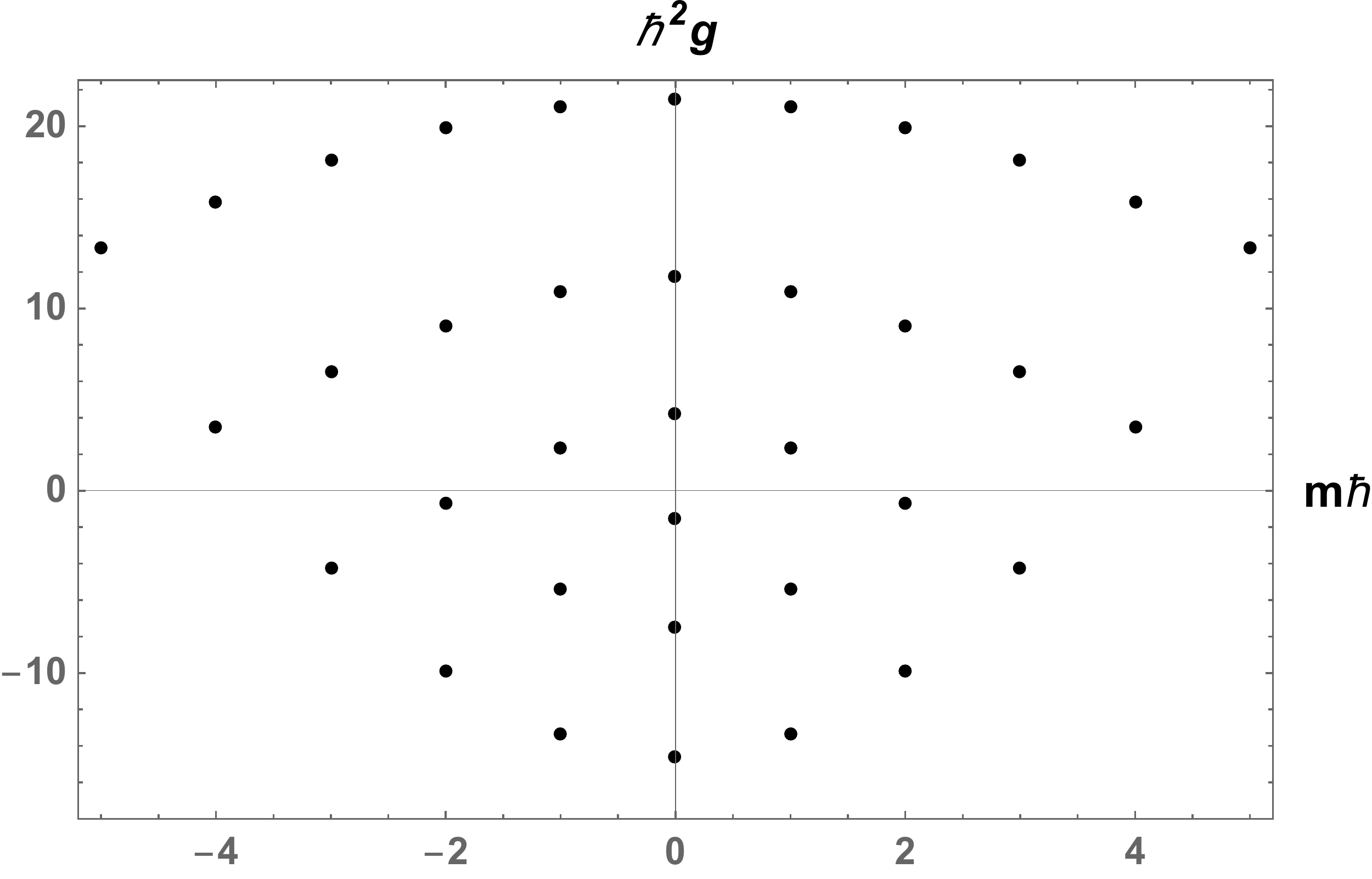}
		\includegraphics[width=4.5cm]{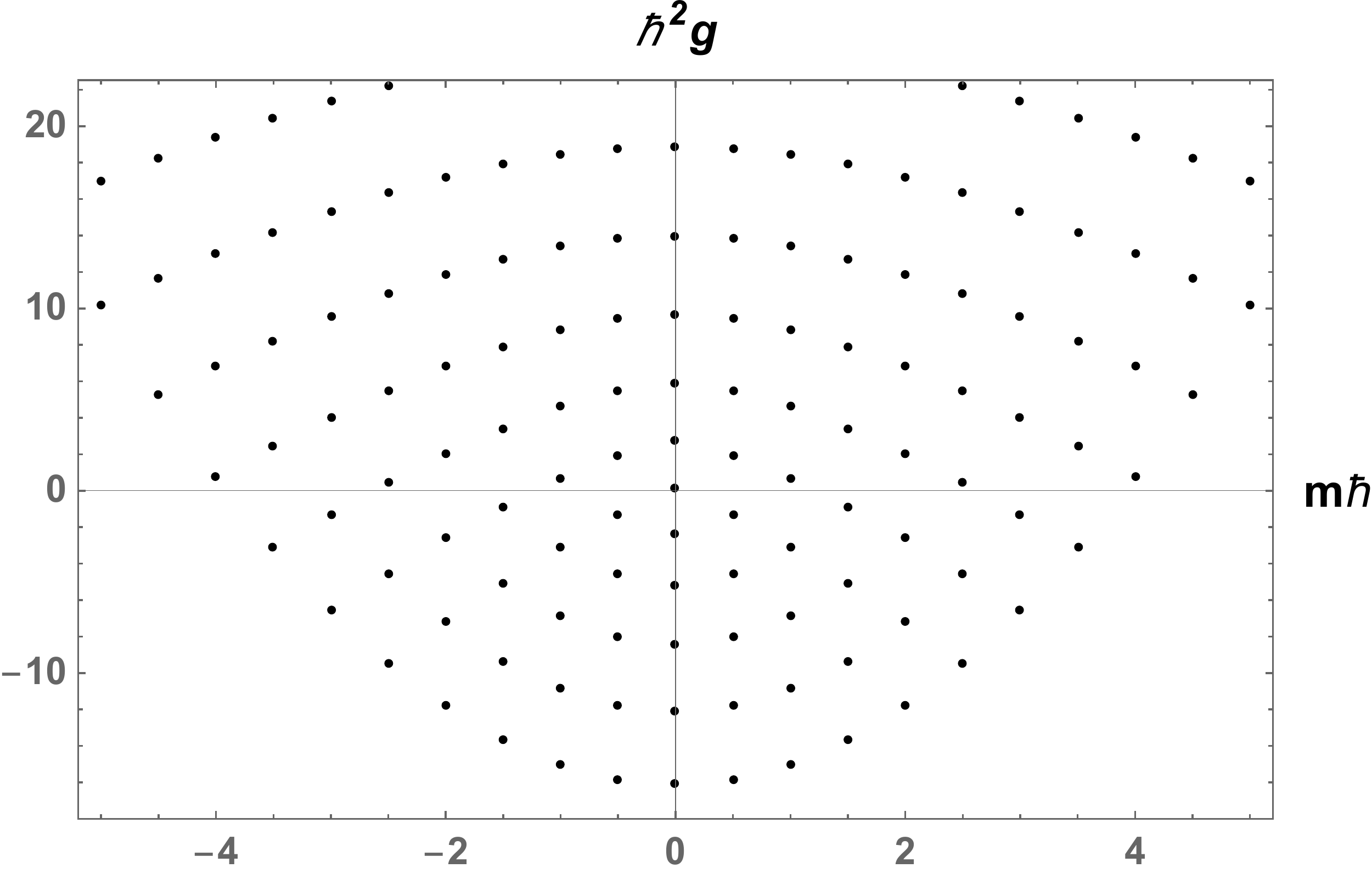}
		\includegraphics[width=4.5cm]{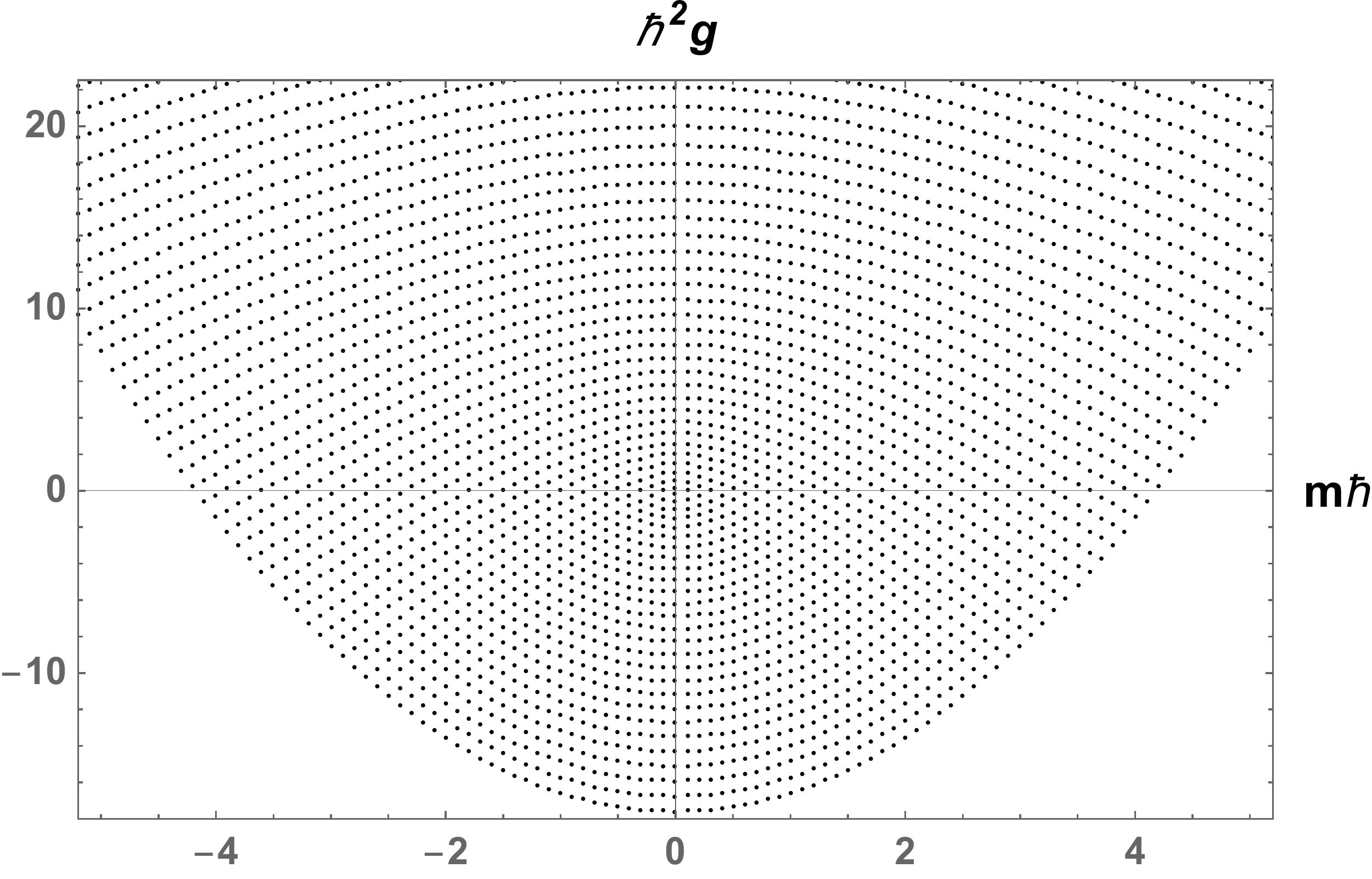}
	\par\end{centering}
	\caption{Joint spectrum $(\hbar m, \hbar^2 g_l^m)$ of the spheroidal harmonics with $2Ea^2 = 18$ for $\hbar = 1.0, 0.5, 0.1$ illustrating the semi-classical limit $\hbar \to 0$}
	\label{fig:evhbar}
\end{figure}

In the following we prefer to absorb $\hbar$ in the definition of the single parameter 
$\gamma = 2Ea^2/\hbar^2$ and present the scaled joint spectrum $(m, g_l^m)$.
When changing $\gamma$ the values and the distribution of the joint eigenvalues changes.
We are going to explain the structure of the joint spectrum and its dependence on $\gamma$ in the course of the paper.
Three examples of the joint spectrum are shown in Figure \ref{fig:ev3gamma} for $\gamma = 8,16,32$.
Note that this lattice is bounded below by a parabola 
(given by the critical values of the energy-momentum map, see below) 
but unbounded from above. 
We can observe that, locally the lattice
is isomorphic to $\mathbb{Z}^{2}$ thus allowing local assignments
of quantum numbers. However, there is a lattice defect at the origin,
and thus we do not have a global $\mathbb{Z}^{2}$ lattice, indicating
the presence of quantum monodromy. 

\begin{figure}
	\begin{centering}
		\includegraphics[width=6cm]{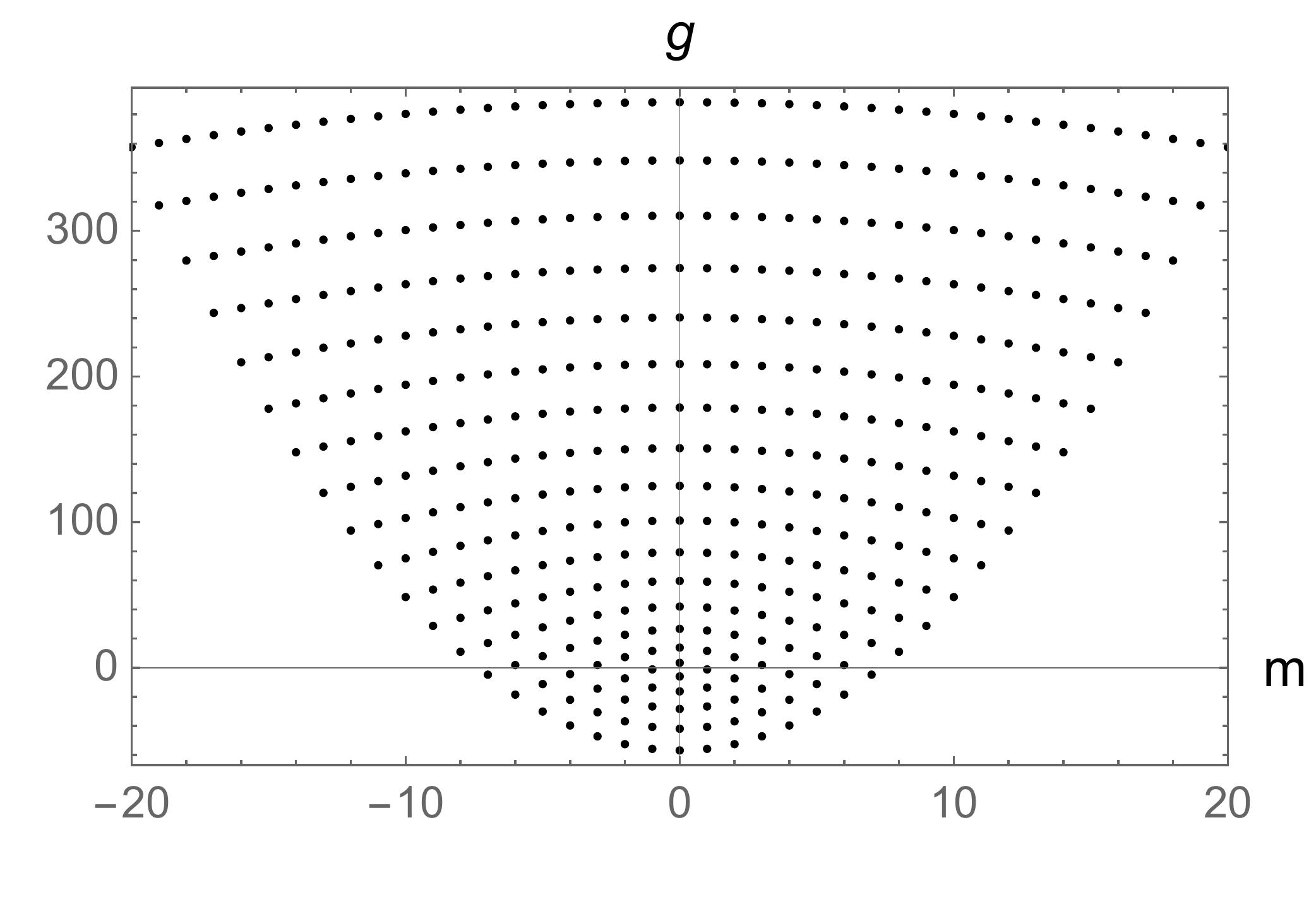}
		\includegraphics[width=6cm]{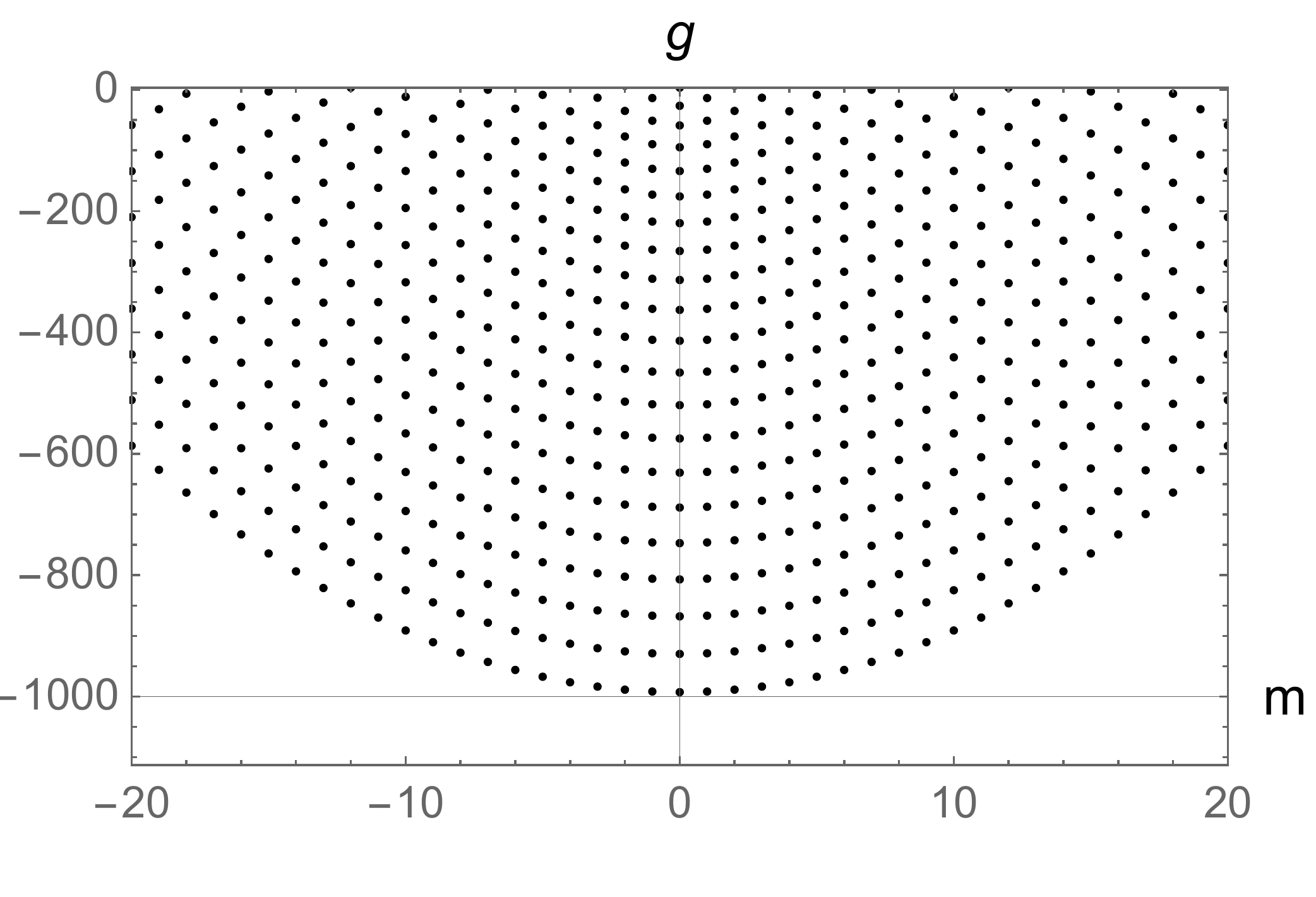}
	\par\end{centering}
	\caption{Joint spectrum $(m, g_l^m)$ of the spheroidal harmonics for $\gamma = 8, 32$.
	The asymptotic expansion for $g_l^m$ \eqref{eq:glmsmallgamma} is valid in the top part of the left figure,
	while \eqref{eq:glmlargegamma} is valid in the bottom part of the right figure.}
	\label{fig:ev3gamma}
\end{figure}


The spectrum of the spheroidal wave equation is well understood, and asymptotic 
expansions for the eigenvalues $g_l^m$ are well known \cite{Meixner54,AS,DLMF,arscott64}.
Here we are going to use these formulas to describes the quantum monodromy in the joint spectrum.

When $a \to 0$ the constant $\gamma \to 0$ and the operator $\hat G \to  | \bm{L}|^2$ 
becomes that of the associated Legendre equation with spectrum $g_l^m = l ( l +1)$
and corresponding eigenfunction the associated Legendre polynomial $P^m_l(\eta)$ for $-l \le m \le l$.
The spectrum is degenerate since $g_l^m$ is independent of $m$.
The labelling of eigenvalues in the spheroidal wave equation is continued from 
this limit for non-zero $a$. This means that in the Sturm-Liouville problem of the operator
$\hat G$ for given fixed integer $m$ the eigenvalue $g_l^m$ of the ground state is labelled by $l = |m|$.
The degeneracy is split for non-zero $\gamma$ and 
\begin{equation} \label{eq:glmsmallgamma}
g_l^m = l(l+1) - \frac12 \left( 1+  \frac{(2m-1)(2m+1)}{(2l-1)(2l+3) }\right) \gamma^2 + O(\gamma^4/l^2), 
\end{equation}
see, e.g., \cite{Meixner54,AS,DLMF,arscott64}. 
For fixed $\gamma$ this approximation is also good when $l$ is large and it can thus be understood 
as a semi-classical limit with fixed $a$ but large quantum number $l$ or correspondingly large values of the eigenvalue $g_l^m$.

When $\gamma \to \infty$ the spectrum also becomes simpler, but the limit is a bit more complicated. 
The leading order of the operator $\hat G$ is simply $-a^2 (\hat p_x^2 + \hat p_y^2)$.
The eigenvalues satisfy 
\begin{equation} \label{eq:glmlargegamma}
g_l^m = -\gamma^2 + ( 2(l-|m|) + 1) \gamma -\tfrac34 +m^2 - \tfrac12 (l-|m|)(l-|m|+1) + O(1/\gamma) ,
\end{equation}
see \cite{AS,DLMF,arscott64,Muller63}. 
Thus eigenvalues with the same value of $l - |m|$ and small $|m|$ are degenerate at leading order.
The limit of large $\gamma$ can be understood as the semiclassical limit where $\hbar \to 0$
for fixed value of $a$ for quantum numbers $l$ close to the ground state with $l = |m|$.

\begin{figure}
	\begin{centering}
		\includegraphics[width=10cm]{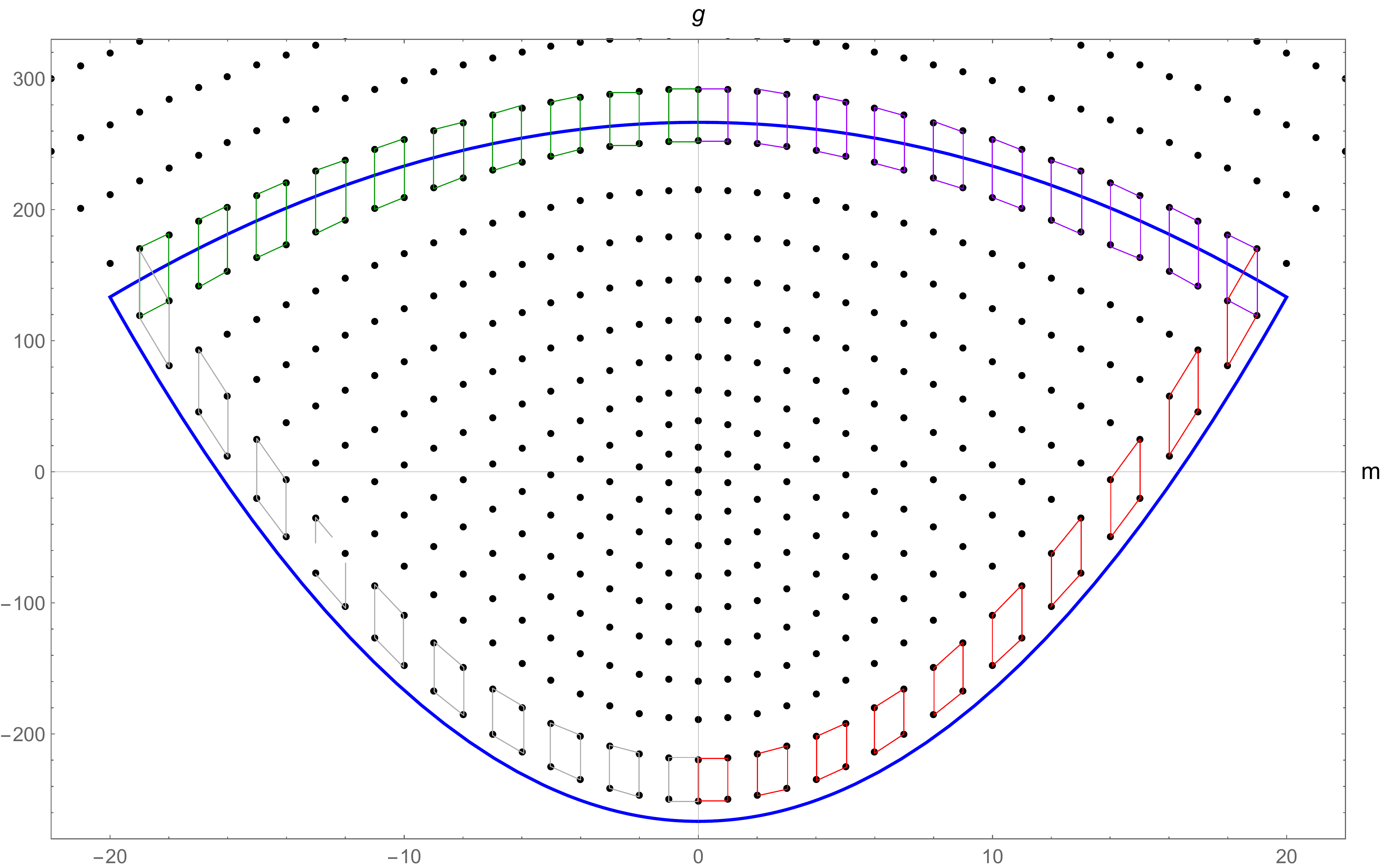}
		\par\end{centering}
	\caption{Joint spectrum $(m, g_l^m)$ of the spheroidal harmonics for $\gamma = 16$. 
	         A lattice unit cell is transported 
		around the origin. The  lower blue parabola is $g = -\gamma^2 + m^2$
		and the upper blue parabola is $2g = 2l^2 - \gamma^2 - m^2(\gamma/l)^2$ for $l = l^*$.  
		The red and purple cells are transports of $B_{l}^{m}$ and $T_{l}^{m}$
		respectively for positive $l^{*}$. The grey and green cells are those
		for negative $l^{*}$.\label{fig:Spheroidal-Eigenvalue}}
\end{figure}

Figure \ref{fig:Spheroidal-Eigenvalue} illustrates the monodromy about the origin. 
A unit cell is parallel transported along a path that encloses the origin.
As the basis vectors (say $v_{1}$ is the vertical vector and $v_{2}$ is the horizontal one) 
are fully transported around the loop, we observe that
$v_{1}$ stays constant whilst $v_{2}$ becomes $v_{2}+2v_{1}$. This
implies that we have a basis transformation according to 
\[
\begin{pmatrix}v_{1}'\\
v_{2}'
\end{pmatrix}=\begin{pmatrix}1 & 0\\
k & 1
\end{pmatrix}\begin{pmatrix}v_{1}\\
v_{2}
\end{pmatrix}
\]
where $k=2$. This integer is called the monodromy index.
In the figure the full loop is broken up into two symmetric half-loops,
each contributing half of the total monodromy. 
In the next section we will prove this by showing that in the classical 
phase space there are  isolated critical points of focus-focus type 
and the pre-image of the corresponding critical value is a doubly pinched torus.
Here we give a direct quantum mechanical interpretation of monodromy 
that is based on discrete symmetries and based on the well known 
asymptotic formulas \eqref{eq:glmsmallgamma} and \eqref{eq:glmlargegamma}

The monodromy along a loop in the joint spectrum around the origin can be analysed using the 
well known asymptotic formulas. Each formula is going to be evaluated along either the lower parabola where $l^2 - m^2 = 0$
in the joint spectrum or along a particular ``upper'' parabola where $l=l^*$ is constant but large.
For the former $g = -\gamma^2 + m^2$ and for the latter $g = l^2 - \gamma^2/2 - \tfrac12 m^2 (\gamma/l)^2$ for fixed $l=l^*$.
A unit cell in the joint spectrum is defined at $l=m=0$ and moved along the lower parabola.
Another unit cell in the joint spectrum is define at $l=l^*$, $m=0$
and transported along the upper parabola. The constant $\gamma$ is chosen such that 
the distance of the vertex of the two parabolas from the origin is the same, hence $\gamma = \sqrt{2/3} l^*$.
The two parabolas meet where $m = l^*$.
A unit cell near the bottom parabola is defined by its for corners as $B_l^m = (g_l^m, g_{l+1}^{m+1}, g_{l+2}^{m+1}, g_{l+1}^m)$
moving counterclockwise around the unit cell.
A unit cell near the top parabola is defined by its for corners as $T_l^m = (g_l^m, g_l^{m+1}, g_{l+1}^{m+1}, g_{l+1}^m )$ 
moving counterclockwise around the unit cell. The cell at the top has a natural labelling, which is inherited from the 
spherical harmonics limit. Now the cells are moved together to the point where the parabolas meet. 
There $B_{l^*-1}^{l^*-1}$ is compared with $T_{l^*}^{l^*-1}$. The 2nd and 3rd state in the two unit cells agree, and the 
last of $B$ with the first of $T$. 
Thus a basis transformation will add 1 unit to $l$. A mirror symmetric situation occurs 
for $l = -l^*$, and hence the total monodromy around the loop is 2.
The asymptotic formulas given above are stretched to their limits when trying to see
the equality of the three eigenvalues of the two unit cells. 
In particular the expansion at the bottom parabola given by \eqref{eq:glmlargegamma} is not very good when 
evaluated near $l=m=\sqrt{3/2}\gamma$. This is not surprising, 
and is not essential for our argument.
Note that sometimes eigenvalues $m$ are restricted to non-negative integers, which 
is obvious because of the symmetry of the spectrum under $m \to -m$. Even in joint spectra that
are hence somewhat arbitrarily cut in half the mismatch in the quantum numbers when comparing
the two basis cells is still present, even though a ``loop'' around the focus-focus point is not possible any more.

\begin{figure}
	\begin{centering}
		\includegraphics[width=4cm]{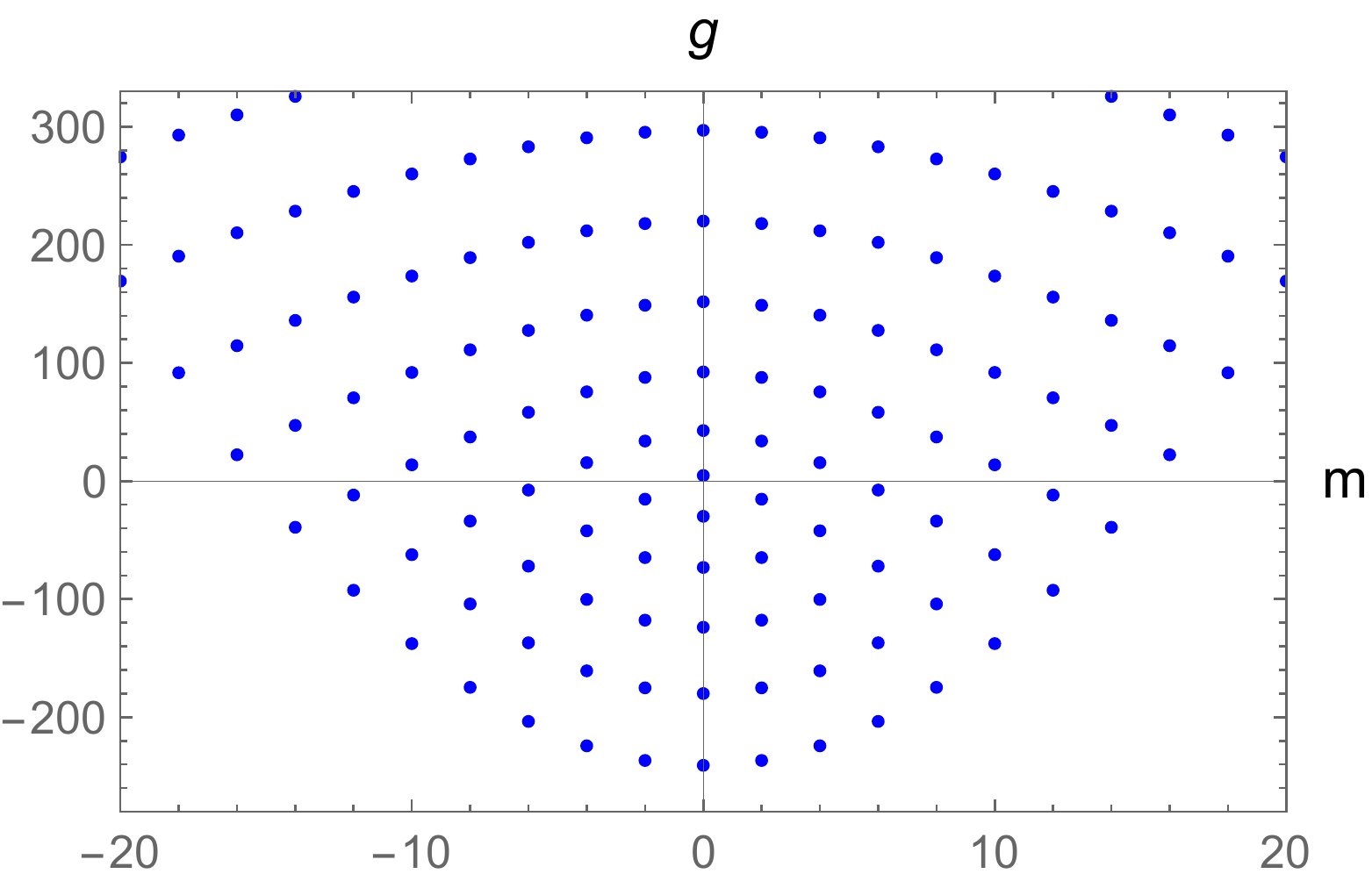}%
		\includegraphics[width=4cm]{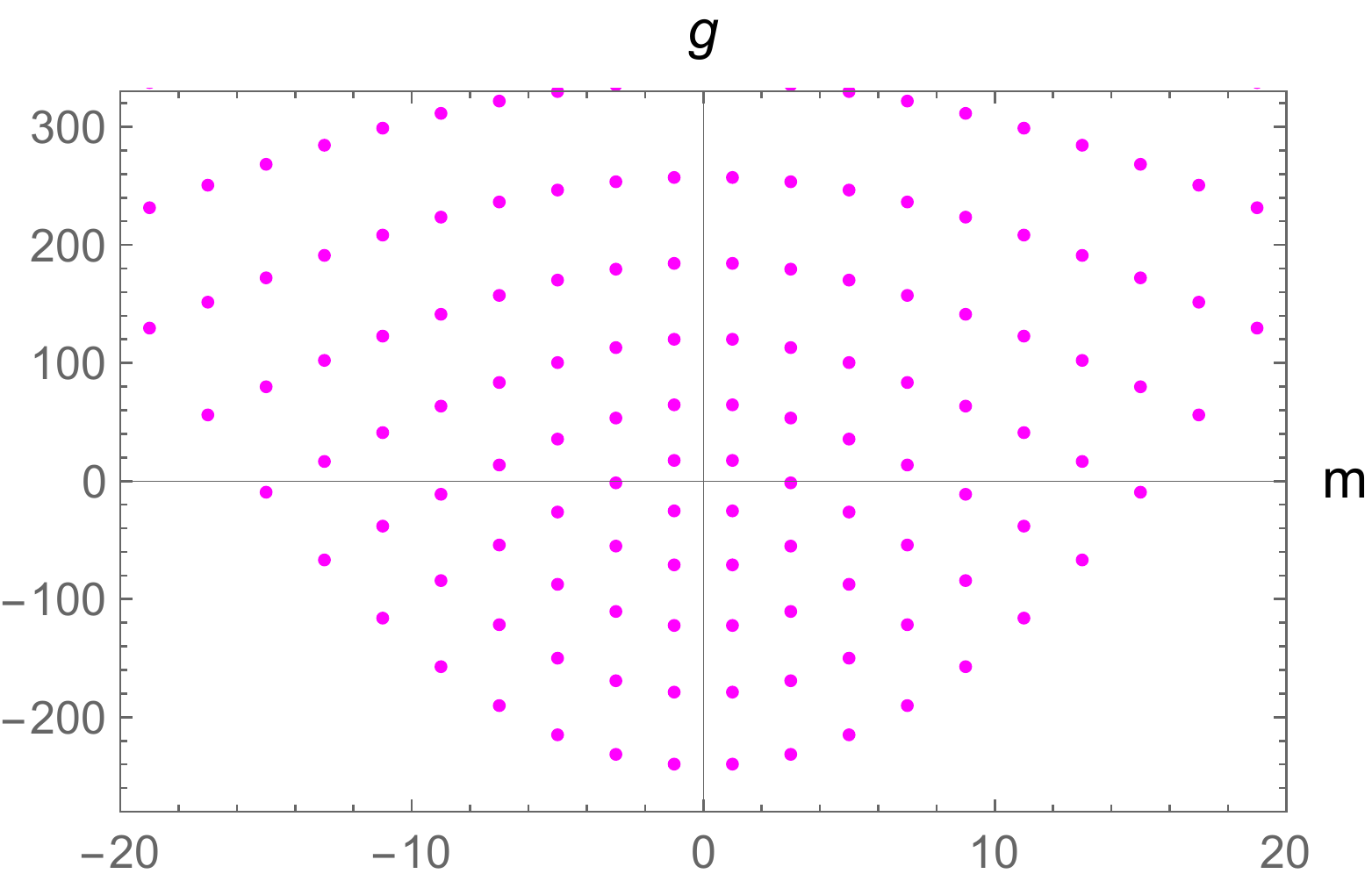}%
		\includegraphics[width=4cm]{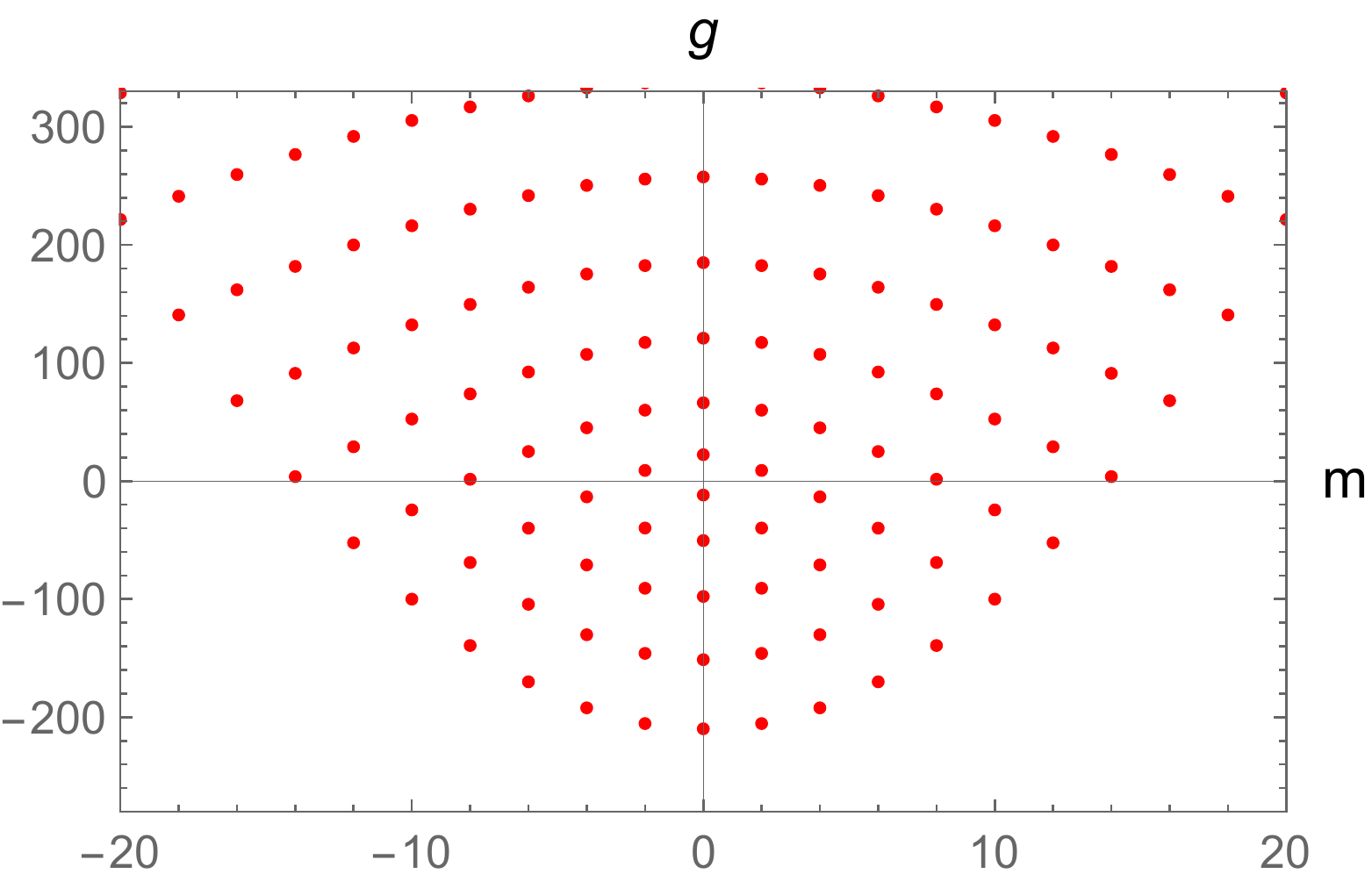}%
		\includegraphics[width=4cm]{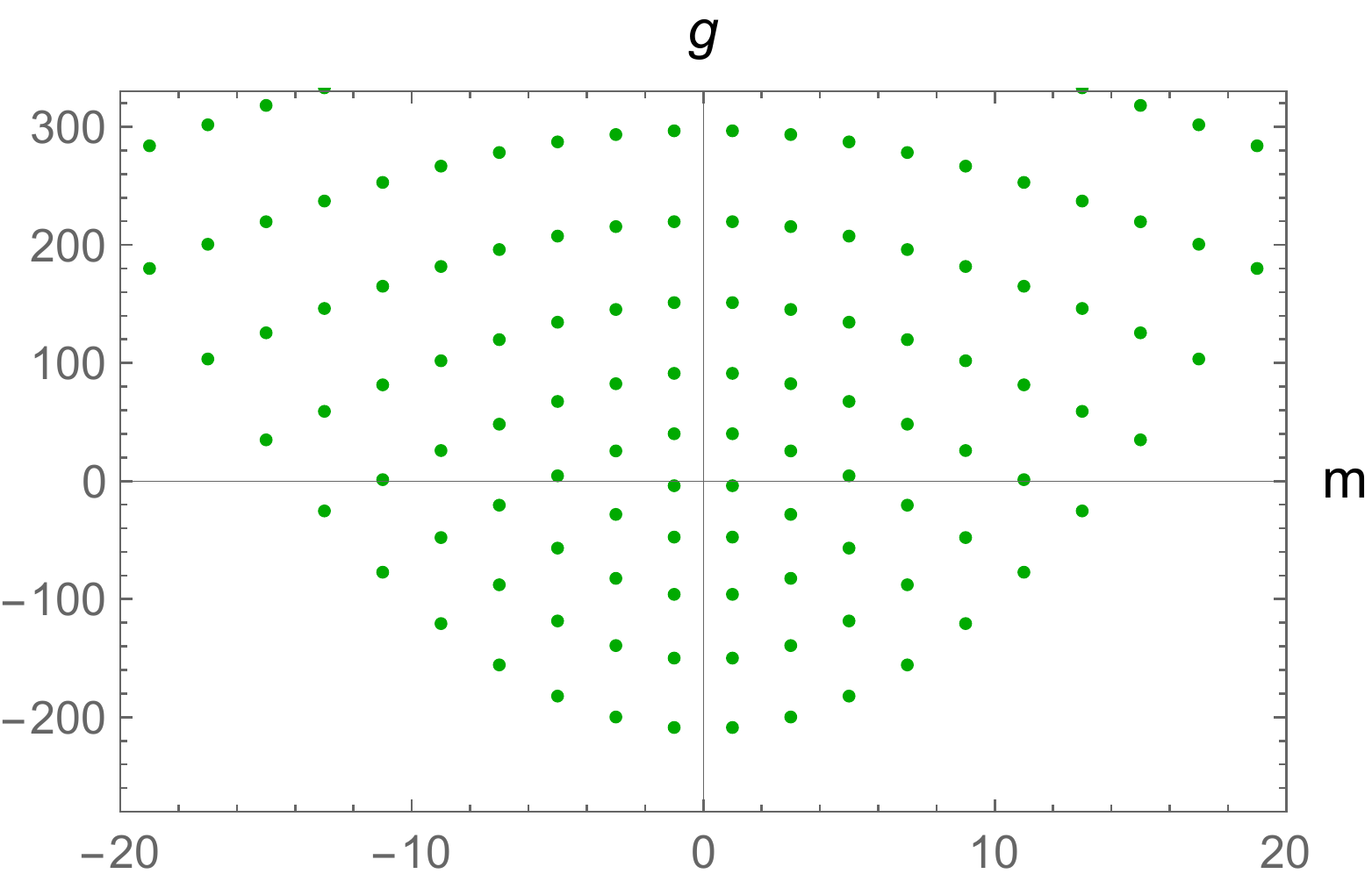} \par
	\end{centering}
	\caption{a) and b) Joint spectrum where $l-m$ is even and $m$ is even/odd
		respectively. c) and d) where $l-m$ is odd and $m$ is even/odd respectively. 	\label{fig:SymmetrySpectrum}}
\end{figure}

\begin{figure}
	\begin{centering}
		\includegraphics[width=5cm]{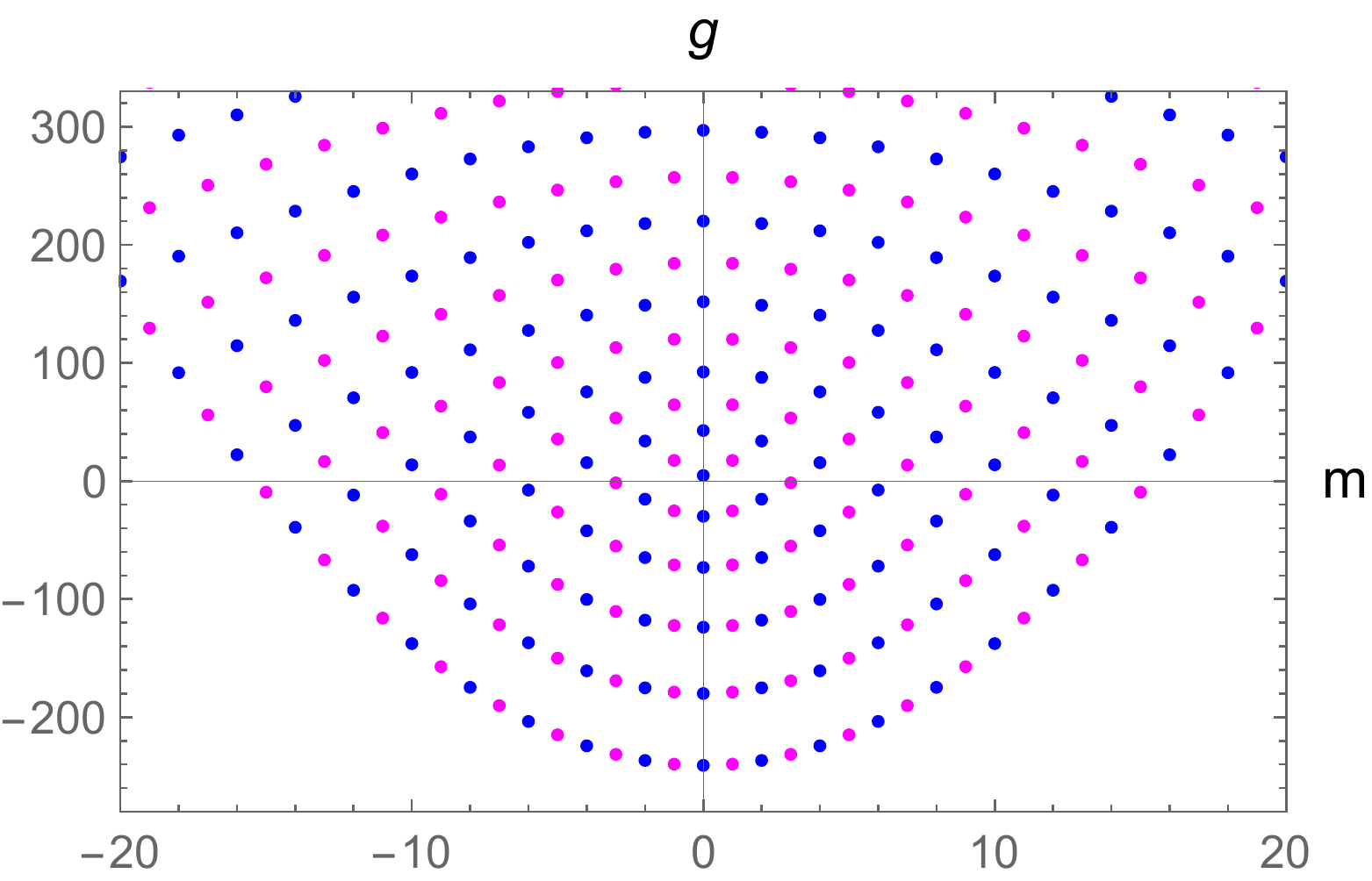}%
		\includegraphics[width=5cm]{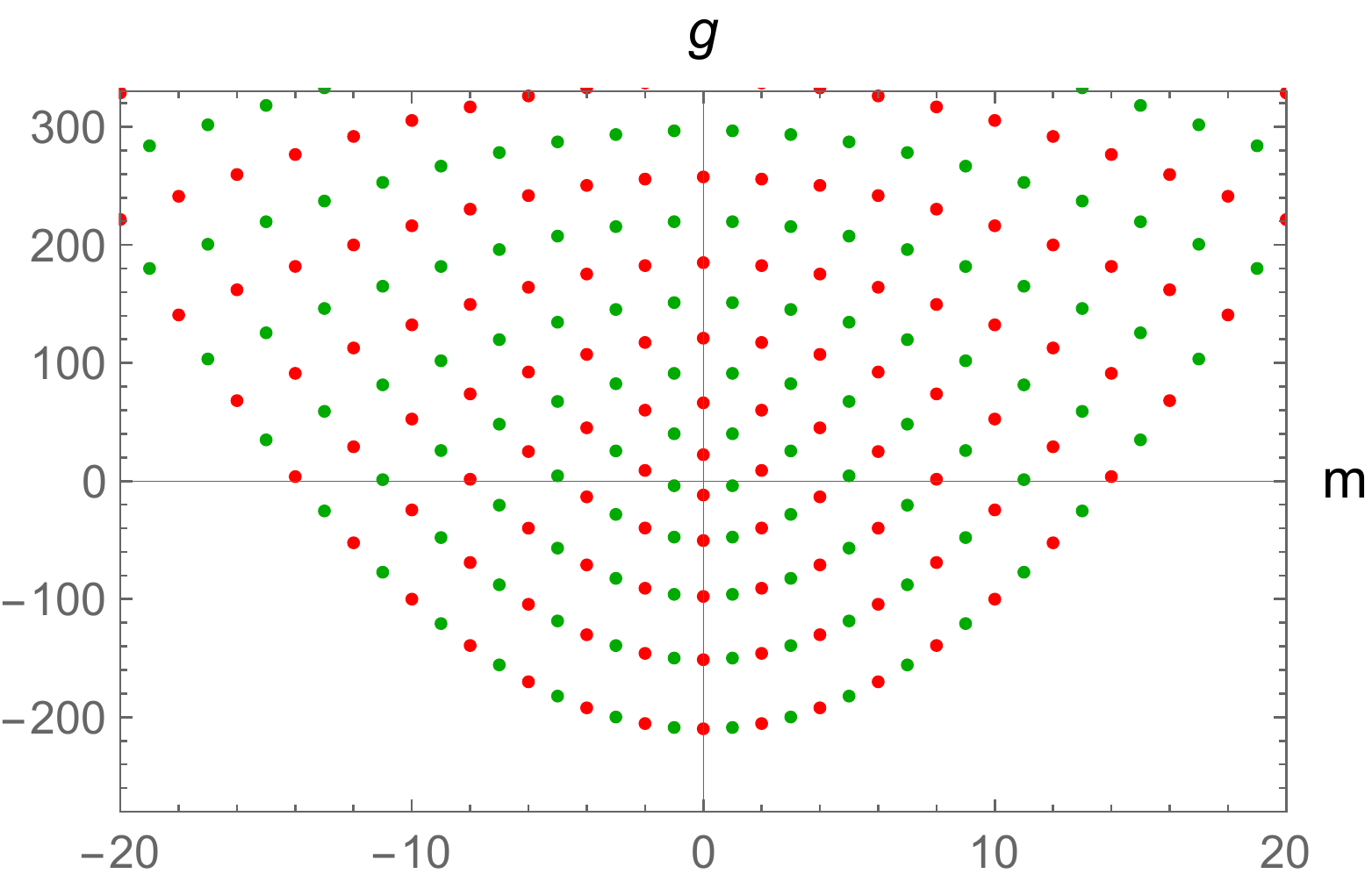}%
		\includegraphics[width=5cm]{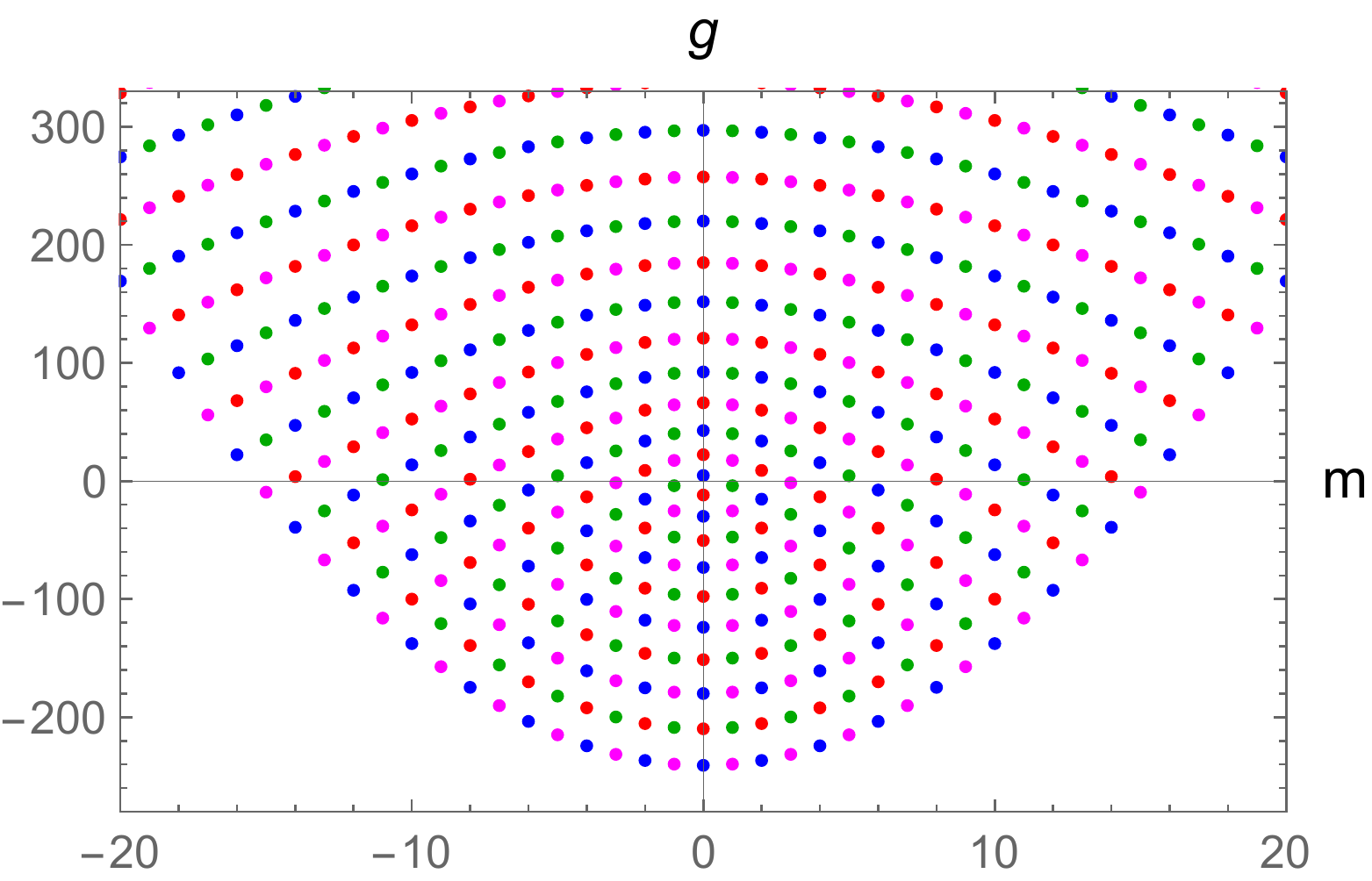}
		\par\end{centering}
	\caption{Parts of the joint spectrum whose eigenfunctions are a) even under $S_{2}$, b) odd
		under $S_{2}$ and c) the complete joint spectrum. The spectra
		shown in a) and b) both have monodromy index $1$. \label{fig:S2 sym} }	
\end{figure}

The joint spectrum can be divided into symmetry classes.
Since \eqref{eq:sep} is even in $\eta$ the eigenfunctions $S_l^m$ are even or odd.
They inherit the symmetry of $P_l^m$ so that $\Ps_l^m$ is even when $l-m$ is even 
and odd when $l-m$ is odd. Accordingly $S_l^m \circ S_1 = (-1)^{l-m} S_l^m$.
Similarly for $\psi_\phi = e^{im\phi}$ it holds that $\psi_\phi \circ S_3 = (-1)^m \psi_\phi $. 
Thus every point in the joint spectrum can be classified according to the parity of $l-m$ and $m$.  This is illustrated in Figure \ref{fig:SymmetrySpectrum} a) through d), where each subfigure contains one quarter of the number of points the full spectrum possesses. Despite this, the unit cell is still deformed in the same way as in Figure \ref{fig:Spheroidal-Eigenvalue} 
and the monodromy index is 2.

It is interesting to note that when selecting states according to their symmetry 
under $S_2 = S_1 \circ S_3$ the monodromy index changes to 1.
Since $Z_l^m$ is the product of $\psi_\phi$ and $S_l^m$ it is 
invariant under $S_2$ if $l-m$ and $m$ are either both even or both odd.
The corresponding joint eigenvalues are shown in Fig.~\ref{fig:S2 sym} left and middle,
and for this selection of joint eigenstates the monodromy is 1. 

The most striking effect of the monodromy is a change in what the symmetry 
of horizontally neighbouring states near the line $m=0$ is. 
Consider Fig.~\ref{fig:S2 sym} left and middle. When $g \gg 0$ the horizontally neighbouring states
have the same symmetry type, while for $g \ll 0$ the symmetry type changes. 
But not only the symmetry type changes, but also the location of states comparing $m=0$ and $m=1$.
For $g \ll 0$ horizontally neighbouring states with $m=0$ and $m=1$ have nearly the same eigenvalue.
By contrast, for $g \gg 0$ consider a state with $m=0$. Now there is no horizontally neighbouring state with $m=1$.
Instead the eigenvalue for a state with $m=1$ is approximately half way between the nearby states with $m=0$. 

We are now going to make these observations precise using \eqref{eq:glmsmallgamma} and \eqref{eq:glmlargegamma}. 
Consider states invariant under $S_2$, hence with even $l-m$ and even $m$,
see Fig.~\ref{fig:S2 sym}, left. Consider the lower end of the figure where $g \ll 0$.
These states are described by \eqref{eq:glmlargegamma}, the asymptotics for large $\gamma$ or small $\hbar$.
The overall ground state has $l = m = 0$. The horizontally neighbouring state with $m=1$ has 
$l = m = 1$ and nearly the same eigenvalue. In particular any state at the lower 
boundary has $l - |m| = 0$.
As noted before, the labelling of states is defined so that it is continuous in the 
limit of spherical harmonics $a \to 0$, and hence the ground state for fixed 
$m$ has $l = |m|$. Using \eqref{eq:glmlargegamma} we find
\[
     g_{l+1}^1 - g_l^0 = 1 + O(1/\gamma), \quad \text{for $g_l^0 \ll 0$.} 
\]
The same analysis holds for Fig.~\ref{fig:S2 sym}, where $l$ in the above formula 
is odd, while in the left figure it is even. Note that the separation of states in the 
vertical direction $g_{l+2}^0 - g_l^0 = 4\gamma + (2l + 3) + O(1/\gamma)$
is of order $\gamma$, and hence we perceive the neighbour in the horizontal 
direction as nearly the same. If we were to present eigenvalues with dimensions
then the difference  in eigenvalue of two horizontally neighbouring states
would be of order $\hbar^2$, while those of two vertically neighbouring states would be $\hbar$.

Now compare this to the situation with large positive eigenvalues  and hence large $l$ near the line $m=0$.
There the state with $m=1$ is approximately equal to the average of neighbouring states with $m=0$.
Using \eqref{eq:glmsmallgamma} we find
\[
    \frac{ g_l^0 + g_{l+2}^0 }{2} - g_{l+1}^1 = 1 + O(\gamma^2), \quad \text{for $g_l^0 \gg 0$}
\]
in the horizontal direction, while in the vertical direction the separation 
is $g_{l+1}^1 - g_l^0 = 2(l+1) + O(\gamma^2 )$. A  comment similar to the previous case 
about the scaling with $\hbar$ applies here.

The previous discussing of neighbouring states was done separately for states that are either invariant under $S_2$ or not.
The reason is that for these subsets the monodromy index is 1. When considering all states
the monodromy index is 2, and its manifestation on the symmetry and labelling of states is different.
In the set of all states in both limits, large positive and large negative $g_l^0$, there is always a horizontally 
neighbouring state with almost the same eigenvalue, see Fig.~\ref{fig:S2 sym}, right.
For large negative $g_l^0$ horizontally neighbouring states with $m=\pm1$ 
have the same symmetry under $S_2$, while for large positive $g_l^0$ horizontally neighbouring states have
opposite symmetry under $S_2$.
A direct consequence of monodromy is the following observation:
for large $l$ such that $g_l^0 \gg 0$ for horizontally neighbouring states $g_l^0 - g_l^1 = O(\gamma^2)$.
For small $l$ such that $g_l^0 \ll 0$ however this difference is not small, $g_l^0 - g_l^1 = 2\gamma + O(1)$. 
This means that states with the same $l$ are not horizontal neighbours, instead the index $l$ needs to 
be increased by 1 when going to the right, then $g_l^0 - g_{l+1}^1 = 1 + O(1/\gamma)$ is small. 
This means that when 
comparing the labelling of states along the line $m=0$ with the line $m=1$ there is a mismatch that 
occurs for small $l$ (negative $g$), while for large $l$ (positive $g$) states are labelled in the natural way.
As already mentioned the fact that this labelling is ``natural'' in the latter case is a choice that was made
in order to have continuity with the labelling in the spherical harmonics limit $a \to 0$.
One could redefine the labelling to be ``natural'' with respect to the Sturm-Liouville problems for 
fixed $m$, then each ground state for fixed $m$ would have the same quantum number.
Then the mismatch in the labelling of horizontal neighbours would appear for states with large eigenvalues $g_l^m$.
The fact that this mismatch cannot be avoided is an expression of the quantum monodromy in the system.

The discussion of monodromy using the asymptotic expansions 
\eqref{eq:glmsmallgamma} and \eqref{eq:glmlargegamma} is enlightning,
but it is somewhat heuristic. If we stay near the line $m=0$ and observe the change in lattice
for small and large $g$ we cannot complete a loop around the focus-focus point, 
because neither formula is valid there. 
If we do complete the loop along the parabolas as indicated in Figure~\ref{fig:Spheroidal-Eigenvalue}
we are stretching the asymptotic expansions to the limit of their validity.
For this reason we are  going to prove existence of monodromy in the semi-classical limit 
by a detailed analysis of the corresponding classically integrable system in section~\ref{moma},
and by appealing to the general theory of quantum monodromy \cite{VuNgoc99}.
It is interesting to note that the general theory  
only makes sense in the semi-classical limit; when  explicit approximate formulas for the 
quantum eigenvalues like  \eqref{eq:glmsmallgamma} and \eqref{eq:glmlargegamma} are known, 
monodromy makes sense as long as there are at least a few eigenvalues $g_l^0 < 0$, so down to say $\gamma = 4$.

%
%

\section{Laplace-Runge-Lenz and C.~Neumann}
\label{LRL}


In this section we will show that the spheroidal harmonics system is symplectomorphic to 
the degenerate C.~Neumann system. The C.~Neumann system is a famous integrable system
that was studied by Jacobi's student Carl Neumann \cite{Neumann1859}, as a prime example of 
separation of variables. It consists of a particle constraint to move on the unit sphere (in any dimension)
under influence of an additional harmonic potential \cite{Moser80a,Moser80,Veselov80,Ratiu81}.
The degenerate case has been studied in \cite{DH03}, and the action variables in the 
general case were analysed in \cite{DRVW99}, also see \cite{DVN05}.
For the quantisation of the C.~Neumann system (in the non-degenerate case) see
\cite{toth93,gurarie95}.

The invariants $\bm{P}$ and $\bm{L} = \bm{Q} \times \bm{P}$ of the free particle are 
of degree 1 and 2 in the original phase space variables. Invariant degree 3 polynomials 
can be formed from them using an analogue of the Laplace-Runge-Lenz vector 
$\bm{A} = \bm{P} \times \bm{L}$.  As in the Kepler problem it is useful to scale with 
the energy: $\bm{K} = \bm{A} |\bm{P}|^{-\alpha}$. 
The Poisson tensor in $\R^9$ with coordinates $\left(\bm{P},\bm{L},\bm{K}\right)$ then is
\begin{equation}
B_\alpha = 
\begin{pmatrix}
\bm{0} & -\hat{\bm{P}} &  \hat{\bm{P}}^2  |\bm{P}|^{-\alpha}   \\ 
-\hat{\bm{P}} & -\hat{\bm{L}} & -\hat{\bm{K}} \\ 
- \hat{\bm{P}}^2  |\bm{P}|^{-\alpha}  & -\hat{\bm{K}} & \hat{\bm{L}}|\bm{P}|^{2(1-\alpha)}
\end{pmatrix} \,.
\end{equation}
In the Kepler problem the idea is to have the bracket between $\bm{L}$ and $\bm{K}$ close,
so there the choice is $\alpha = 1$ so that $|\bm{P}|$ drops out in the lower right corner
and the algebra is $so(4)$.
In our case the choice $\alpha=1$ leads to a realisation of the spheroidal harmonic system on $so(3,1)$, 
but the Hamiltonian $G$ is not smooth when written in terms of $\bm{L}$ and $\bm{K}$, so we do not 
investigate this further.
Instead we are interested to make the bracket between 
$\bm{P}$ and $\bm{K}$ close. To achieve this we need to eliminate $\bm{L}$.
Using standard cross product identities we find 
$\bm{P} \times \bm{A} = -|\bm{P}|^2 \bm{L} + \bm{P} ( \bm{P} \cdot \bm{L})$.
Choosing $\alpha = 2$ thus gives $\bm{P} \times \bm{K} = -\bm{L} + \bm{P}  ( \bm{P} \cdot \bm{L}) |\bm{P}|^{-2}$.
Now fixing the Casimir $\bm{P} \cdot \bm{L} = b$ of $B_\alpha$ allows to eliminate $\bm{L}$ 
and the resulting Poisson structure on $\R^6$ with coordinates $(\bm{P}, \bm{K})$ is
 \begin{equation} \label{eq:BPK}
     B_{P,K} = |\bm{P}|^{-2} \begin{pmatrix}
	\bm{0} & \hat{\bm{P}}^2  \\
	 -\hat{\bm{P}}^2 &  -\hat{\bm{U}}
     \end{pmatrix} , \quad \text{where} \quad \bm{U} = \bm{P} \times \bm{K} - b \bm{P} |\bm{P}|^{-2} 
 \end{equation}
 with Casimirs $\bm{P} \cdot \bm{P}$ and $\bm{P} \cdot \bm{K}$.
Setting the magnetic term $b=0$ and using the identity 
$\bm{P} \bm{P}^t - \hat{\bm{P}}^2 = id \bm{P}\cdot \bm{P}$ we see that
this is the Dirac structure of $T^*S^2$ embedded in $\R^6$ as, e.g., derived in \cite{DH03}.
When considering the Dirac structure of $T^*S^2$ in $\R^6$ we use coordinates 
$\bm{x} = (x_1, x_2, x_3)^t \in S^2$ and momenta $\bm{y} = (y_1, y_2, y_3)^t$ in the tangent 
space of the sphere so that $\bm{x} \cdot \bm{y}  = 0$.
Thus define the Dirac structure $B_D$ of $T^*S^2$ in $\R^6$ as
\begin{equation} \label{eq:Dirac}
    B_D = \begin{pmatrix}
	\bm{0} & -id + \bm{x} \bm{x}^t   |\bm{x}|^{-2} \\
	 id - \bm{x} \bm{x}^t   |\bm{x}|^{-2} & -\widehat{ \bm{x} \times \bm{y}  } |\bm{x}|^{-2}
    \end{pmatrix} 
\end{equation}
with Casimirs $\bm{x} \cdot \bm{x}$ and $\bm{x} \cdot \bm{y}  = 0 $.
Note that the lower left block is the projector to the subspace orthogonal to $\bm{x}$.

\begin{lem} 
Consider the manifold 
$
   M_r = \left\{ (\bm{x}, \bm{y}) \in \R^6 \mid \bm{x} \cdot \bm{x} = r^2, \bm{x} \cdot \bm{y} = 0 \right\} 
$
for $r > 0$.
 The map $\mu: M_r \to M_r$, $(\bm{x}, \bm{y}) \mapsto (\bm{x}, -\bm{x} \times \bm{y})$ 
 is a diffeomorphism with inverse $(\bm{x}, \bm{y}) \mapsto ( \bm{x}, \bm{x} \times \bm{y}/r^2)$.
\end{lem}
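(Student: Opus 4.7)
The plan is to verify four things in sequence: (i) $\mu$ sends $M_r$ into $M_r$, (ii) the candidate inverse $\nu(\bm{x},\bm{y}) \coloneqq (\bm{x},\bm{x}\times\bm{y}/r^2)$ also sends $M_r$ into $M_r$, (iii) $\mu \circ \nu = \nu \circ \mu = \id_{M_r}$, and (iv) both maps are smooth. Since the formulas are polynomial in $(\bm{x},\bm{y})$ (with $r^2$ a fixed positive constant), smoothness is immediate, so the real content lies in (i)--(iii).

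For (i), I would check that the image $(\bm{x},-\bm{x}\times\bm{y})$ satisfies both defining equations of $M_r$: the first is trivial since the $\bm{x}$-component is unchanged, and the second reduces to $\bm{x}\cdot(\bm{x}\times\bm{y}) = 0$, which is the scalar triple product with a repeated vector. The same observation handles (ii) for $\nu$.

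For (iii), the key computation is that $\bm{x}\times(\bm{x}\times\bm{y}) = -r^2 \bm{y}$ on $M_r$. This is an application of the vector triple product identity $\bm{a}\times(\bm{b}\times\bm{c}) = \bm{b}(\bm{a}\cdot\bm{c}) - \bm{c}(\bm{a}\cdot\bm{b})$, which gives $\bm{x}\times(\bm{x}\times\bm{y}) = \bm{x}(\bm{x}\cdot\bm{y}) - \bm{y}(\bm{x}\cdot\bm{x})$, and the two defining relations of $M_r$ collapse this to $-r^2\bm{y}$. From here both compositions $\mu\circ\nu$ and $\nu\circ\mu$ are a one-line calculation that returns $(\bm{x},\bm{y})$.

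There is essentially no obstacle: the statement is a direct consequence of the triple-product identity together with the two constraints defining $M_r$. The only thing worth pausing on is that the identity must be applied \emph{on $M_r$}, i.e., both constraints $\bm{x}\cdot\bm{x}=r^2$ and $\bm{x}\cdot\bm{y}=0$ are used simultaneously; without the orthogonality constraint the inverse formula would carry an extra term proportional to $\bm{x}$ and $\mu$ would fail to be injective off $M_r$. This is also why the lemma is phrased on the constraint manifold rather than on all of $\R^6$.
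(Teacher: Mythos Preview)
Your proposal is correct and uses exactly the same key computation as the paper: the vector triple product identity $\bm{x}\times(\bm{x}\times\bm{y}) = \bm{x}(\bm{x}\cdot\bm{y}) - \bm{y}(\bm{x}\cdot\bm{x}) = -r^2\bm{y}$ on $M_r$. If anything you are more thorough than the paper, which only writes out the one composition $\mu^{-1}\circ\mu$ and leaves the well-definedness checks and the other composition implicit.
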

\begin{proof}
Composing $\mu$ with $ \mu^{-1}$ and using the vector triple product expansion formula gives
$-\bm{x} \times (\bm{x} \times \bm{y})/r^2 = \bm{y} (\bm{x} \cdot \bm{x} )/r^2 - \bm{x} ( \bm{x} \cdot \bm{y}) /r^2= \bm{y}$.
\end{proof}
Note that for $r=1$ the map $\mu$ of $M_1$ has order 3. If we think of a curve $\bm{x}(t)$ 
on the sphere such that $\bm{y}$ is the tangent vector to the curve then  
 $\mu$ maps the tangent vector to the normal vector. When applied 
a second time $\mu$ maps the normal vector to the binormal vector. 
When applied a third time $\mu$ maps the binormal vector back to the tangent vector.

\begin{prop} \label{todirac}
The map $ (\bm{P}, \bm{L}) \mapsto  (\bm{x}, \bm{y}) = ( \bm{P}, \bm{P}\times\bm{L} |\bm{P}|^{-2})$ is a symplectomorphism
between the co-adjoint orbit of the
 Lie-Poisson structure of $e^*(3)$ in $\R^6$ with variables $\bm{P}, \bm{L}$ given by \eqref{eq:Bpoi} 
to  $T^*S^2$ embedded in $\R^6$ with variables $\bm{x}, \bm{y}$ with Dirac structure given by \eqref{eq:Dirac}.
\end{prop}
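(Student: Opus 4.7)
The plan is to check that $\Phi(\bm{P},\bm{L}) := (\bm{P},\,\bm{P}\times\bm{L}/|\bm{P}|^2)$ satisfies three things: it lands in the Dirac constraint set, it is a diffeomorphism onto it, and it intertwines the Poisson tensors $B$ and $B_D$. Once these are established, both sides are the four-dimensional symplectic leaves of their respective Poisson structures, and the claimed symplectomorphism follows from the Poisson morphism. The first is immediate from $\bm{x}\cdot\bm{y} = \bm{P}\cdot(\bm{P}\times\bm{L})/|\bm{P}|^2 \equiv 0$ together with $|\bm{P}|^2$ being a Casimir of $B$. The second follows from the preceding lemma applied with $r=|\bm{P}|$: the inverse sends $(\bm{x},\bm{y}) \mapsto (\bm{x},-\bm{x}\times\bm{y})$, and the triple-product identity combined with $\bm{P}\cdot\bm{L}=0=\bm{x}\cdot\bm{y}$ verifies that both compositions are the identity.

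The Poisson property is the real content, and I would verify it block by block on coordinate functions. The $(x,x)$ block is trivially zero on both sides because the $(P,P)$ block of $B$ is zero. For the $(x,y)$ block, the fact that $|\bm{P}|^2$ is a Casimir lets me pull the denominator outside the bracket, after which a single $\epsilon$-contraction using $\epsilon_{jab}\epsilon_{ibc}P_aP_c = P_iP_j - \delta_{ij}|\bm{P}|^2$ gives $\{P_i,y_j\}_B = P_iP_j/|\bm{P}|^2 - \delta_{ij}$, which matches the $(i,j)$-entry of the upper-right block $-\mathrm{id}+\bm{x}\bm{x}^{t}/|\bm{x}|^2$ of $B_D$.

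The main step is the $(y,y)$ block. Using the Casimir property once more, $\{y_i,y_j\}_B = |\bm{P}|^{-4}\{A_i,A_j\}_B$ with $\bm{A}:=\bm{P}\times\bm{L}$ the free-particle analogue of the Laplace--Runge--Lenz vector from Section~\ref{LRL}. Expanding $\{A_i,A_j\}_B$ by Leibniz using the $e^*(3)$ structure relations $\{P_a,L_d\}=\epsilon_{adk}P_k$ and $\{L_b,L_d\}=\epsilon_{bdk}L_k$ produces three $\epsilon$-contracted terms; the pure $\{L,L\}$ contribution drops out by antisymmetry of $\epsilon$ against the symmetric $P_aP_c$ combined with the Casimir $\bm{P}\cdot\bm{L}=0$, and the other two pieces collapse via standard $\epsilon\epsilon$ contractions to $\{A_i,A_j\}_B = -|\bm{P}|^2\,\epsilon_{ijk}L_k$. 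On the target side, the same constraint gives $\bm{x}\times\bm{y} = \bm{P}\times(\bm{P}\times\bm{L})/|\bm{P}|^2 = -\bm{L}$, so the lower-right block $-\widehat{\bm{x}\times\bm{y}}/|\bm{x}|^2$ of $B_D$ evaluates to $\hat{\bm{L}}/|\bm{P}|^2$, which exactly matches $|\bm{P}|^{-4}\{A_i,A_j\}_B$.

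The hard part is this self-bracket of the LRL-type vector; the cancellations are of the right ``Kepler-like'' flavour but depend delicately on both Casimirs. The conceptual reason that the Dirac structure on $T^*S^2$ appears on the nose, without any extra ``magnetic'' correction terms, is precisely that the Casimir $\bm{P}\cdot\bm{L}=0$ kills the potential correction, in exact agreement with the $b=0$ specialisation of \eqref{eq:BPK} identified earlier in the section.
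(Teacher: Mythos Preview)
Your proposal is correct and follows essentially the same route as the paper: a direct block-by-block verification that the map carries the $e^*(3)$ bracket to the Dirac bracket, with the inverse supplied by the preceding lemma. The only difference is packaging: the paper pushes forward the Poisson tensor via the Jacobian $M$ and phrases the key $(y,y)$ computation as the hat-matrix identity $\hat{\bm{L}}\hat{\bm{P}}^2+\hat{\bm{P}}^2\hat{\bm{L}}-\hat{\bm{P}}\hat{\bm{L}}\hat{\bm{P}}=-\hat{\bm{L}}\,|\bm{P}|^2$, whereas you compute the coordinate brackets $\{A_i,A_j\}$ with $\epsilon$-contractions; these are the same calculation, and your version has the mild advantage of making explicit where the Casimir $\bm{P}\cdot\bm{L}=0$ is used.
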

\begin{proof}
The Jacobian of the mapping is
\[
M = \begin{pmatrix}
id & 0 \\
-\hat{\bm{L}} |\bm{P}|^{-2} - 2  (\bm{P} \times \bm{L}) \bm{P}^t |\bm{P}|^{-4} & \hat{\bm{P}} |\bm{P}|^{-2} 
\end{pmatrix} \,.
\]
Computing $M B M^t$ gives all blocks but the lower right block of $B_{P,K}$ immediately. 
For this block notice the identity $\hat{\bm{L}} \hat{\bm{P}}^2 + \hat{\bm{P}}^2 \hat{\bm{L}} - \hat{\bm{P}} \hat{\bm{L}} \hat{\bm{P}} = -\hat{\bm{L}} |\bm{P}|^2$
(or in cross-product terms 
$\bm{L} \times (\bm{P} \times (\bm{P} \times \bm{v} ))
+\bm{P} \times (\bm{P} \times (\bm{L} \times \bm{v} ))
-\bm{P} \times (\bm{L} \times (\bm{P} \times \bm{v} ))
=
 \bm{L} \times \bm{v}  |\bm{P}|^2
$ for all $\bm{v} \in \R^3$)
while all other terms vanish because $\bm{P}$ is in the kernel of $\hat{\bm{P}}$. 
Now using the map $\mu$ from the Lemma we see that $\bm{P} = \bm{x}$ and $\bm{L}=-\bm{x}\times\bm{y}$ and this gives the result.
\end{proof}

Having established the  equivalence of the Lie-Poisson structure of $e^*(3)$ of the spherical 
harmonics system with the Dirac structure of $T^*S^2$ the question is what the Hamiltonian 
$G$ becomes when interpreted in these terms.

\begin{thm}
The intergrable spheroidal harmonics system of Theorem~\ref{thm:shis} 
with energy $|\bm{P}|=\sqrt{2E}$ is symplectomorphic to the integrable C.~Neumann system 
of a particle constraint to move on the unit sphere $|\bm{x} | = 1$ with a harmonic potential.
In the coordinates $(\bm{x}, \bm{y})$ on $T^*S^2 \in \R^6$ with the Dirac structure \eqref{eq:Dirac} the Hamiltonian 
of the Neumann system is
\[
    G_N = \frac12 ( y_1^2 + y_2^2 + y_3^2)  - E a^2  ( x_1^2 + x_2^2)
\]
with second integral $L_N = -x_1 y_2 + x_2 y_1$.
 \end{thm}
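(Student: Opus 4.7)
The plan is to invoke the symplectomorphism of Proposition~\ref{todirac} to transport the Hamiltonian $G=|\bm{L}|^2-a^2(p_x^2+p_y^2)$ and the integral $L_z$ to $(\bm{x},\bm{y})$ coordinates on $T^*S^2$ with the Dirac structure \eqref{eq:Dirac}, and then to perform a symplectic rescaling of $(\bm{x},\bm{y})$ that moves $\bm{x}$ from the sphere of radius $\sqrt{2E}$ to the unit sphere on which the C.~Neumann system is traditionally formulated.

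First, by Proposition~\ref{todirac} the substitution $\bm{P}=\bm{x}$, $\bm{L}=-\bm{x}\times\bm{y}$ realises the required symplectomorphism, with the Casimir $|\bm{P}|^2=2E$ becoming $|\bm{x}|^2=2E$ and $\bm{P}\cdot\bm{L}=0$ becoming $\bm{x}\cdot\bm{y}=0$. Applying $|\bm{x}\times\bm{y}|^2=|\bm{x}|^2|\bm{y}|^2-(\bm{x}\cdot\bm{y})^2$ together with the second Casimir gives $|\bm{L}|^2=2E|\bm{y}|^2$, while $p_x^2+p_y^2=x_1^2+x_2^2$ and $l_z=-x_1 y_2+x_2 y_1$ are immediate. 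Hence, on this copy of $T^*S^2$,
\[
   G = 2E|\bm{y}|^2 - a^2(x_1^2+x_2^2), \qquad L_z = -x_1 y_2+x_2 y_1.
\]

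Next I introduce the rescaling $\bm{x}\mapsto\bm{x}/\sqrt{2E}$, $\bm{y}\mapsto\sqrt{2E}\,\bm{y}$, sending the sphere $|\bm{x}|^2=2E$ to the unit sphere. A short direct computation shows that this preserves \eqref{eq:Dirac}: the off-diagonal identity blocks are invariant under $(q,p)\mapsto(\lambda^{-1}q,\lambda p)$, and in the lower-right block the extra $\lambda^2$ from the Jacobian is cancelled because $\bm{x}\times\bm{y}$ is unchanged while $|\bm{x}|^2$ scales by $\lambda^{-2}$. Hence the rescaling is a symplectomorphism of reduced phase spaces onto the standard $T^*S^2$. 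The integral $L_z=-x_1y_2+x_2y_1$ is unchanged (each monomial contains one $x_i$ and one $y_j$), and
\[
   G \;=\; |\bm{y}|^2 - 2Ea^2(x_1^2+x_2^2) \;=\; 2G_N.
\]
Replacing $G$ by the equivalent Hamiltonian $\tfrac12 G$, which only rescales time and leaves the Liouville foliation intact, yields exactly the C.~Neumann Hamiltonian $G_N=\tfrac12|\bm{y}|^2-Ea^2(x_1^2+x_2^2)$ with second integral $L_N=L_z$, and the rotational symmetry of the quadratic potential about the $z$-axis identifies this as the degenerate Neumann system.

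The main obstacle is little more than careful bookkeeping: checking that the chosen rescaling is genuinely symplectic for the Dirac structure (not merely for the ambient bracket on $\R^6$), and tracking the constants so that the coefficients of $|\bm{y}|^2$ and $x_1^2+x_2^2$ collapse into precisely the form stated in the theorem. Beyond Proposition~\ref{todirac} there is no new content; the theorem is really the observation that once the reduced phase space is realised as $T^*S^2$ with the Dirac structure, the spheroidal integral $|\bm{L}|^2-a^2(p_x^2+p_y^2)$ is already of Neumann type --- kinetic energy on the sphere plus a quadratic potential invariant under the $S^1$ action generated by $L_z$.
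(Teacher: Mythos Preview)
Your proof is correct and follows essentially the same route as the paper: apply the symplectomorphism of Proposition~\ref{todirac} to write $|\bm{L}|^2=|\bm{x}|^2|\bm{y}|^2$ (using $\bm{x}\cdot\bm{y}=0$), then rescale $(\bm{x},\bm{y})\mapsto(c\bm{x},\bm{y}/c)$ with $c=\sqrt{2E}$ to land on the unit sphere, and divide by $2$. You are somewhat more explicit than the paper in verifying that the rescaling preserves the Dirac structure and in tracking that $L_z=-x_1y_2+x_2y_1$ is unchanged, but the argument is the same.
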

\begin{proof}
We start with an $\bm{x}$ that is not yet restricted to the unit sphere.
The map from Proposition~\ref{todirac} gives $|\bm{L} | = |\bm{x} | |\bm{y}|$, so that 
the the term $|\bm{L}|^2$ in $G$ becomes $ |\bm{x} |^2 |\bm{y}|^2$. 
Finally we do a symplectic scaling 
to the unit sphere, namely $\bm{x} = c \tilde{\bm{x}}$ and $\bm{y} = \tilde{ \bm{y}}/c$ where $c = \sqrt{2 E}$.
Dropping the tildes and dividing by 2 gives $G_N$.
\end{proof}

In its usual form of the Neumann system has a positive attractive potential. This can be adjusted by 
shifting the potential by the constant term $E a^2 |\bm{x}|^2$, such that the shifted potential is $E a^2 x_3^2$.
To keep the analogy with the spheroidal harmonics integrable system we choose not to do this shift.

Note that while in the spheroidal harmonics system $L_z$ is a coordinate after reduction, 
and this coordinate is a constant of motion, 
in the Neumann system the corresponding integral is again the angular momentum 
$x_1 y_2 - x_2 y_1$ about the third axis but here this is a function of the coordinates $\bm{x}$
and $\bm{y}$. Even when interpreting  $L_z$ as a function of the original coordinates
$\bm{Q}$ and $\bm{P}$ before reduction the difference 
is that then $\bm{P}$ was the momentum, while now after renaming $\bm{P}$ as $\bm{x}$ 
this is the coordinate in configuration space.
When considering the units of the quantities defined we see, however, 
that $\bm{x} =\bm{P}$ does have units of momentum while $\bm{y}  = \bm{P} \times ( \bm{Q} \times \bm{P}) |\bm{P}|^{-2}$ has units of length, 
so that $\bm{x} \times \bm{y}$ does have units of angular momentum, except it has the opposite sign:
$\bm{x} \times \bm{y} = \bm{P} \times ( \bm{P} \times ( \bm{Q} \times \bm{P}) |\bm{P}|^{-2}= -\bm{Q} \times \bm{P}$.

We can introduce spherical coordinates on the unit sphere by
\[
   x_{1}=\sin\theta\cos\phi,  \quad x_{2}=\sin\theta\sin\phi, \quad x_{3}=\cos\theta
\]
which transforms the Hamiltonian $G_N$ to 
\begin{equation}
G_N\left(\theta,\phi,p_{\theta},p_{\phi}\right)=\frac{1}{2}\left(p_{\theta}^{2}+\frac{p_{\phi}^{2}}{\sin^{2}\theta}\right)-E a^2  \sin^{2}\theta
\label{eq:G local}
\end{equation}
where  $p_{\theta}^{2}=\frac{y_{3}^{2}}{\sqrt{1-x_{3}^{2}}}$ and  $p_{\phi}=x_{1}y_{2}-x_{2}y_{1}$ are canonically conjugate momenta to $\theta$ and $\phi$, respectively. 

Thus we see that separation of the (rotationally symmetric) Neumann system in spherical coordinates 
leads to the same Hamiltonian as the prolate spheroidal harmonics system obtained from separation in $\R^3$
in prolate spheroidal coordinates.
A corresponding statement holds for the quantum systems.
Since the phase space $T^{*}S^{2}$ is a
cotangent bundle, Weyl quantization maps the coordinate variables
$(x_{i},y_{i})$ to the operators $(x_{i},\frac{\hbar}{i}\frac{\partial}{\partial x_{i}})$.
The operator corresponding to the Hamiltonian $G_{N}$ is 
\[
2 \hat{G}_{N}=-\hbar^{2}\nabla_{S^{2}}-2E a^{2}\sin^{2}\theta 
\]
which for $\hbar = 1$ can be seen to be the same as \eqref{eq:sep} by making the substitution
$\eta=\cos\theta$.
We close this section by showing the graph of a spheroidal wave function for $m=2, l=4$ and a contour 
plot of the real part of the corresponding spheroidal harmonic $Z_l^m$ on the sphere, along with the spherical harmonic $Y_l^m$ for comparison. Since the potential has its maximum at the poles 
(and its minimum along the equator) the wave function is ``repelled'' from the poles.

\begin{figure}
	\begin{centering}
		\includegraphics[width=8cm]{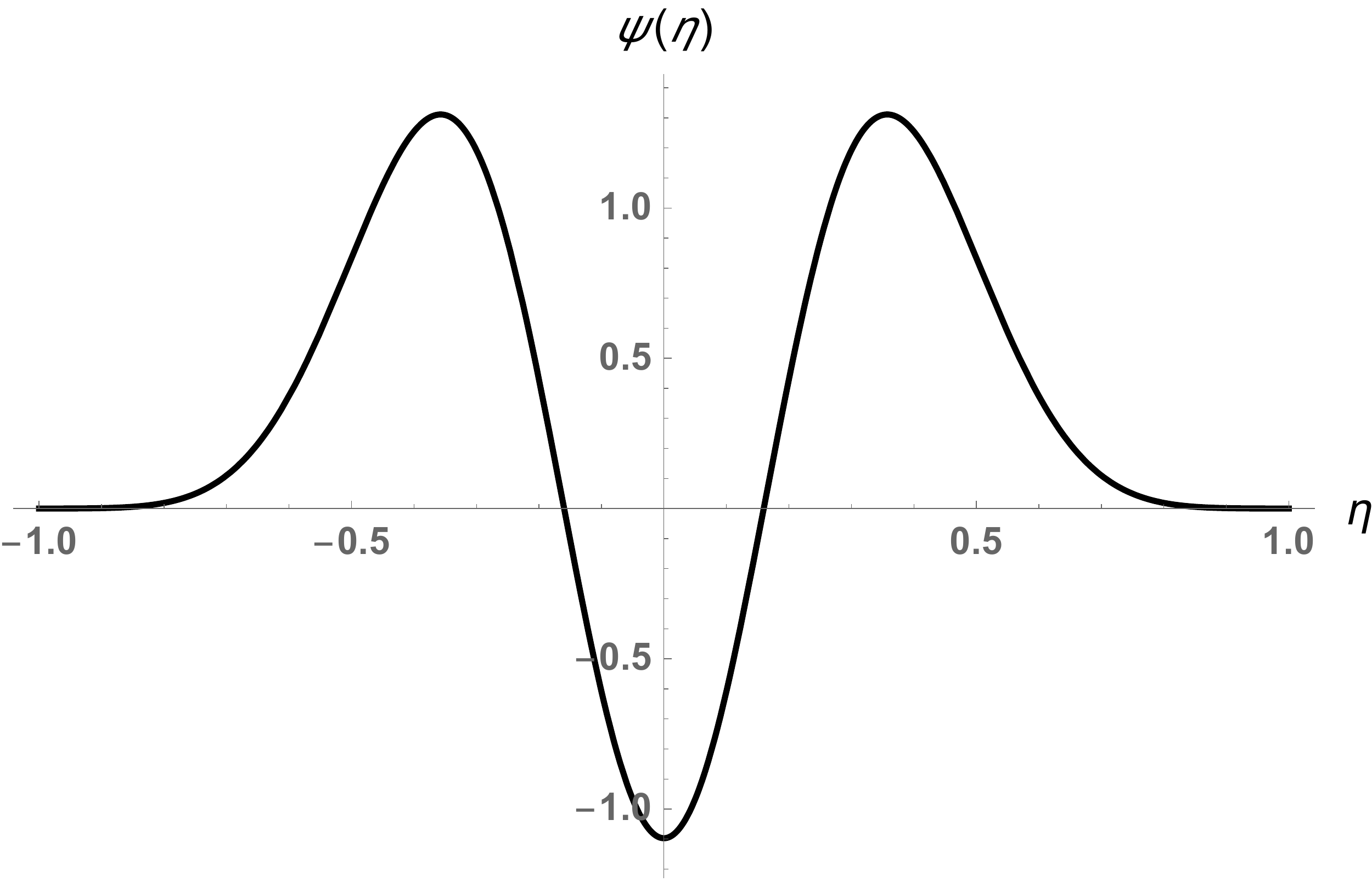}
		\par\end{centering}
	\caption{ Spheroidal wave function with $(n,m,\gamma)=(4,2,20)$.
	}
	 \label{fig:Wave function ellipsoid}
\end{figure}

\begin{figure}
	\begin{centering}
		\includegraphics[width=7cm]{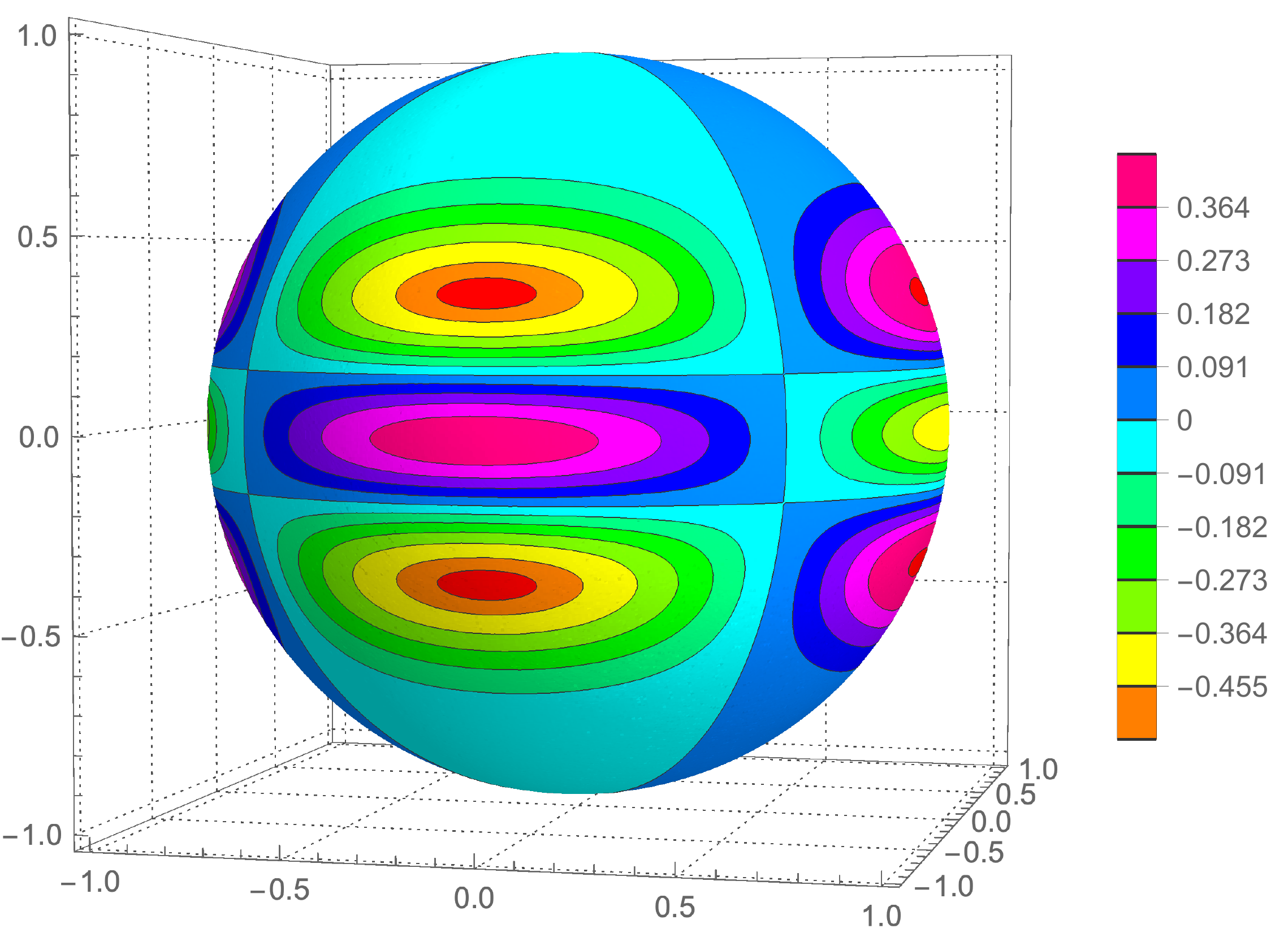}\includegraphics[width=7cm]{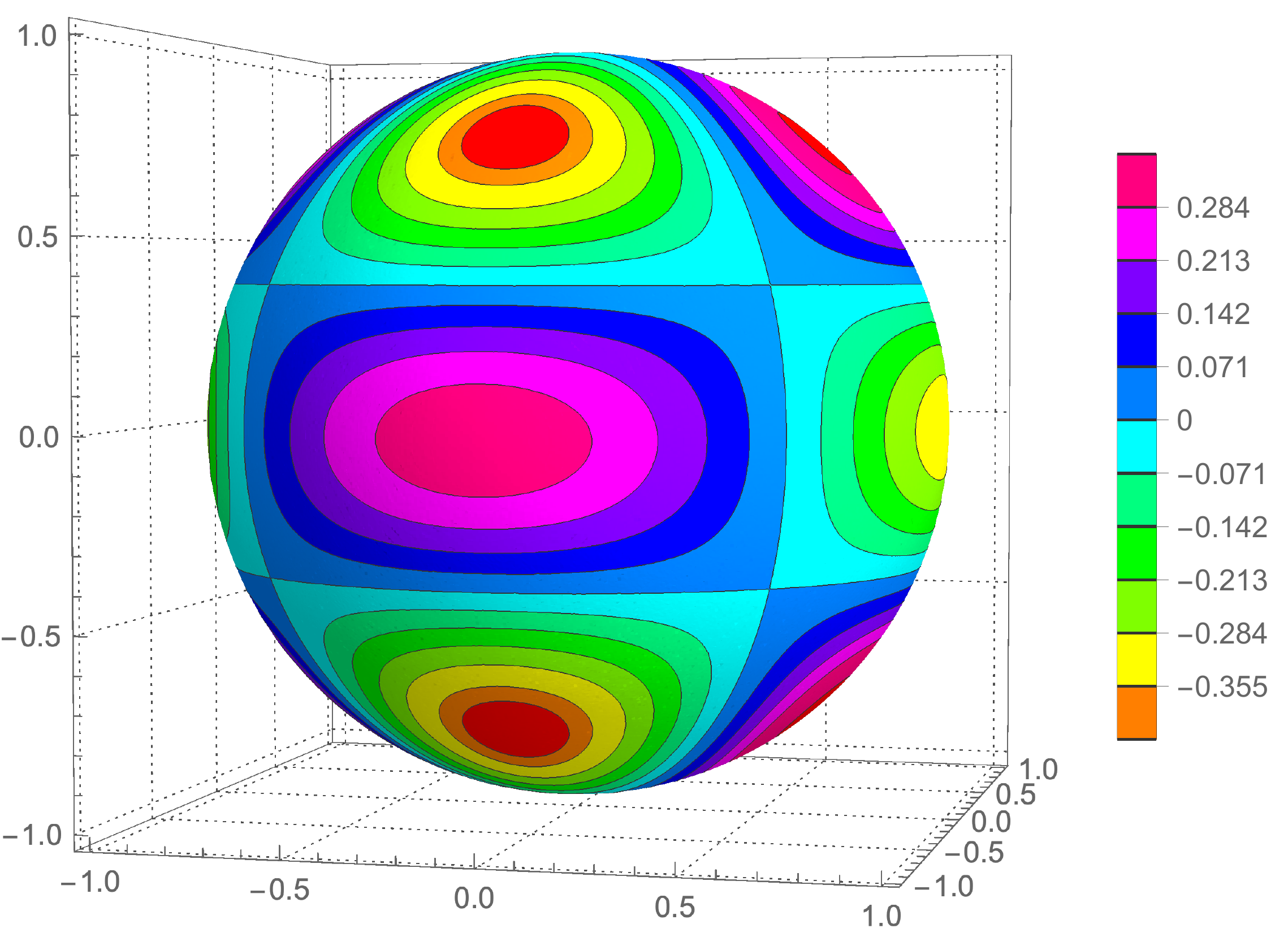}
		\par\end{centering}
	\caption{a)  The spheroidal harmonic $Z_4^2(\theta, \phi)$ with $\gamma = 20$. 
	b) The spherical harmonic $Y_{4}^{2}\left(\theta,\phi\right)$ for comparison.
		\label{fig:WFonsphere}}
\end{figure}

\section{Momentum map of the spheroidal harmonics systems}
\label{moma}

We are now going to analyse the global geometry of the singular Liouville foliation 
of the integrable spheroidal harmonics system. In a number of steps we will prove 
\begin{thm}
The spheroidal harmonics integrable system is a generalised semi-toric system with global $S^1$ action $L_z$.
The momentum map $F=(L_z, G) : T^*S^2 \to \R^2$ has two isolated co-rank 2 critical points 
$\bm{P} = \pm \bm{e}_z \sqrt{2E}$, $\bm{L} = \bm{0}$
and a family of co-rank 1 critical points 
$\bm{P} = \sqrt{2E}( \cos\phi, \sin\phi, 0)^t$, $\bm{L} = \bm{e}_z m$, $\phi \in S^1$, $m \in \R$.
The image of the co-rank 2 critical points is the critical value $(0,0)$, 
which is a non-degenerate focus-focus value and $F^{-1}(0,0)$ is a doubly pinched torus.
The image of the co-rank 1 critical points is the parabola $(m, m^2 - 2Ea^2)$,
points on which are of elliptic-transversal type and $F^{-1}(m, m^2-2Ea^2)$ is a periodic orbit consisting of 
co-rank 1 critical points parametrised by $\phi$. The pre-image of each regular value of $F$ is 
a single torus $\T^2$.
\end{thm}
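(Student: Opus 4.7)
The plan is to read off the critical set of $F=(L_z,G)$ from the explicit vector fields \eqref{eq:S1flow} and \eqref{eqn:GVF}, verify Eliasson non-degeneracy at each stratum, and then reconstruct the fibres via the global $S^1$-reduction by $L_z$. First, $X_{L_z}=0$ forces $\bm P\parallel\bm e_z$ and $\bm L\parallel\bm e_z$; combined with $|\bm P|^2=2E$ and $\bm P\cdot\bm L=0$ this gives the two poles $\bm P=\pm\sqrt{2E}\bm e_z$, $\bm L=\bm 0$, both mapping to $F=(0,0)$. For co-rank-$1$ points I would solve $X_G=\lambda X_{L_z}$: matching the $\dot{\bm P}$-components forces $\bm L-\lambda\bm e_z\parallel\bm P$, and combining $\bm P\cdot\bm L=0$ with the $\dot{\bm L}$-equation then forces (away from the poles) $p_z=0$ and $\bm L=\lambda\bm e_z$, which is exactly the claimed equatorial family, with image $g=m^2-2Ea^2$ since $p_x^2+p_y^2=2E$ on the equator.

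For non-degeneracy at the north pole I would pass to the Neumann picture via Proposition~\ref{todirac} and use local Darboux coordinates $(x_1,x_2,y_1,y_2)$ on $T^*S^2$ near $\bm x=(0,0,1)$, with $x_3,y_3$ eliminated by $|\bm x|=1$ and $\bm x\cdot\bm y=0$. The quadratic parts at the origin are
\[
   G_N^{(2)}=\tfrac12(y_1^2+y_2^2)-Ea^2(x_1^2+x_2^2),\qquad L_z^{(2)}=x_2y_1-x_1y_2,
\]
so the linearisation of $X_{\alpha G_N+\beta L_z}$ takes the block-tensor form $I_2\otimes\alpha M+J\otimes\beta I_2$, where $M$ has eigenvalues $\pm\omega$ with $\omega=\sqrt{2Ea^2}$ and $J$ has eigenvalues $\pm i$. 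The two tensor factors commute, so the four eigenvalues are $\pm\alpha\omega\pm i\beta$, the focus-focus pattern $\pm a\pm ib$ with $ab\neq 0$ for generic $(\alpha,\beta)$. Since $G$ differs from $G_N$ by an overall positive scalar and the two Hessians are linearly independent, this gives Eliasson non-degeneracy, so each pole is a non-degenerate focus-focus singularity; the south pole is handled by the discrete symmetry $S_1$.

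For the co-rank-$1$ family I would reduce by the global $S^1$-action of $L_z$ at level $L_z=m$. Using $(\eta,p_\eta)$ from the separation \eqref{eqn:Gsep} as Darboux coordinates on the regular part of the reduced space, the reduced Hamiltonian is
\[
    G_m(\eta,p_\eta)=(1-\eta^2)(p_\eta^2-2Ea^2)+\frac{m^2}{1-\eta^2},
\]
which has a unique interior critical point at $(\eta,p_\eta)=(0,0)$ with value $m^2-2Ea^2$ and positive-definite Hessian $\mathrm{diag}\bigl(2(2Ea^2+m^2),\,2\bigr)$. This is a non-degenerate elliptic minimum in the reduced system; reconstructing by the $S^1$-action yields the elliptic-transversal normal form and shows that the preimage of a point on the parabola is a single periodic orbit of the $L_z$-flow parametrised by $\phi$.

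It remains to describe the fibres. Properness of $F$ follows from $|\bm L|^2=G+a^2(p_x^2+p_y^2)\le G+2Ea^2$ together with $|\bm P|^2=2E$, so preimages of compact sets are compact. For a regular value $(m,g)$ with $g>m^2-2Ea^2$ the reduced level set $\{G_m=g\}$ is a single smooth oval (the effective potential $V_{\mathrm{eff}}(\eta)=m^2/(1-\eta^2)-2Ea^2(1-\eta^2)$ is single-welled on $(-1,1)$), and the free $S^1$-reconstruction yields $F^{-1}(m,g)\cong\T^2$. For the critical value $(0,0)$ the fibre is analysed directly: parametrising $\bm P=\sqrt{2E}(\sin\theta\cos\phi,\sin\theta\sin\phi,\cos\theta)$ and solving $l_z=0$, $\bm P\cdot\bm L=0$, $|\bm L|^2=a^2(p_x^2+p_y^2)$ yields $\bm L=\epsilon a\sqrt{2E}\sin\theta\,(-\sin\phi,\cos\phi,0)$ with $\epsilon\in\{\pm 1\}$. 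Each $\epsilon$ labels a cylinder over $(\theta,\phi)\in(0,\pi)\times S^1$; the $\phi$-circle collapses to the relevant pole as $\theta\to 0,\pi$, producing a two-sphere, and the two spheres meet exactly at the two pole points, producing a doubly pinched torus. The hardest step is this last global topology identification, and the explicit parametrisation above (together with the discrete symmetry $S_2$ interchanging the two poles) is what makes it transparent.
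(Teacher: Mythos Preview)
Your architecture matches the paper's: locate the critical set from the explicit vector fields, check non-degeneracy, then analyse fibres by $S^1$-reduction and reconstruction. The implementations differ. For non-degeneracy you pass to the Neumann model and use local Darboux coordinates $(x_1,x_2,y_1,y_2)$ near a pole, obtaining the focus-focus quartet $\pm\alpha\omega\pm i\beta$ from the commuting tensor factors; the paper instead linearises $B\nabla(G+\beta L_z)$ directly in the six-dimensional Lie--Poisson frame and reads off $\lambda=\pm ap_z\pm i\beta$. For the reduction you use the separation chart $(\eta,p_\eta)$; the paper uses singular reduction via the polynomial invariants $b_1=p_z/\sqrt{2E}$, $b_2=l_x^2+l_y^2$, $b_3=(l_xp_y-l_yp_x)/\sqrt{2E}$ subject to the syzygy $(1-b_1^2)b_2-b_1^2m^2-b_3^2=0$. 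Your route is more hands-on; the paper's is globally uniform.

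There is one genuine gap in your fibre-connectedness step, precisely at $m=0$, $g>0$. The separation chart breaks down over the poles of the $\bm P$-sphere, and the fibre $F^{-1}(0,g)$ \emph{does} meet those poles: the circles $\bm P=\pm\sqrt{2E}\,\bm e_z$, $\bm L=(l_x,l_y,0)$ with $l_x^2+l_y^2=g$ lie on it. In your chart $G_0(\eta,p_\eta)=(1-\eta^2)(p_\eta^2-\gamma^2)$, so for $g>0$ the level set is the pair of unbounded branches $p_\eta=\pm\sqrt{\gamma^2+g/(1-\eta^2)}$, not a single oval; the ``single-well'' heuristic fails because the kinetic coefficient $1-\eta^2$ degenerates at the endpoints. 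The two branches are in fact glued together through the pole circles, but one needs a chart that sees this. In the paper's invariant coordinates the reduced level curve is $b_2=g+\gamma^2(1-b_1^2)$, $b_3^2=(1-b_1^2)b_2$ on $b_1\in[-1,1]$, manifestly a single smooth loop for every regular value including $m=0$; this is exactly what the invariant picture buys. Your argument is fine for $m\neq 0$ (the pole is not on the momentum level) and for $m=0$, $g<0$ (the oval stays in $|\eta|<1$); only the strip $m=0$, $g>0$ needs a separate patch.
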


\begin{figure}
	\begin{centering}
		\includegraphics[width=9cm]{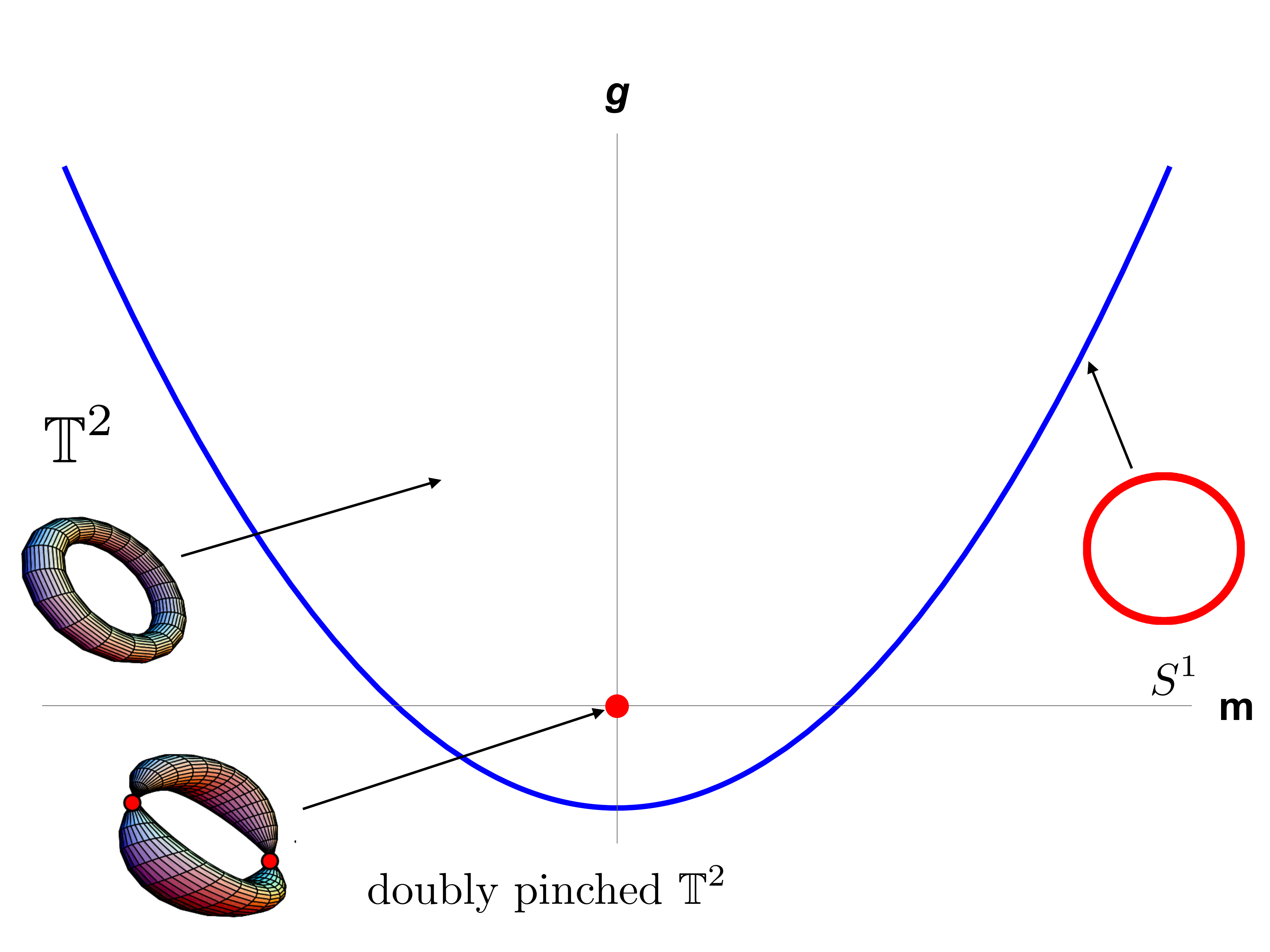}
		\par\end{centering}
	\caption{Bifurcation diagram of the spheroidal harmonics integrable system.}
	\label{fig:momentum map}
\end{figure}
The system will be analysed using singular reduction (using invariants) and regular reduction 
(using global but singular canonical coordinates) and reconstruction to understand the fibres 
of the momentum map. In particular we will show that the focus-focus critical value 
is non-degenerate and hence there is Hamiltonian monodromy in the classical system
invoking \cite{Matveev96,Zung97}. In particular this also implies the existence of quantum monodromy 
in the semiclassical limit as shown in general by San Vu Ngoc in \cite{VuNgoc99}.

We already know a symmetry reduced description \eqref{eqn:Gsep} from 
separation of variables, albeit in singular coordinates. 
Eqn.~\eqref{eqn:Gsep} is connected to the Neumann system 
\eqref{eq:G local} 
via the transformation 
$\eta = \cos\theta$. 
Setting $\hbar = 1$ we have  $l_z =  m$ and arrive at the one degree of freedom Hamiltonian
\begin{equation}
\label{eqn:Gga}
     G(q, p) = (1 - q^2) ( p^2 - \gamma^2) + \frac{m^2}{1 - q^2} \,.
\end{equation}
There is a coordinate singularity at $|q| = 1$. 
The phase portrait of this reduced Hamiltonian is shown in Fig.~\ref{fig:Glevel}.
Away from the singularity there is an equilibrium at the origin with critical value $G(0,0) = m^2 - \gamma^2$.
This gives the line of critical values $g = m^2 - \gamma^2$ in the bifurcation diagram 
Fig.~\ref{fig:momentum map}. The corresponding motion in the original system in Euclidean 
coordinates is a periodic orbit along the equator of the sphere, as already discussed in 
section~\ref{sec:SHIS}.
The parabola of critical values $g = m^2 - \gamma^2$ is also the lower boundary of the joint spectrum 
and is hence shown in Fig.~\ref{fig:Spheroidal-Eigenvalue}.

\begin{figure}
\centering{ \includegraphics[width=5cm]{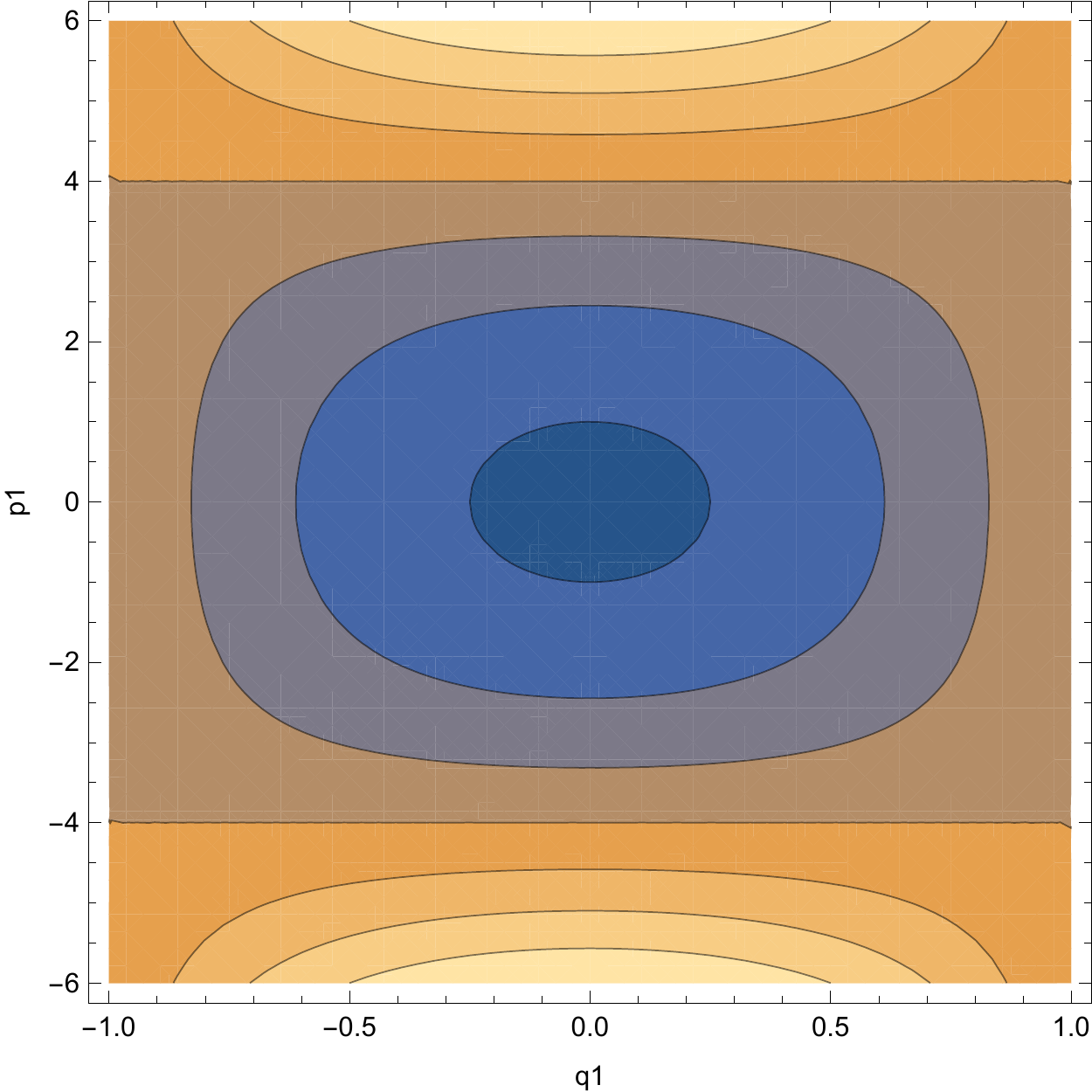} \includegraphics[width=5cm]{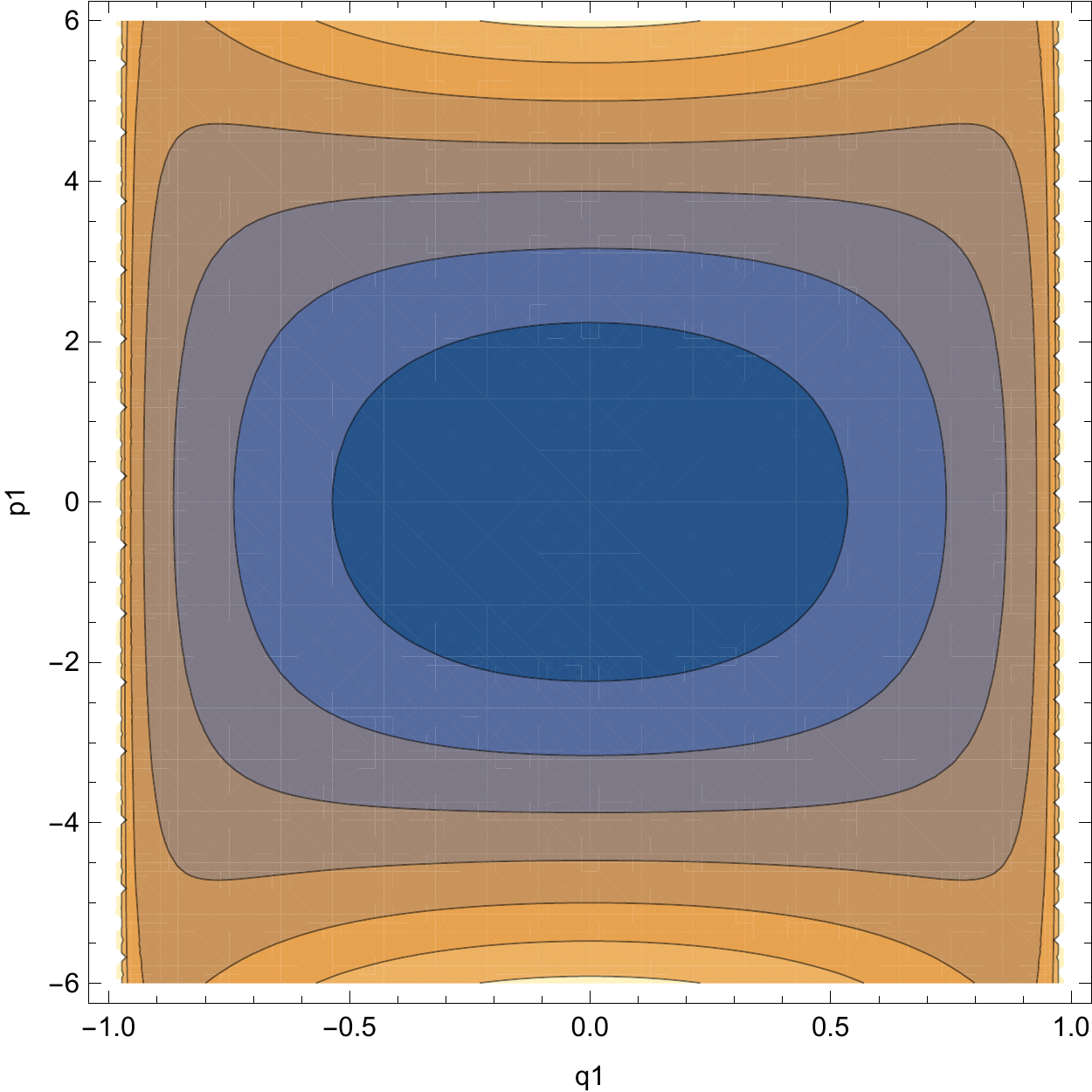} }
\caption{Level lines of $G(q, p)$ for $m = 0$ (left) and $m = 1$ (right), $\gamma = 4$.}
\label{fig:Glevel}
\end{figure}

Since the coordinate system from the separation of variables is singular along 
the $z$-axis we now use singular reduction starting from the global Euclidean 
description in $(\bm{P}, \bm{L}) \in \R^6$ to understand the global dynamics.

\begin{lem} \label{lem:reduction}
Reduction of the spheroidal harmonics system of Theorem~\ref{thm:shis} 
by the global $S^1$ symmetry leads to a Poisson structure in $\R^3$ with coordinates $(b_1, b_2, b_3)$. 
The reduction map $T^*S^2 \to \R^3$ for $|\bm{P}| = \sqrt{ 2E}$ is given by 
\[
    b_1 = \frac{p_z}{\sqrt{2E}} , \qquad
    b_2 =  l_x^2 + l_y^2, \qquad  
    b_3 =  \frac{ l_x p_y - l_y p_x}{ \sqrt{2E} } \,.
\]
with syzygy
\[
       C_3(b_1, b_2, b_3) =  (1 - b_1^2) b_2 - b_1^2 m^2 - b_3^2 = 0 \,.
\]
The Poisson tensor is $\widehat{ \nabla C_3 }$.
\end{lem}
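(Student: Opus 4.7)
The plan is standard singular (invariant-theoretic) reduction. First I would identify the $S^1$ action: the Hamiltonian $L_z$ generates the flow \eqref{eq:S1flow}, which rotates the pair $(p_x,p_y)$ and the pair $(l_x,l_y)$ simultaneously by the same angle while leaving $p_z$ and $l_z$ fixed. By classical invariant theory of $SO(2)$ acting on $\R^2 \oplus \R^2$, the ring of polynomial invariants on $\R^6$ is generated by the two linear invariants $p_z,l_z$ and the four quadratic invariants $p_x^2+p_y^2$, $l_x^2+l_y^2$, $p_x l_x+p_y l_y$, $l_x p_y-l_y p_x$.

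Next I would impose the level conditions that cut out the reduced space. On the symplectic leaf we have the two Casimirs $|\bm{P}|^2 = 2E$ and $\bm{P}\cdot\bm{L} = 0$, and we fix the momentum $L_z = l_z = m$. These three conditions let me eliminate three of the six generators above:
\begin{equation*}
 p_x^2 + p_y^2 = 2E - p_z^2, \qquad p_x l_x + p_y l_y = -p_z m, \qquad l_z = m.
\end{equation*}
The remaining independent $S^1$-invariants are $p_z$, $l_x^2+l_y^2$, and $l_x p_y-l_y p_x$, which after the indicated rescalings become $b_1,b_2,b_3$. The syzygy then follows immediately from the Lagrange identity
\begin{equation*}
(l_x p_y-l_y p_x)^2 + (l_x p_x + l_y p_y)^2 = (l_x^2+l_y^2)(p_x^2+p_y^2),
\end{equation*}
into which the above three eliminations substitute directly to give $2E\,b_3^2 + 2E\,b_1^2 m^2 = 2E(1-b_1^2)b_2$, i.e.\ $C_3 = 0$.

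For the Poisson structure, I would simply compute the three brackets $\{b_i,b_j\}$ using the ambient bracket $B$ from \eqref{eq:Bpoi} (for which the basic relations are $\{p_i,l_j\}=\epsilon_{ijk}p_k$ and $\{l_i,l_j\}=\epsilon_{ijk}l_k$), and re-express the results in terms of $(b_1,b_2,b_3)$ and $m$ using the same eliminations. For instance $\{b_1,b_2\} = \tfrac{1}{\sqrt{2E}}\{p_z,l_x^2+l_y^2\} = \tfrac{2}{\sqrt{2E}}(l_x p_y - l_y p_x) = 2b_3$, and analogous short calculations yield $\{b_3,b_1\} = b_1^2 - 1$ and $\{b_2,b_3\} = 2b_1(b_2+m^2)$. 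Comparing with $\nabla C_3 = (-2b_1(b_2+m^2),\,1-b_1^2,\,-2b_3)$ shows that the induced bracket is $\widehat{\nabla C_3}$ in the sense $\{b_i,b_j\} = -\epsilon_{ijk}(\partial C_3/\partial b_k)$, as claimed, so $C_3$ is automatically a Casimir of the reduced structure and the reduced space is the semi-algebraic surface $\{C_3=0\}\subset\R^3$.

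The main obstacle is essentially bookkeeping: keeping the sign conventions of the hat map and of $B$ consistent with the formulas $\{p_i,l_j\}=\epsilon_{ijk}p_k$, and checking that the three quadratic brackets really do close on polynomials in $(b_1,b_2,b_3,m)$ after using the Casimir and momentum level relations (and not merely on the quadratic invariants before reduction). Everything else—generation of the invariant ring, the syzygy, and the identification of the reduced Poisson tensor with $\widehat{\nabla C_3}$—is then a direct computation.
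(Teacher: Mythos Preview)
Your proposal is correct and follows essentially the same route as the paper: list the $S^1$-invariants of the simultaneous rotation (the paper uses the complex shorthand $p_w=p_x+ip_y$, $l_w=l_x+il_y$, but the content is identical), eliminate three of them using the two Casimirs and $l_z=m$, obtain the syzygy from the Lagrange identity $|p_w\bar l_w|^2=|p_w|^2|l_w|^2$, and verify the three brackets $\{b_i,b_j\}$ against $\partial C_3/\partial b_k$. Your explicit bracket values and $\nabla C_3$ agree with the paper's, so there is nothing to add.
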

\begin{proof}
The global $S^1$ action $L_z$ as a Hamiltonian with respect to the Poisson structure $B$ 
generates a rotation in the first two components of $\bm{P}$ and $\bm{L}$ and fixes the third component, 
see \eqref{eq:S1flow}.
Thus $p_z$ and $l_z$ are invariant under this symmetry.
Introducing $p_w = p_x + i p_y$ and $l_w = l_x + i l_y$ the $S^1$ action is  multiplication of $p_w$ and $l_w$ by $e^{i\phi}$. 
Any polynomial of $p_z$ and $l_z$ is also invariant.
Additional quadratic polynomial invariants are $|p_w|^2$, $|l_w|^2$ and the real and imaginary part of $p_w \bar l_w$.  
All other polynomial invariants are functions of these 6 invariants, 2 linear and 4 quadratic. 
The Casimirs of the Poisson structure $B$ expressed in these invariants read $|p_w|^2 + p_z^2 = 2E$ 
and $\Re( p_w \bar l_w) + p_z l_z = 0$ and can be used to eliminate 
$|p_w|^2$ and $\Re( p_w \bar l_w)$ wherever they appear.
As before we set $l_z = m$ where $m$ is now considered as a parameter.
In addition we scale the momentum with $\sqrt{2E}$ as for the transformation 
to the Neumann system.  
The remaining invariants are denoted by $b_i$ where $b_2 =  |l_w|^2$
and $b_3 =  \frac{ \Im( p_w \bar l_w) }{ \sqrt{2E} } $. 
This gives the stated reduction map. 
The invariants satisfy $|b_1| \le 1$ and $b_2 \ge 0$ by construction.
The identity  $\Re( p_w \bar l_w)^2 + \Im( p_w \bar l_w)^2 = | p_w \bar l_w|^2 = |p_w|^2 |l_w|^2$ 
rewritten in terms of the invariants gives $C_3 = 0$.
A fundamental property of invariants is that their Poisson bracket is again an invariant.
By using the original Poisson structure $B$ in the original variables $(\bm{P}, \bm{L})$ 
one can verify that
\[
  \{ b_1, b_2 \} = 2 b_3, \quad
  \{ b_1, b_3 \} = 1 - b_1^2, \quad
  \{ b_2, b_3 \} = 2 b_1 m^2 + 2 b_1 b_2 \,.
\]
The right hand sides are given by the derivatives $\partial C_3 / \partial b_i$,
such that the reduced Poisson structure is $\widehat{ \nabla C_3}$ as claimed.
By construction then $C_3$ is a Casimir of the reduced Poisson structure.
Since this encodes an identity between invariants (a so-called syzygy) the value
of $C_3$ must be zero. 
\end{proof}

\begin{figure}
\noindent \begin{centering}
\includegraphics[height=5cm]{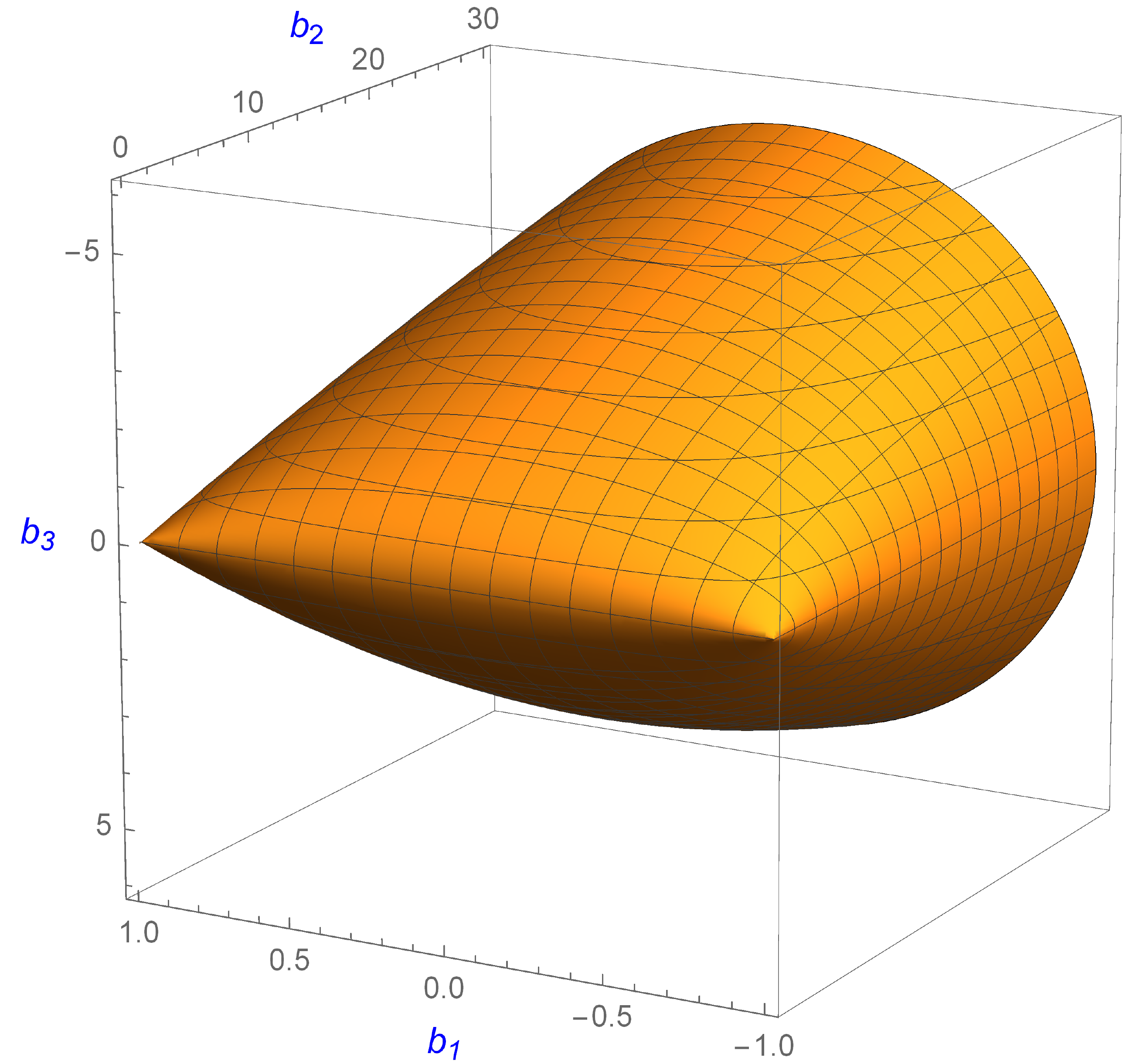}
\includegraphics[height=5cm]{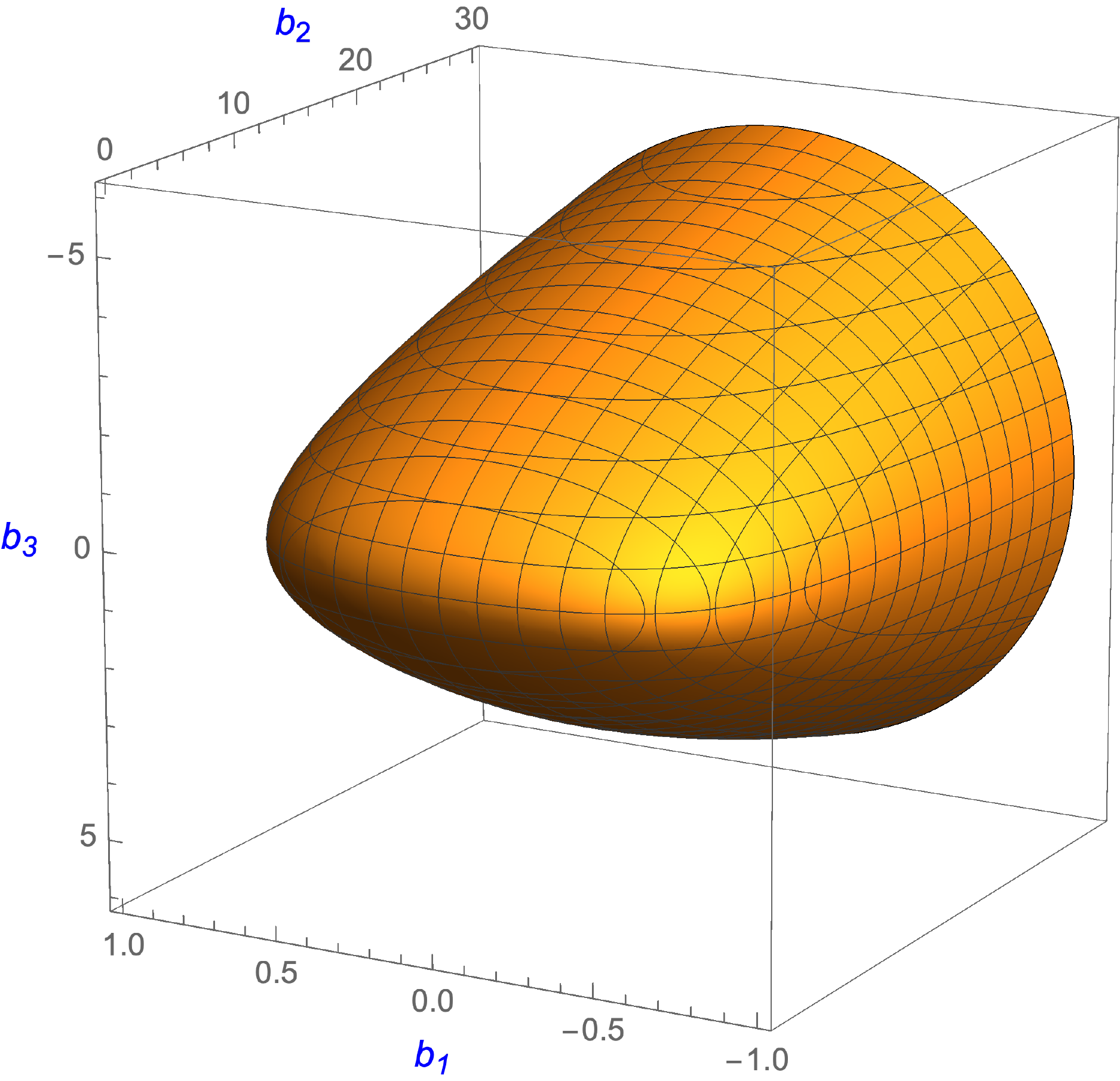}
\par\end{centering}
\caption{a) The singular reduced phase space $P_{m=0}$ 
with two singular points at $(b_1, b_2, b_3) =(\pm1, 0, 0)$;
b) A regular reduced phase space $P_{m=2}$ with non zero $m$. \label{fig:-pillow} }
\end{figure}


The invariants can of course also be written in the 
coordinates $(\bm{x}, \bm{y})$ of the Neumann system on the unit sphere
where they look more natural as
\[
b_1 =  x_3, \qquad
b_2 =  y_1^2 + y_2^2, \qquad  
b_3 = y_1 x_2 - y_2 x_1 \,.
\]
The points $\bm{P}=(0,0,\pm\sqrt{2E})$ and $\bm{L}=(0,0,0)$ are fixed under rotations about the third axis.
Hence the global $S^1$ action has fixed points and the symmetry reduced phase space is not in general a smooth manifold. 
This is the reason that we are using singular reduction. This fixed point occurs for $l_z = m = 0$ and 
its image under the reduction map is $(\pm1, 0, 0)$. We now verify that these are exactly the singular 
points of the reduced phase space.
\begin{lem}
The reduced phase space $P_{m} = \{ (b_1, b_2, b_3) \mid C_3 = 0, b_2 \ge 0, b_1^2 \le 1 \}$
is a smooth surface for $m \not = 0$ and a singular semi-algebraic variety with 
two conical singularities at $(b_1, b_2, b_3) = ( \pm 1, 0, 0)$ for $m=0$.
\end{lem}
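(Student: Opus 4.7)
The plan is to apply the regular value theorem to the polynomial $C_3(b_1,b_2,b_3) = (1-b_1^2)b_2 - b_1^2 m^2 - b_3^2$ on the locus defined by the inequality constraints $b_2 \ge 0$ and $|b_1| \le 1$, and then to analyse the local structure at whatever points fail to be regular.

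\textbf{Step 1: locate the critical points.} I would compute
$\nabla C_3 = \bigl(-2b_1(b_2+m^2),\; 1-b_1^2,\; -2b_3\bigr)$
and find the points of the level set $\{C_3=0\}$ on which it vanishes. The third component forces $b_3=0$, the second forces $b_1=\pm 1$, and then the first forces $b_2=-m^2$. Combined with $b_2\ge 0$, this is admissible only when $m=0$, and then yields the two candidate points $(\pm 1,0,0)$; both satisfy $C_3=0$, so they lie on $P_0$.

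\textbf{Step 2: smoothness for $m\neq 0$.} For $m\ne 0$ no point of $P_m$ is critical for $C_3$, so by the regular value theorem $P_m$ is a smooth $2$-surface in $\R^3$. I would additionally check that the inequality constraints do not introduce a boundary: from $C_3=0$ we may write $b_2=(m^2 b_1^2 + b_3^2)/(1-b_1^2)$, which is nonnegative whenever $|b_1|<1$, and the limit $|b_1|\to 1$ is inaccessible because it would require both $b_3\to 0$ and $b_2\to\infty$. Thus $P_m$ is closed in $\R^3$ without boundary.

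\textbf{Step 3: conical singularities for $m=0$.} By the symmetry $b_1 \mapsto -b_1$ it suffices to analyse $(1,0,0)$. Setting $b_1 = 1+\varepsilon$, the defining equation becomes $-\varepsilon(2+\varepsilon)\, b_2 - b_3^2 = 0$, whose quadratic part $-2\varepsilon b_2 - b_3^2$ is an indefinite quadratic form on $\R^3$ of signature $(1,2)$. Its zero set is a quadric double cone, and by a Morse (splitting-lemma) argument near the isolated critical value $0$ of $C_3|_{m=0}$, this quadric cone is the local model for $\{C_3=0\}$ up to a smooth change of coordinates. Intersecting with the admissible half-space $b_2 \ge 0$ selects exactly one nappe of the double cone, which is a topological cone on a circle. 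This exhibits $(1,0,0)$ (and by symmetry $(-1,0,0)$) as a genuine conical singularity.

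\textbf{Main obstacle.} The routine part is computing the gradient and reading off the candidate singular points; the genuine subtlety is verifying that the inequality $b_2\ge 0$ picks out precisely one nappe of the quadric cone so that the local model really is a cone on $S^1$, and that no further degeneracy is hidden in the cubic correction $-\varepsilon^2 b_2$. Both can be handled by parametrising $b_2$ as a function of $(\varepsilon,b_3)$ on each nappe and checking that the Morse change of coordinates respects the half-space constraint to leading order.
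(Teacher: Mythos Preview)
Your proposal is correct and follows essentially the same approach as the paper: compute $\nabla C_3$, observe it vanishes only at $(\pm 1,0,0)$ when $m=0$, and invoke the regular value theorem for $m\neq 0$. Your Steps~2 and~3 (checking that the inequalities introduce no boundary, and analysing the quadratic part of $C_3$ near the singular points to confirm the cone structure) go beyond what the paper actually writes down---the paper is content to locate the singular points and refer to a figure---so your version is more thorough but not different in spirit.
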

\begin{proof}
The reduced phase space is the subset of $\R^3$ with coordinates 
$b_1, b_2, b_3$ for which the syzygy Casimir is satisfied, $C_3 = 0$,
and in addition the inequalities $b_2 \ge 0$ and $b_1^2 \le 1$ hold.
Singular points occur when $\partial C_3 / \partial b_i = 0$ which 
implies $b_3 = 0$, $b_1 = \pm 1$ and $b_2 = -m^2$, which is 
only possible for $m = b_2 = 0$. Thus for $m=0$ the 
variety $\{ C_3 = 0 \} $ is not a smooth manifold, 
but has two singular points at $(\pm1, 0, 0)$, see Figure \ref{fig:-pillow}.
For $m \not = 0$ it is a smooth manifold. The inequalities select one
connected component.
\end{proof}

The next step is the analysis of the dynamics of the reduced system.
We write the Hamiltonian $G$ of \eqref{eqn:Gorg} in terms of invariants as
\begin{equation} \label{eqn:invariantG}
     G (b_1,b_2,b_3) =  b_2 + m^2 - \gamma^2( 1 - b_1^2) 
\end{equation}
using $l_z = m$ and $\gamma = 2 E a^2$ with $\hbar = 1$.
The trajectories of the reduced system are given by the
intersection of the  reduced ``energy surface" $\{ G = g \}$ with reduced phase space $P_m$.
This leads to the description of the image of the momentum map $(L_z, G)$, see Fig.~\ref{fig:momentum map}.

%
%
%
\begin{figure}
\noindent \begin{centering}
\includegraphics[width=5cm]{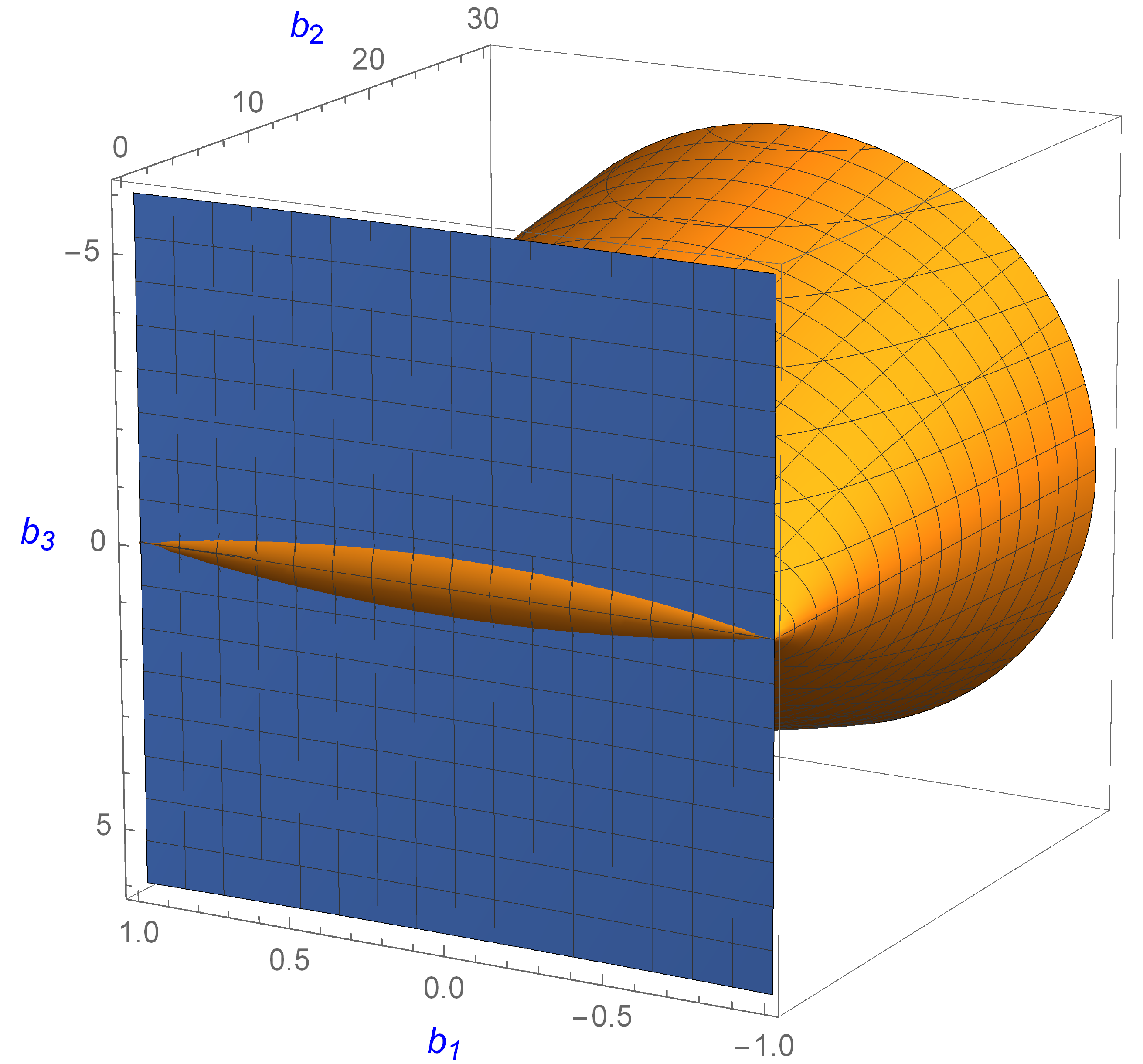}
\includegraphics[width=5cm]{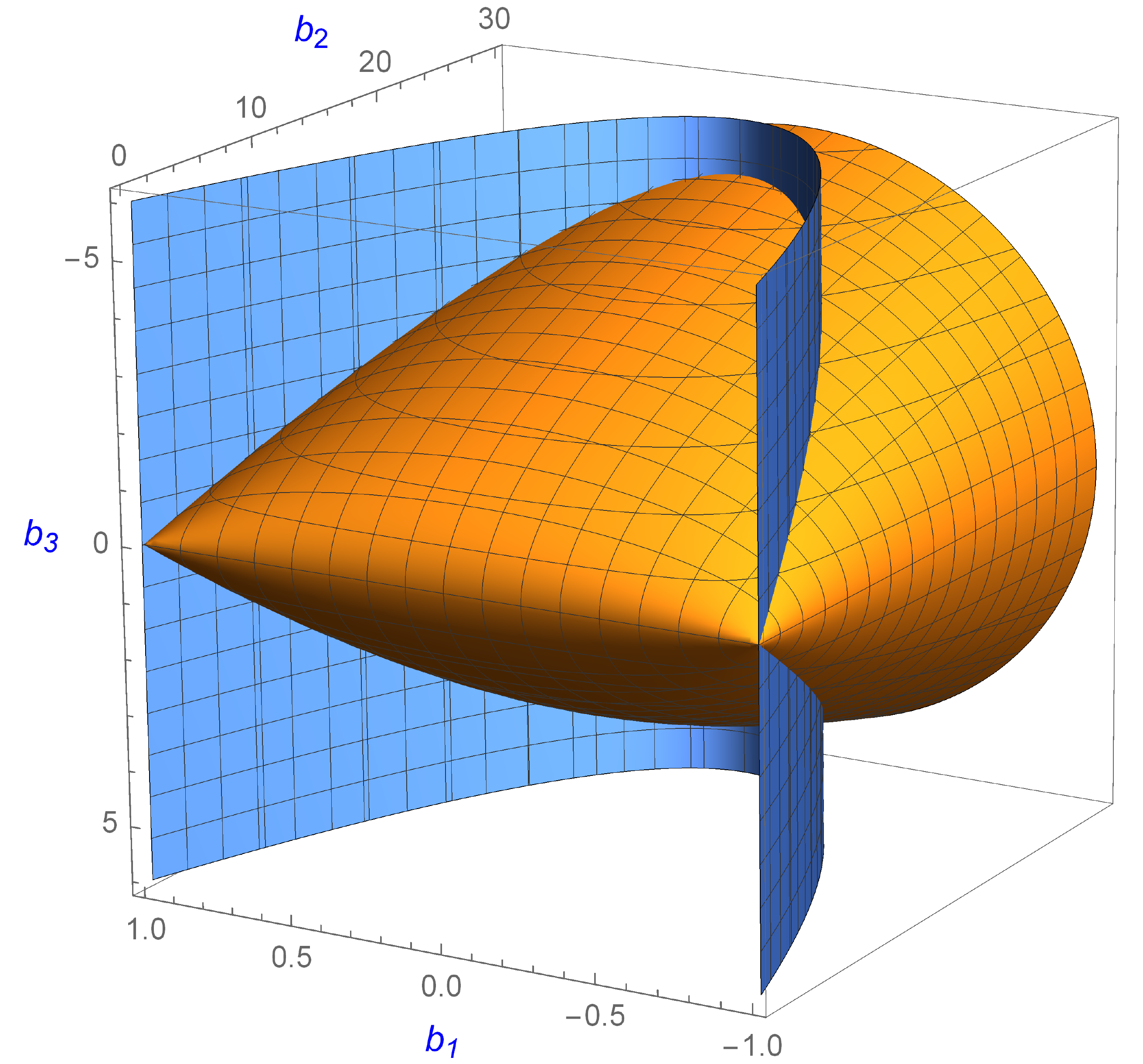}
\par\end{centering}
\caption{Separatrix connecting the singular points. 
It is given by the intersection of the singular reduced phase space $P_0$ (yellow) 
with the energy surface $\{ G = 0 \}$ (blue) for  $\gamma = 0.5$ (left) and $\gamma=5$ (right).}  
\label{fig:tori}
\end{figure}

\begin{lem} \label{lem:critval}
The set of critical values of the energy-momentum map $(L_z, G)$ consists of an 
isolated point at the origin $(0,0)$ and the parabola $g = m^2 - \gamma^2$.
The corresponding critical points are $(\pm 1, 0, 0)$ and $(0,0,0)$, respectively.
The sepatratrices connecting $(\pm 1, 0, 0)$ are the parabolic arcs
$(b_1, b_2, b_3) = \left(b_1, \gamma^2(1-b_1^2),\pm \gamma (1-b_1^2) \right)$.
\end{lem}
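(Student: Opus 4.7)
The plan is to exploit the singular reduction of Lemma~\ref{lem:reduction} and to classify critical points of $F=(L_z,G)$ on $T^*S^2$ into two disjoint families: co-rank~2 critical points coming from fixed points of the global $S^1$-action, and co-rank~1 critical points corresponding to Lagrange-multiplier critical points of the reduced Hamiltonian $G(b_1,b_2,b_3)=b_2+m^2-\gamma^2(1-b_1^2)$ on the smooth stratum of the reduced phase space $P_m$. By the preceding lemma the singular stratum of $P_m$ appears only for $m=0$ at the two points $(\pm 1,0,0)$, and these are precisely the images of the $S^1$-fixed points $\bm{P}=\pm\bm{e}_z\sqrt{2E}$, $\bm{L}=\bm{0}$.

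For the co-rank~2 contribution I would simply evaluate $G$ and $L_z$ at the two $S^1$-fixed points: both vanish, producing the isolated critical value $(0,0)$ with underlying critical points $(\pm 1,0,0)\in P_0$.

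For the co-rank~1 contribution I would set up $\nabla G=\lambda\nabla C_3$ on the smooth stratum of $P_m$. The three component equations reduce to $b_3=0$, $\lambda=(1-b_1^2)^{-1}$, and $b_1\bigl[\gamma^2(1-b_1^2)+b_2+m^2\bigr]=0$. Since the bracket in the last equation is strictly positive whenever $b_2\ge 0$ and $|b_1|<1$, one is forced to $b_1=0$, and the syzygy $C_3=0$ then gives $b_2=0$. The unique critical point of $G|_{P_m}$ is therefore $(0,0,0)$ with critical value $m^2-\gamma^2$, sweeping out the claimed parabola as $m$ varies over $\R$. The separatrices follow from intersecting $\{G=0\}$ with $P_0$: substituting $b_2=\gamma^2(1-b_1^2)$ into the syzygy $(1-b_1^2)b_2=b_3^2$ yields $b_3=\pm\gamma(1-b_1^2)$, joining the two singular points $(\pm 1,0,0)$.

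The step that I expect to require the most care is justifying the initial dichotomy: namely, that at a point of $T^*S^2$ projecting to the smooth stratum of $P_m$ the rank of $dF$ drops if and only if $dG|_{P_m}$ fails to be independent of $dC_3$. This should follow from the fact that on the smooth stratum the reduction map is a submersion with $L_z$ providing an independent direction transverse to the $S^1$-orbits, so a rank drop of $dF$ lifts a Lagrange-multiplier critical point of $G$ on $P_m$ and vice versa; and the singular stratum of $P_m$ coincides with the fixed-point set of the $S^1$-action, which contributes the co-rank~2 critical points separately.
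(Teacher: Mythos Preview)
Your proof is correct and follows essentially the same approach as the paper: both identify critical points via the tangency (Lagrange-multiplier) condition $\nabla G \parallel \nabla C_3$ on the reduced phase space, arrive at the same dichotomy $b_1=0$ versus $b_1=\pm 1$ with $m=0$, and compute the separatrices identically by substituting $b_2=\gamma^2(1-b_1^2)$ into the syzygy. Your explicit stratification into the singular locus (the $S^1$-fixed points) and the smooth locus, together with your discussion of why rank drops of $dF$ on $T^*S^2$ correspond to reduced critical points, is slightly more careful than the paper's terse tangency argument, but the substance is the same.
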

\begin{proof}
In general a tangency between the reduced phase space  $P_m$ and the 
parabolic cylinder $\{ G = g \}$ occurs when their gradients are parallel, 
which implies $b_3 = 0$ and either $b_1 = 0$ or 
$b_2 = -m^2 - \gamma^2( 1 - b_1^2)$. 
Since $b_2 \ge 0$ the latter implies $b_1 = \pm1$ and $m = 0$.
These are two isolated critical points at $(\pm 1, 0, 0)$ 
both with isolated critical value $(m,g) = (0,0)$.
The preimage of this critical value in the reduced system is given by the 
intersection of the singular reduced phase space $P_0$ with the reduced energy surface 
$\{ G(b_1, b_2, b_3) = 0 \}$. Solving $G = 0$ with $m=0$ gives the equation for $b_2$.
Inserting into $C_3 = 0$ and extracting a square root gives the equation for $b_3$.
See Fig.~\ref{fig:tori} 

In the other case of parallel gradients with $b_1 = 0$ the Casimir $C_3 = 0$ implies $b_2 = 0$ as well,
so that the critical point is $(0,0,0)$ with corresponding family of critical values 
$(m,g) = (m, m^2 - \gamma^2)$. 
%
%
All points in the $(m,g)$ plane above the parabola $g = m^2 - \gamma^2$ 
with the exception of the origin are regular values.
For each regular value the intersection of $P_m$ and $ \{ G = 0 \}$ is a single curve 
diffeomorphic to $S^1$. 
These intersections can also be seen as the level lines of $G(q,p)$ as shown in 
Fig.~\ref{fig:Glevel} (right).
\end{proof}

The final step in the analysis of the classical dynamics is the reconstruction,
which leads to a description of the invariant sets of the dynamics in the 
original coordinates $(\bm{P}, \bm{L})$. The reduction map 
of Lemma~\ref{lem:reduction} is a projection from the 4-dimensional space 
$T^*S^2 \subset \R^6$ to $\R^3$.
\begin{lem} \label{lem:reconstruct}
For given $b_1, b_2, b_3$ points in the preimage of the reduction map are given by 
\[
    \bm{P} = \sqrt{2E}\left( \sqrt{ 1 - b_1^2} \cos u, \sqrt{ 1  - b_1^2} \sin u , b_1 \right), \quad
    \bm{L} = \left(\sqrt{b_2} \cos v ,\sqrt{b_2} \sin v, m\right)
\]
where $u - v = \arg( -b_1  m + i b_3)$.
The $S^1$ action increases both $u$ and $v$ by $\phi$ and leaves the difference $u-v$ invariant.
\end{lem}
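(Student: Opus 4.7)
The plan is to reconstruct $(\bm{P},\bm{L})$ from $(b_1,b_2,b_3,m)$ by using the two Casimirs of the Lie--Poisson structure to pin down the first two components of each vector up to a pair of phase angles $u,v$, and then to use the definition of $b_3$ together with the syzygy to determine the relative phase $u-v$.

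First I would parametrise $\bm P$. Since $|\bm P|^2=2E$ and $p_z=\sqrt{2E}\,b_1$, one has $p_x^2+p_y^2=2E(1-b_1^2)$, so writing $(p_x,p_y)=\sqrt{2E}\sqrt{1-b_1^2}(\cos u,\sin u)$ is just polar coordinates in the $p_x p_y$-plane. Analogously, $l_z=m$ and $l_x^2+l_y^2=b_2$ give $(l_x,l_y)=\sqrt{b_2}(\cos v,\sin v)$. At this point the ambiguity in the preimage is parametrised by two free angles $u,v\in S^1$.

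Next I would impose the second Casimir $\bm P\cdot\bm L=0$, which reads $p_xl_x+p_yl_y=-p_zm$. Substituting the parametrisations gives
\[
\sqrt{2E}\sqrt{1-b_1^2}\sqrt{b_2}\cos(u-v)=-\sqrt{2E}\,b_1 m,
\]
so $\sqrt{(1-b_1^2)b_2}\cos(u-v)=-b_1 m$. Similarly, computing $l_xp_y-l_yp_x=\sqrt{2E}\sqrt{1-b_1^2}\sqrt{b_2}\sin(u-v)$ and equating with $\sqrt{2E}\,b_3$ (by the definition of $b_3$) gives $\sqrt{(1-b_1^2)b_2}\sin(u-v)=b_3$. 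These two equations identify $-b_1m+i b_3$ as the complex number whose argument is $u-v$. The only thing to verify is that the magnitudes are consistent, i.e.\ $(1-b_1^2)b_2=(b_1m)^2+b_3^2$, but this is exactly the syzygy $C_3=0$ from Lemma~\ref{lem:reduction}. Thus $u-v=\arg(-b_1m+ib_3)$ is well defined, and conversely any choice of $u$ with $v$ determined by this equation gives a valid preimage, so the fibre is a circle parametrised by $u$ (collapsing to a point exactly at the singular points $b_1=\pm1$, $b_2=0$ with $m=0$).

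Finally, the $S^1$ action generated by $L_z$ was computed in \eqref{eq:S1flow} as the simultaneous rotation of $(p_x,p_y)$ and $(l_x,l_y)$ by the same angle, fixing $p_z$ and $l_z$. In the parametrisation this is precisely $u\mapsto u+\phi$ and $v\mapsto v+\phi$, which manifestly preserves $u-v$ (as it must, since $b_1,b_2,b_3$ are invariants). The main potential obstacle would be a sign or orientation mismatch in the $\arg$ formula, but this is resolved by tracking the sign in the definition of $b_3=(l_xp_y-l_yp_x)/\sqrt{2E}$ carefully; there are no further subtleties.
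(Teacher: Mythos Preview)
Your proof is correct and follows essentially the same route as the paper: the paper packages $(p_x,p_y)$ and $(l_x,l_y)$ into complex numbers $p_w,l_w$ and reads off $u-v$ as the argument of $p_w\bar l_w=\sqrt{2E}(-b_1m+ib_3)$, which is exactly your computation of the real and imaginary parts via the Casimir and the definition of $b_3$. Your explicit invocation of the syzygy $C_3=0$ to check that the moduli are consistent is a nice touch that the paper leaves implicit.
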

\begin{proof}
In Lemma~\ref{lem:reduction} we already noted that the $S^1$ action is most easily
described by multiplication with $e^{i\phi}$ after introducing the complex variables $p_w = p_x + i p_y$ and $l_w = l_x + i l_y$.
By definition $b_2$ is the modulus squared of $l_w$ and $b_1$ is the normalised size of $p_z$, 
such that $|p_w|^2 = 2E - p_z^2 = 2E( 1 - b_1^2)$.
Thus there are angles $u$ and $v$ such $e^{i\phi} p_w = \sqrt{ 2E(1 - b_1^2)} e^{i u} $
and $e^{i\phi} l_w = \sqrt{ b_2} e^{i v}$. For given $b_1, b_2, b_3$ the arguments $u$ and $v$
are related. On the one hand from Lemma~\ref{lem:reduction}  we have $\Re(p_w \bar l_w) = - p_z l_z$
and $\Im(p_w \bar l_w) = \sqrt{2E} b_3$, such that $p_w \bar l_w = \sqrt{2E}( b_1 m + b_3)$.
On the  other hand $p_w \bar l_w = \sqrt{2E} \sqrt{ 1 - b_1^2} \sqrt{b_2} e^{i(u-v)}$, and hence the result.
At the singular point $(\pm 1, 0, 0)$ the angles $u$ and $v$ are undefined, but this is the fixed point of the 
$S^1$ action, so the preimage of each of these points is just a single point each, instead of a circle each.
\end{proof}
It is interesting to note that these formulas can be directly expressed in terms of the original 
separating variables. In particular both, $p_w$ and $l_w$ when expressed in terms of 
$(\xi, \eta, \phi, p_\xi, p_\eta, p_\phi)$ after cotangent lift of the definition \eqref{eq:To prolate xi eta} of spheroidal 
coordinates can be written as $p_w = e^{i\phi} p_{w0}$ and $l_w = e^{i\phi} l_{w0}$
where $p_{w0}$ and $l_{w0}$ are independent of $\phi$. This leads to formulas for 
$b_1,b_2,b_3$ in terms of the separating variables. One subtlety here is that in such 
formulae the value of $E$ is not fixed, but is determined by the values of $\xi, \eta, p_\xi, p_\eta$, while $l_z = p_\phi = m$, as always.
The difference in the reconstruction formula is that there $\xi$ and $p_\xi$ have been eliminated.

Symplectic coordinates on the reduced phase space can be introduced by 
\[
    (q, p) = \left(   b_1  ,  \frac{ b_3}{1 - b_1^2}  \right) \,.
\]
It is easy to check that these functions satisfy $\{ q, p \} = 1$, and that 
they reduce the Poisson structure $\widehat{ \nabla C_3 }$
 in $\R^3$ to the standard symplectic structure in $\R^2$.
Using the Casimir to express $b_2$ as a function of $(q,p)$ the 
Hamiltonian $G$ in \eqref{eqn:invariantG} can be turned into the form \eqref{eqn:Gga}.
Of course reintroducing symplectic coordinates also reintroduces the coordinate singularity.

However, notice that through the chain of transformations we have arrived again at the 
separated Hamiltonian function $G$ albeit evaluated in different coordinates.
Originally the separation gave a function $G(q,p)$ where either $(q,p) = (\eta, p_\eta)$ 
or $(q,p) = (\xi, p_\xi)$. The variables $(q,p)$ just introduced as a function of $b_i$ however 
set $q = p_z/\sqrt{2E}$ and $p = \sqrt{2E} (\bm{P} \times \bm{L})_z/( p_x^2 + p_y^2)$.

In order to classify the critical point corresponding to the critical values 
the dynamics needs to be analysed in full phase space. 
First we show that the preimage of the isolated critical value 
$(0,0)$ of the momentum map $(L_z, G)$ is a doubly pinched torus, 
and then we will show that it is a non-degenerate focus-focus critical value.

%
%
%
%

\begin{lem} \label{lem:ff}
The preimage of the critical value $(0,l0)$ of the prolate spheroidal harmonics system 
is a doubly pinched torus with $l_z = 0$ in the phase space $T^{*}S^{2}$ parametrised by 
$p_z$ and $\phi$ as
\begin{align*}
\begin{pmatrix}p_{x}\\
p_{y}\\
l_{x}\\
l_{y}
\end{pmatrix} & =\sqrt{2E-p_{z}^{2}}\begin{pmatrix}1 & 0\\
0 & 1\\
0 & \pm a\\
\mp a & 0
\end{pmatrix}\begin{pmatrix}\cos \phi \\
\sin \phi
\end{pmatrix}.
\end{align*}
\end{lem}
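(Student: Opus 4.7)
The plan is to combine the description of the singular fibre in the reduced space from Lemma~\ref{lem:critval} with the reconstruction map of Lemma~\ref{lem:reconstruct}. The fibre $F^{-1}(0,0)$ is $S^1$-invariant and projects to the intersection of the singular reduced phase space $P_0$ with the level set $\{G=0\}$. By Lemma~\ref{lem:critval} this projection is exactly the union of the two parabolic arcs $b_3=\pm\gamma(1-b_1^2)$, $b_1\in[-1,1]$, with $b_2=\gamma^2(1-b_1^2)$, each joining the two conical singularities $(\pm 1,0,0)$ of $P_0$.

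Next I would reconstruct the full fibre in $T^*S^2$ using Lemma~\ref{lem:reconstruct} with $m=0$. Over an interior point $b_1\in(-1,1)$ the preimage is an $S^1$-orbit parametrised by $u$, while over each singular point $(\pm 1,0,0)$ it degenerates to the single fixed point $\bm{P}=\pm\sqrt{2E}\bm{e}_z$, $\bm{L}=\bm{0}$. Setting $\phi=u$ and $p_z=\sqrt{2E}\,b_1$, the phase relation becomes $u-v=\arg(ib_3)=\pm\pi/2$ according to the sign of $b_3$, whence $(\cos v,\sin v)=\pm(\sin\phi,-\cos\phi)$. Substituting $\sqrt{b_2}=a\sqrt{2E-p_z^2}$ (using $\gamma=a\sqrt{2E}$) and the value of $b_1$ into the reconstruction formulas of Lemma~\ref{lem:reconstruct} then produces exactly the matrix parametrisation of the statement, with the two sign choices in the matrix corresponding to the two separatrix arcs.

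Finally, to identify the topology, each branch is the image of the cylinder $[-\sqrt{2E},\sqrt{2E}]\times S^1$ in which the $\phi$-circles collapse to a single point at $p_z=\pm\sqrt{2E}$, so each branch is a topological $2$-sphere. The two branches coincide precisely at the two collapsed points (because $b_3$ changes sign between the arcs, while the remaining coordinates agree only where they both vanish), and their union is therefore two $2$-spheres glued along two distinct points, i.e., a doubly pinched torus. The main bookkeeping step is tracking the sign of $u-v$: one must check that the two choices for $b_3$ yield two genuinely distinct sheets rather than a single sheet traversed twice, which follows since the $\bm{L}$-components flip sign while the $\bm{P}$-components do not.
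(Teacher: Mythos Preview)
Your proof is correct and follows exactly the same strategy as the paper: combine the parabolic arcs of Lemma~\ref{lem:critval} with the reconstruction formula of Lemma~\ref{lem:reconstruct} at $m=0$, observe that $u-v=\arg(ib_3)=\pm\pi/2$, and read off the parametrisation. You actually supply more detail than the paper does, in particular the explicit substitution $\sqrt{b_2}=a\sqrt{2E-p_z^2}$ and the final topological identification of the two sheets as spheres glued at two points; the paper's proof text merely asserts the result follows from the combination of the two lemmas and records the phase $u-v=\pm\pi/2$.
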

\begin{proof}
Combining the parabolic arcs from Lemma~\ref{lem:critval} with 
the reconstruction formula Lemma~\ref{lem:reconstruct} for the 
case $g = m = 0$ gives the result. 
We have $\Re( p_w \bar l_w) = 0$ since $m = 0$ and hence 
$u - v = \pm \pi/2$ 
where the plus sign correspond to the upper parabolic arc with $b_3 \ge 0$ 
and the minus sign to the lower arc with $b_3 \le 0$.
\end{proof}

This Lemma gives  a parametrisation of the doubly pinched torus in phase space. 
For the spheroidal harmonics system it is even possible to describe the dynamics 
on this doubly pinched torus in terms of simple formulas.
Consider the local symplectic coordinates $G(q,p)$.
When $m=0$ then $G=0$ implies either $q = \pm1$ or $p = \pm \gamma$. We choose the second condition to stay 
away from the critical point. Hamilton's equations then say that $p = \pm \gamma$ is constant, 
as can be seen in Figure~\ref{fig:Glevel}. The remaining ODE for $q$ can be solved
 to give $q(t) = \tanh( \pm 2 t \gamma - c)$, which is the connection from the north-pole 
 to the south-pole of the sphere, or vice versa, depending on the sign of $p = \pm \gamma$.
 The dynamics of $\phi$ is trivial, since $\dot \phi = -\partial G(q,p)/\partial m = 0$ for $m=0$.

%

%

%
%
%


\begin{lem}
The critical value $(0,0)$ of the momentum map $(L_z, G) : T^*S^2 \to \R^2$ 
is a non-degenerate focus-focus value. 
The critical values $(m, m^2 - \gamma^2)$ are non-degenerate values of elliptic-transversal type.
\end{lem}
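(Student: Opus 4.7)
The plan is to verify Williamson's non-degeneracy conditions at each type of critical point by explicitly linearizing the Hamiltonian vector fields. For the focus-focus claim at $(0,0)$, I work near the north-pole fixed point $\bm{P}_0=\sqrt{2E}\bm{e}_z$, $\bm{L}_0=\bm{0}$ of the $S^1$-action (the south-pole point is handled by the symmetry $S_1$). Solving the two Casimirs for $p_z$ and $l_z$ leaves $(p_x,p_y,l_x,l_y)$ as local coordinates on $T^*S^2$, and the nonzero Poisson brackets at the critical point, $\{p_x,l_y\}=\sqrt{2E}=-\{p_y,l_x\}$, show that $(Q_1,Q_2,P_1,P_2):=(p_x,p_y,l_y/\sqrt{2E},-l_x/\sqrt{2E})$ are Darboux coordinates to leading order. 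Expanding $L_z=l_z=-(p_xl_x+p_yl_y)/p_z$ and $G=|\bm{L}|^2-a^2(p_x^2+p_y^2)$, the Casimir corrections enter only at higher order, giving
\begin{align*}
L_z &= Q_1P_2 - Q_2P_1 + O(4), \\
G   &= 2E(P_1^2+P_2^2) - a^2(Q_1^2+Q_2^2) + O(4).
\end{align*}
In the complex coordinates $z_1=Q_1+iQ_2$, $z_2=P_1+iP_2$ the linearization of $\alpha X_G+\beta X_{L_z}$ becomes the $2\times 2$ complex system $\dot z_1=i\beta z_1+4E\alpha z_2$, $\dot z_2=2a^2\alpha z_1+i\beta z_2$, with complex eigenvalues $i\beta\pm 2\sqrt{2E}\,a\,\alpha$. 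The real $4\times 4$ linearization therefore has the four eigenvalues $\pm 2\sqrt{2E}\,a\,\alpha\pm i\beta$; for generic $\alpha,\beta\ne 0$ these are four distinct numbers of the form $\pm\mu\pm i\nu$ with $\mu,\nu>0$, so the quadratic parts of $L_z$ and $G$ span a non-degenerate Cartan subalgebra of $\mathfrak{sp}(4,\R)$ of focus-focus type.

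For the elliptic-transversal claim at $(m,m^2-\gamma^2)$, the critical set is the $S^1$-orbit $\bm{P}=\sqrt{2E}(\cos\phi,\sin\phi,0)$, $\bm{L}=m\bm{e}_z$, which lies away from the $S^1$ fixed points so that $dL_z\ne 0$ supplies the transversal direction. A symplectic slice transverse to the orbit is modelled by the reduced phase space $P_m$ near the point $(b_1,b_2,b_3)=(0,0,0)$; this is a smooth point of $P_m$ for every $m$, since the conical singularities of $P_0$ sit at $(\pm 1,0,0)$, not at the origin. In the Darboux coordinates $(q,p)=(b_1,b_3/(1-b_1^2))$ introduced earlier, the reduced Hamiltonian \eqref{eqn:Gga} expands as
\[
G(q,p)=(m^2-\gamma^2)+(\gamma^2+m^2)q^2+p^2+O(4),
\]
with Hessian $\operatorname{diag}(2(\gamma^2+m^2),2)$ strictly positive definite since $\gamma\ne 0$. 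The linearized slice flow therefore has purely imaginary non-zero eigenvalues $\pm 2i\sqrt{\gamma^2+m^2}$, a non-degenerate elliptic singularity, which combined with the transversal $L_z$ direction yields the elliptic-transversal type.

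The main subtlety I expect is bookkeeping around the coordinates in the focus-focus computation: one must check that the Casimir-induced corrections to $p_z$ and $l_z$ only affect the $O(3)$ or higher remainders of $L_z$ and $G$, so the $4\times 4$ linearization at the critical point is indeed captured by the stated quadratic forms. This follows because $1/p_z$ expands as $(2E)^{-1/2}(1+O(2))$ and all other substitutions merely multiply already-quadratic expressions. In the elliptic-transversal case no such issue arises, since $(q,p)$ are exact Darboux coordinates on $P_m$. Once both non-degeneracy conditions are established, the existence of classical Hamiltonian monodromy follows from \cite{Matveev96,Zung97} and, via \cite{VuNgoc99}, semi-classical quantum monodromy for the joint spectrum of $(\hat L_z,\hat G)$.
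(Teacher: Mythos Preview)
Your argument is correct. It differs from the paper's proof mainly in where the linearization is carried out. The paper works uniformly in the ambient $\R^6$ with the Lie--Poisson tensor $B$: it computes the $6\times 6$ Jacobian of $B\nabla(G+\beta L_z)$ at each critical point, discards the two eigenvalues forced to zero by the Casimirs, and reads off $\pm a p_z\pm i\beta$ at the poles and $0,0,\pm i\sqrt{m^2+\gamma^2}$ at the equator. You instead (i) eliminate the Casimir directions first and pass to leading-order Darboux coordinates on the leaf for the focus-focus case, and (ii) invoke the $S^1$-reduction to $P_m$ and check ellipticity of the reduced Hamiltonian in the $(q,p)$ chart for the rank-one case. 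Your route makes the focus-focus normal form very explicit via the complex $2\times2$ block, and for the equatorial family it is more economical since the symplectic slice is already provided by Lemma~\ref{lem:reduction}; the paper's route has the virtue of treating both cases with a single computation in the original variables. As a side remark, your eigenvalues $\pm 2a\sqrt{2E}\,\alpha\pm i\beta$ and $\pm 2i\sqrt{\gamma^2+m^2}$ differ from the paper's stated values by a factor of~$2$, traceable to a missing factor of~$2$ in \eqref{eqn:GVF}; this has no bearing on non-degeneracy.
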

\begin{proof}
At a critical point of the map $(L_z, G)$ the flows (in the original coordinates) generated by
$G$ and $L_{z}$ are parallel:
\begin{equation}
\alpha B\nabla G+\beta B\nabla L_{z}=\bm{0}, \quad   \beta\in\R\backslash\left\{ 0\right\} .
\label{eq:flow paralle}
\end{equation}
The vector fields are given by \eqref{eqn:GVF} and \eqref{eq:S1flow}, and since 
the the former is non-vanishing for $E>0$ we can set $\alpha = 1$.

Critical points of the form $\bm{P}=(0,0, p_z)$ and $\bm{L}=\left(0,0,0\right)$
with $\beta$ arbitrary have the critical values $(0,0)$.
Critical points of the form  $\bm{P}=\left(p_{x},p_{y},0\right)$ and $\bm{L}=\left(0,0,m\right)$
with $\beta=m$ have the critical value $(m, m^2 - \gamma^2)$.

The essential object for the classification of critical values and non-degeneracy 
are the eigenvalues of  the Jacobian $\partial_{\bm{P},\bm{L}}\left(B\nabla G+\beta B\nabla L_{z}\right)$
at these critical points. Two of the six eigenvalues are always zero;  corresponding to the two Casimirs
of the Poisson structure $B$.

At the north and south poles of the $\bm{P}$ sphere the eigenvalues are $\lambda=\pm ap_{z}\pm i\beta$
where $\beta\in\R\backslash\left\{ 0\right\} $ is an arbitrary
parameter and $p_{z}=\pm\sqrt{2E}$. This implies
that the poles  of the $\bm{P}$-sphere are non-degenerate focus-focus points,
with  corresponding non-degenerate focus-focus value $(0,0)$.

At the equator of the $\bm{P}$ sphere the eigenvalues are $\lambda=0,0,\pm i\sqrt{m^{2}+\gamma^2}$.
Thus, all points on the equator of the $\bm{P}$ sphere are elliptic-transversal
critical points. 
\end{proof}

Note that for the elliptic-transversal points the vector field of $G$ is 
$\dot{\bm{P}}=2m \left(-p_{y},p_{x},0\right)$ and $\dot{\bm{L}}=\bm{0}$.
Thus for $m \not = 0$ the set of critical points in the preimage of $(m, m^2 - \gamma^2)$ 
is a periodic orbit along the equator of the $\bm{P}$-sphere. 
For $m=0$ this periodic orbit degenerates into a circle of fixed points, but from 
the point of view of the momentum map $(L_z, G)$ they are still non-degenerate.

In the general theory of semi-toric systems \cite{VuNgoc09} one simplifying assumption 
is that each focus-focus singular fibre only has a single focus-focus critical point in it.
Thus a doubly pinched torus as in the spheroidal harmonics system is not ``allowed''
in the original form of the theory. Even though it has recently been removed 
\cite{Pelayo19}  it is nevertheless interesting to note that the spheroidal harmonics system has a 
discrete symmetry that can be reduced such that the doubly pinched torus becomes
a reduced singly pinched torus.
In particular the two focus-focus critical points are identified with each other. 
When reducing by the full symmetry group the quotient
is not a smooth manifold. 
Consequently, we  quotient by $S_2$ only.

\begin{lem}
After discrete symmetry reduction by $S_2$, the reduced
phase space is $T^{*}\left(\mathbb{RP}^{2}\right)$ with fundamental
region chosen to be the northern hemisphere of the $\bm{P}$ sphere.
\end{lem}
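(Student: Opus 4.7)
The plan is to transfer the $S_{2}$ action via the symplectomorphism of Proposition~\ref{todirac} to the Dirac/Neumann description of $T^{*}S^{2}$, where the action becomes the canonical cotangent lift of the antipodal map on $S^{2}$; after that the conclusion is standard quotient theory.

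First I would work out the $S_{2}$ action in the Euclidean $(\bm{P},\bm{L})$ coordinates. Since $S_{2}=\diag(-,-,-)$ acts on $\R^{3}$ by $\bm{Q}\mapsto -\bm{Q}$, its cotangent lift sends $\bm{P}\mapsto -\bm{P}$ and consequently $\bm{L}=\bm{Q}\times\bm{P}\mapsto \bm{L}$. Thus on the reduced phase space $T^{*}S^{2}$ the $S_{2}$ action reads $(\bm{P},\bm{L})\mapsto(-\bm{P},\bm{L})$. This action is free, because fixed points would require $\bm{P}=-\bm{P}$, incompatible with $|\bm{P}|^{2}=2E>0$, and it preserves the integrals $G$ and $L_{z}$ by the discrete-symmetries proposition. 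Hence the quotient is a smooth $4$-manifold.

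Next I would apply the symplectomorphism of Proposition~\ref{todirac}, $(\bm{P},\bm{L})\mapsto(\bm{x},\bm{y})=(\bm{P},\bm{P}\times\bm{L}\,|\bm{P}|^{-2})$. Under this change of coordinates the $S_{2}$ action becomes $(\bm{x},\bm{y})\mapsto(-\bm{x},-\bm{y})$, as $\bm{P}\times\bm{L}|\bm{P}|^{-2}\mapsto(-\bm{P})\times\bm{L}|\bm{P}|^{-2}=-\bm{y}$. This is exactly the standard cotangent lift to $T^{*}S^{2}$ of the antipodal map $\bm{x}\mapsto -\bm{x}$: the derivative of the antipodal map on $S^{2}$ equals $-\id$ in ambient $\R^{3}$, and the cotangent lift acts on the fibres by the corresponding inverse transpose, which is again $-\id$. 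The quotient is then routine: the antipodal involution is a free, proper $\Z_{2}$ action on $S^{2}$ with smooth quotient $\RP^{2}$, and because the base action is free, the quotient of the cotangent bundle by the cotangent lift is canonically $T^{*}(S^{2}/\Z_{2})\cong T^{*}\RP^{2}$. The closed northern hemisphere $\{p_{z}\ge 0\}$ of the $\bm{P}$-sphere serves as a fundamental domain, since every antipodal equivalence class has a unique representative in the open northern hemisphere and antipodal pairs on the equator are identified in $\RP^{2}$.

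The one genuinely substantive step is the identification performed in the third paragraph: in the Euclidean $(\bm{P},\bm{L})$ coordinates the $S_{2}$ action is \emph{not} itself the cotangent lift of antipodal (that lift would instead send $\bm{L}\mapsto -\bm{L}$). It is only after passing through the symplectomorphism of Proposition~\ref{todirac} to the Neumann variables $(\bm{x},\bm{y})$ that the action takes the canonical cotangent-lift form $(\bm{x},\bm{y})\mapsto(-\bm{x},-\bm{y})$. Everything else is direct verification or standard quotient theory; no further analysis of the Poisson structure is needed because the map of Proposition~\ref{todirac} is already a symplectomorphism.
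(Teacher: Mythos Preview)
Your proof is correct. It differs from the paper's argument, which stays entirely in the $(\bm{P},\bm{L})$ coordinates: the paper simply observes that $S_{2}$ sends $\bm{P}\mapsto-\bm{P}$ (so the $\bm{P}$-sphere quotients to $\RP^{2}$, fundamental domain the northern hemisphere) and that $S_{2}$ acts trivially on $\bm{L}$, and concludes from this that the quotient is $T^{*}\RP^{2}$. You instead push the action through the symplectomorphism of Proposition~\ref{todirac} to the Dirac/Neumann model, where it becomes the honest cotangent lift $(\bm{x},\bm{y})\mapsto(-\bm{x},-\bm{y})$ of the antipodal map, and then invoke the standard fact that the quotient of a cotangent bundle by the cotangent lift of a free action is the cotangent bundle of the quotient. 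Your route is a little longer but buys something the paper's quick argument glosses over: it makes explicit that the quotient carries the \emph{canonical} symplectic structure of $T^{*}\RP^{2}$, since only a cotangent lift is guaranteed to descend to that structure, and---as you correctly point out---the map $(\bm{P},\bm{L})\mapsto(-\bm{P},\bm{L})$ is not itself the cotangent lift in the $e^{*}(3)$ picture.
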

\begin{proof}
The reduced $\bm{P}$ space is $\mathbb{RP}^{2}$ because $S_2$
identifies antipodal points of $S^2$. A fundamental region is the northern
hemisphere with $p_{z}\ge0$. Since $S_2$ does not act
on $\bm{L}$, the reduced phase space is therefore $T^{*}\left(\mathbb{RP}^{2}\right)$.
There is only one focus-focus point in the reduced phase space since
$S_2$ maps the north-pole and the south-pole of the sphere $S^2$ into each other.
\end{proof}

An interesting observation that can be made from Figure~\ref{fig:Glevel} (left) for $m=0$ is that because 
 $p = const$ on the critical level the action of the critical level is simply equal to the area of the rectangle 
 with side lentghs $2\gamma$ and $2$ divided by $2\pi$, so that $I_\eta(0,0) = 2\gamma/\pi$. 
 By Weyl's law this tells us that in the semiclassical limit (i.e.\ for large $\gamma$) the number of negative eigenvalues $g_l^0$ is 
 to leading order $2\gamma/\pi$, which for $\gamma = 16 $  gives approximately 10, which can be observed 
 in Figure~\ref{fig:Spheroidal-Eigenvalue}.

\appendix


\bibliographystyle{amsalpha}      
\bibliography{../../bib_cv/all,../../bib_cv/hd}{}   

\end{document}